\documentclass[12pt]{article}
\usepackage{cite}
\usepackage{amsmath,amssymb,amsfonts,mathtools}
\usepackage{algorithmic}
\usepackage{graphicx}
\usepackage{textcomp}
\usepackage[table]{xcolor}
\usepackage{amsthm}
\usepackage{mathabx}
\usepackage{xurl}
\usepackage{footmisc}
\usepackage{hyperref}
\usepackage{cleveref}
\usepackage{caption}
\usepackage{tablefootnote}
\usepackage{array}
\usepackage{float}
\usepackage{enumerate}
\usepackage{longtable}
\usepackage{verbatim}
\usepackage{wasysym}
\usepackage{indentfirst}
\usepackage{multicol}
\usepackage{stmaryrd}

\usepackage{xpatch}
\makeatletter
\AtBeginDocument{\xpatchcmd{\@thm}{\thm@headpunct{.}}{\thm@headpunct{}}{}{}}
\makeatother

\usepackage{soul}
\usepackage{adjustbox}

\usepackage{aliascnt}

\newtheorem{theorem}{Theorem}[section]

\newaliascnt{lemma}{theorem}
\newtheorem{lemma}[lemma]{Lemma}
\aliascntresetthe{lemma}

\newaliascnt{proposition}{theorem}
\newtheorem{proposition}[proposition]{Proposition}
\aliascntresetthe{proposition}

\newaliascnt{corollary}{theorem}
\newtheorem{corollary}[corollary]{Corollary}
\aliascntresetthe{corollary}

\newaliascnt{conjecture}{theorem}

\aliascntresetthe{conjecture}

\crefname{theorem}{theorem}{theorems}
\Crefname{theorem}{Theorem}{Theorems}

\crefname{lemma}{lemma}{lemmas}
\Crefname{lemma}{Lemma}{Lemmas}

\crefname{proposition}{proposition}{propositions}
\Crefname{proposition}{Proposition}{Propositions}

\crefname{corollary}{corollary}{corollaries}
\Crefname{corollary}{Corollary}{Corollaries}

\crefname{conjecture}{conjecture}{conjectures}
\Crefname{conjecture}{Conjecture}{Conjectures}

\theoremstyle{definition}

\newaliascnt{definition}{theorem}

\aliascntresetthe{definition}

\newaliascnt{example}{theorem}
\newtheorem{example}[example]{Example}
\aliascntresetthe{example}

\newaliascnt{remark}{theorem}
\newtheorem{remark}[remark]{Remark}
\aliascntresetthe{remark}

\crefname{definition}{definition}{definitions}
\Crefname{definition}{Definition}{Definitions}

\crefname{example}{example}{examples}
\Crefname{example}{Example}{Examples}

\crefname{remark}{remark}{remarks}
\Crefname{remark}{Remark}{Remarks}

\usepackage{arydshln}
\usepackage{booktabs}

\usepackage[numbers]{natbib}
\usepackage{colortbl}

\newcommand\ChangeRT[1]{\noalign{\hrule height #1}}  

\DeclareCaptionType{equ}[][]
\theoremstyle{plain}

\begin{document}

\title{Cyclic codes over a commutative non-unitary ring of order $4$\footnotemark[1]}

\author{Jon-Lark Kim\footnotemark[2]   \\ Department of Mathematics and \\ Institute for Mathematical and Data Sciences  \\ Sogang University, Seoul, Korea \\
		{\tt jlkim@sogang.ac.kr } \\
			\\ Marvin Olavides  \\ Department of Mathematics and \\ Institute for Mathematical and Data Sciences  \\ Sogang University, Seoul, Korea \\
		{\tt mmolavides@gmail.com}   
	}
\date{}
	
\maketitle

\footnotetext[1]{J.-L. Kim was supported in part by the BK21 FOUR (Fostering Outstanding Universities for Research) funded by the Ministry of Education (MOE, Korea) and National Research Foundation of Korea (NRF) under Grant No. 4120240415042 and by Basic Science Research Program through the National Research Foundation of Korea (NRF) funded by the Ministry of Science and ICT under Grant No. RS-2025-24534992.}

\footnotetext[2]{Corresponding author}

\begin{abstract}
Let $I_2$ be the commutative non-unitary ring of order $4$ arising in the classification of Fine. In this paper, we investigate cyclic codes over $I_2$ through their associated residue and torsion codes over $\mathbb{F}_2$. We introduce the notions of twisted and untwisted cyclic codes and characterize cyclicity in terms of a compatibility condition involving the twist map and the cyclic shift. Connections between cyclic codes over $I_2$ and binary quasi-cyclic codes are established via Gray maps. In particular, we show that the Gray image of a cyclic code over $I_2$ is a binary quasi-cyclic code of index $2$. We also study duality properties of cyclic codes over $I_2$ and prove that the dual of a cyclic code is again cyclic. Finally, we classify permutation inequivalent cyclic codes over $I_2$ for lengths $n \le 7$ and determine various structural properties of these codes.
\end{abstract}

{\bf{Keywords}} : cyclic codes, non-unitary rings, Gray map

{\bf{Mathematics Subject Classification}} :   94B15; 16D10

\section{Introduction}

Cyclic codes are among the most important and extensively studied families of linear codes because of their rich algebraic structure and efficient encoding and decoding algorithms. Their representation as ideals of the polynomial quotient ring $R[x]/(x^n-1)$, where $x$ is an indeterminate and $R$ is a finite field or a finite ring, provides powerful algebraic tools for analyzing their structural properties, constructing codes with good parameters, and developing efficient decoding procedures. Consequently, cyclic codes have played a central role in both the theory and applications of error-correcting codes. Classical treatments of cyclic codes may be found in the monographs of MacWilliams and Sloane \cite{MacWilliamsSloane}, van Lint \cite{vanLint}, and Huffman and Pless \cite{HuffmanPless}.

The discovery of Hammons \textit{et al.} \cite{HammonsKerdPrepCodes} that several remarkable nonlinear binary codes arise as Gray images of linear codes over $\mathbb Z_4$ stimulated extensive research on coding theory over finite rings. Since then, cyclic codes over various classes of finite rings, including finite chain rings, principal ideal rings, Frobenius rings, and local rings, have been widely investigated. In particular, cyclic and negacyclic codes over finite chain rings have been studied extensively because of their rich algebraic structures and their applications to the construction of good binary and quaternary codes through Gray maps; see, for example, \cite{DinhLopezPermouth} and the references therein.

More recently, cyclic codes over finite non-unitary rings have begun to attract attention. In particular, using the rings $E$ and $H$ in the classification of Fine \cite{Fine}, Alahmadi \textit{et al.} investigated cyclic codes over these rings \cite{CyclicE2, CyclicH2}. These works demonstrate that the absence of a multiplicative identity gives rise to algebraic properties that differ substantially from those of cyclic codes over rings with identity. Nevertheless, the theory of cyclic codes over finite non-unitary rings remains far from complete, and many classes of such rings have yet to be investigated.

On the other hand, the present paper focuses on the commutative non-unitary ring $I_2$ which has ring presentation $I_2 =  \left\langle \mathtt{a} , \mathtt{b}   \mid 2 \mathtt{a}  = 2 \mathtt{b} = 0, \mathtt{a}^{2} = \mathtt{b},  \mathtt{a b}=0   \right\rangle$. This ring belongs to the class of finite commutative non-unitary rings classified by Fine \cite{Fine}. Rings of order four have already proved useful in coding theory. For instance, Kim and Ohk \cite{KimJLOhkDE_DNAcodes} constructed DNA codes over two noncommutative rings of order four, illustrating that finite rings of small order can provide rich algebraic frameworks for code constructions. More recently, Kim \textit{et al.} \cite{KimJLOlavMYGRoe_MDSACDI2I3} investigated MDS and ACD codes over commutative non-unitary rings of orders four and nine. Linear codes over $I_2$ were introduced by Alahmadi \textit{et al.} \cite{QuasiTypeIV_I2}, who studied quasi self-dual codes, Type~IV codes, and quasi Type~IV codes over $I_2$ and classified these codes for lengths $n<4$. This classification was subsequently extended to lengths $n\le6$ in \cite{BuildUp_I2} and further to lengths $n\le8$ by Kim \textit{et al.} \cite{KimJLYGRoe_I2}. Despite these developments, cyclic codes over the commutative non-unitary ring $I_2$ have not yet been systematically explored.

Although $I_2$ has the same cardinality as several familiar finite rings, its algebraic structure differs significantly from the non-unitary rings considered in \cite{CyclicH2,CyclicE2}. In particular, every $I_2$-code naturally determines a residue code and a torsion code over $\mathbb F_2$. However, these two binary codes alone do not completely describe the structure of an $I_2$-code. An additional invariant, called the twist map, is required to capture the interaction between the residue and torsion codes. Consequently, residue and torsion codes alone are insufficient to characterize cyclic codes over $I_2$, and the twist map becomes an essential ingredient in their structural theory.

Motivated by these observations, this paper develops a systematic theory of cyclic codes over $I_2$. We introduce the notions of twisted and untwisted cyclic codes through the twist map and establish a characterization theorem showing that an $I_2$-code is cyclic if and only if its residue code and torsion code are binary cyclic codes and the twist map is compatible with the cyclic shift. This characterization provides a structural description of cyclic codes over $I_2$ that does not appear in the existing literature on cyclic codes over finite non-unitary rings. We further establish generator matrix representations for twisted and untwisted cyclic codes and investigate their algebraic properties.

We also investigate Gray images of cyclic codes over $I_2$. In particular, we prove that the Gray image of a cyclic code over $I_2$ is a binary quasi-cyclic code of index two and establish its relationship with additive cyclic codes over $\mathbb F_4$. Furthermore, we derive bounds on the minimum distances of Gray images in terms of the corresponding residue and torsion codes and determine when these bounds are attained. We also prove that the dual of a cyclic code over $I_2$ is again cyclic. Finally, using \texttt{MAGMA} \cite{MAGMA}, we classify all permutation-inequivalent nonzero cyclic codes over $I_2$ of lengths $n\le7$, determine their residue and torsion codes, compute the parameters of their Gray images, and distinguish twisted cyclic codes from untwisted cyclic codes.

The remainder of this paper is organized as follows. \Cref{sec:prelim} reviews the necessary preliminaries on the ring $I_2$, $I_2$-codes, and their associated residue and torsion codes. \Cref{sec:cyclic-I2} develops the theory of cyclic codes over $I_2$, including twisted and untwisted cyclic codes, characterization theorems, generator matrices, and duality results. \Cref{sec: Gray images of I2-codes} investigates Gray images of cyclic codes over $I_2$, establishes their connections with binary quasi-cyclic codes and additive cyclic codes over $\mathbb F_4$, and studies their parameters. \Cref{sec: comp results} presents computational results obtained using \texttt{MAGMA}, including the complete classification of permutation-inequivalent nonzero cyclic codes over $I_2$ for lengths $n\le7$. Finally, \Cref{sec: conclusion} concludes the paper with a summary of the main results and several directions for future research.

\section{Preliminaries}  \label{sec:prelim}

\subsection{Binary codes}

Let $\mathbb{F}_2$ be the finite field of order $2$ and $\mathbb{F}_{2}^{n}$ be the vector space of $n$-tuples over $\mathbb{F}_{2}$. A linear code $\mathcal{C}$ over  $\mathbb{F}_{2}$ is a $k$-dimensional subspace of $\mathbb{F}_{2}^{n}$ and we call $\mathcal{C}$ an $[n, k]$ binary linear code if it has length $n$ and dimension $k$. Throughout, when we say binary codes, we mean binary linear codes. Given $\mathbf{x} = (x_1 , \dotsc , x_n)$ and $\mathbf{y} = ( y_1 , \dotsc , y_n) \in \mathbb{F}_2^n$, the standard inner product is denoted by $ \mathbf{x} \cdot \mathbf{y}  \coloneqq \sum_{i=1}^{n}  x_i y _i$. The dual of a  binary code is denoted by $\mathcal{C}^{\perp}$ and is defined as
\begin{align*}
\mathcal{C}^{\perp} \coloneqq \left\{  \mathbf{y} \in \mathbb{F}_{2}^{n} \mid \forall \mathbf{x} \in \mathcal{C},  \  \mathbf{x}  \cdot \mathbf{y}  =  0 \right\}  .
\end{align*}
A code $\mathcal{C}$ is self-orthogonal if  $\mathcal{C} \subseteq \mathcal{C}^{\perp}$ and is self-dual if $\mathcal{C} = \mathcal{C}^{\perp}$.

The (Hamming) weight of a vector $\mathbf{x} \in \mathbb{F}_2^n$ is the number of its nonzero coordinates and is denoted by $\operatorname{wt}_H (\mathbf{x})$. A code is said to be even if all of its codewords have even weights. For $\mathbf{x}, \mathbf{y} \in \mathbb{F}_{2}^{n}$, their  Hamming distance is defined by $d(\mathbf{x}, \mathbf{y}) = \operatorname{wt}_H (\mathbf{x}-\mathbf{y})$. The (minimum) distance $d(\mathcal{C})$ of a binary code $\mathcal{C}$ is
\begin{align*}
d(\mathcal{C}) = \min \{d(\mathbf{x}, \mathbf{y}) \mid \mathbf{x}, \mathbf{y} \in \mathcal{C}, \mathbf{x} \neq \mathbf{y}\}=\min \{ \operatorname{wt}_H  (\mathbf{c}) \mid \mathbf{c} \in \mathcal{C}, \mathbf{c} \neq \mathbf{0}\}  .
\end{align*}

A  linear code over $\mathbb{F}_2$ is called an $[n, k, d]$ binary code if it has length $n$, dimension $k$, and distance $d$. Two binary codes $\mathcal{C}$ and $\mathcal{C}'$ are said to be permutation-equivalent if there is an $n \times n$ permutation matrix $P$ such that $\mathcal{C}' = \{ \mathbf{c} P \, : \, \mathbf{c} \in \mathcal{C}    \}$. An $[n, k, d]$ binary code is said to be optimal if $d = \max\{ d' \mathrel{:} \text{there exists a linear $[n, k , d']$ binary code}   \}$.

A binary code $\mathcal{C}$ of length $n$ is called a cyclic code over $\mathbb{F}_2$ given that for each $( c_0 , c_1 , \dotsc , c_{n-2} , c_{n-1} )$ in $\mathcal{C}$, $( c_{n-1} , c_0 , \dotsc , c_{n-2}  )$ is also in $\mathcal{C}$.

\subsection{$\boldsymbol{I_2}$-codes}  \label{subsec: I2}

In this section, we present codes over a non-unital ring of order $4$ which we denote by $I_2$. This notation is consistent with the notation in \cite{Fine}.

Let $I_2$ be the ring with two generators $\mathtt{a}$ and $\mathtt{b}$ and ring presentation
\begin{align*}
I_{2} \coloneqq \left\langle \mathtt{a} , \mathtt{b}   \mid 2 \mathtt{a}  = 2 \mathtt{b} = 0, \mathtt{a}^{2} = \mathtt{b},  \mathtt{a b}=0   \right\rangle  .
\end{align*}
Thus,  $I_{2}$ has  characteristic $2$ and consists of $4$ elements, i.e., $I_2 = \{ 0 , \mathtt{a} , \mathtt{b} , \mathtt{c}  \}$ where $\mathtt{c} = \mathtt{a} + \mathtt{b}$. The addition and the multiplication tables for $I_2$ are given below.


\begin{table}[H]
\caption{Addition table for the ring $I_2$}  \label{tbl:add table I2}
\centering
\scalebox{0.9}{
\begin{tabular}{ c ! {\vrule width 1.5pt} c|c|c|c}

$+$ & $0$ & $\mathtt{a}$ & $\mathtt{b}$ & $\mathtt{c}$  \\
\ChangeRT{1.5pt}
$0$ & $0$ & $\mathtt{a}$ & $\mathtt{b}$ & $\mathtt{c}$  \\
\hline
$\mathtt{a}$ & $\mathtt{a}$  &  $0$ &  $\mathtt{c}$  &  $\mathtt{b}$   \\
\hline
$\mathtt{b}$ & $\mathtt{b}$  &  $\mathtt{c}$ &  $0$  &  $\mathtt{a}$  \\
\hline
$\mathtt{c}$ & $\mathtt{c}$  &  $\mathtt{b}$ &  $\mathtt{a}$  &  $0$
\end{tabular}
}

\end{table}


\begin{table}[H]
\caption{Multiplication table for the ring $I_2$}  \label{tbl:mult table I2}
\centering
\scalebox{0.9}{
\begin{tabular}{ c ! {\vrule width 1.5pt} c|c|c|c}

$\times$ & $0$ & $\mathtt{a}$ & $\mathtt{b}$ & $\mathtt{c}$  \\
\ChangeRT{1.5pt}
$0$ & $0$ & $0$ & $0$ & $0$  \\
\hline
$\mathtt{a}$ & $0$  &  $\mathtt{b}$ &  $0$  &  $\mathtt{b}$   \\
\hline
$\mathtt{b}$ & $0$  &  $0$ &  $0$  &  $0$  \\
\hline
$\mathtt{c}$ & $0$  &  $\mathtt{b}$ &  $0$  &  $\mathtt{b}$
\end{tabular}
}

\end{table}


We can infer from Table \ref{tbl:mult table I2} that this ring is commutative without unity and has a unique maximal ideal $J = \{ j  \mathtt{b}: 0 \leq j < 2 \}  = \{0 ,  \mathtt{b}   \}$ with residue field $I_{2} / J \cong \mathbb{F}_{2}$. It is worth noting that each element $r \in I_2$ can be expressed as $r = \mathtt{a} x + \mathtt{b} y$ where $x , y \in \mathbb{F}_2$.

We define the reduction map modulo $J$ as $\alpha \colon I_{2} \longrightarrow I_{2} / J \simeq \mathbb{F}_{2}$ by $\alpha (0) = \alpha (\mathtt{b}) = 0$ and $\alpha (\mathtt{a}) = \alpha (\mathtt{c}) = 1$. The map $\alpha$ is extended in the natural way to a map from $I_{2}^{n}$ to $\mathbb{F}_{2}^{n}$.

A linear $I_2$-code of length $n$ is defined as an $I_{2}$-submodule of $I_{2}^{n}$. Here, we will also drop the word ``linear'' and we just speak of linear $I_2$-codes as simply $I_2$-codes. An additive code $\mathcal{C}$ of length $n$ over $\mathbb{F}_{4}$ is defined as an additive subgroup of $\mathbb{F}_{4}^{n}$. To an $I_2$-code $\mathcal{C}$, we attach an additive $\mathbb{F}_4$-code $\phi (\mathcal{C})$ via the map $\phi \colon I_2 \to \mathbb{F}_4$ where
\begin{gather*}
\phi (0) = 0 ,   \quad     \phi ( \mathtt{a} ) =  \omega ,     \quad    \phi (   \mathtt{b}  ) =    1 , \quad   \phi (  \mathtt{c}  )  =   \omega^2
\end{gather*}
and $\mathbb{F}_4 = \mathbb{F}_2 [\omega]$, extended naturally to $\mathbb{F}_4^n$.

Two binary codes of length $n$ can be associated canonically with an $I_2$-code $\mathcal{C}$. The residue code $\operatorname{res}(\mathcal{C})$ is defined as
\begin{align*}
\operatorname{res} (\mathcal{C})  \coloneqq  \left\lbrace   \alpha ( \mathbf{y} )  \mid  \mathbf{y}  \in \mathcal{C}  \right \rbrace  =   \alpha (\mathcal{C} )
\end{align*}
and the torsion code  $\operatorname{tor}(\mathcal{C})$  is defined as
\begin{align*}
\operatorname{tor}(\mathcal{C}) \coloneqq \left\{\mathbf{x} \in \mathbb{F}_{2}^{n}  \mid   \mathtt{b}  \mathbf{x} \in \mathcal{C}  \right\}  .
\end{align*}
Note that $\operatorname{res}(\mathcal{C}) \subseteq \operatorname{tor}(\mathcal{C})$. An $I_2$-code $\mathcal{C}$ is said to be of type $\left \{  k_{1}, k_{2}  \right  \}$ if $\operatorname{res} (\mathcal{C})$ has dimension $k_1$ and $\operatorname{tor} (\mathcal{C})$ has dimension $k_1 + k_2$. We say that an $I_2$-code $\mathcal{C}$ is free if and only if $k_2 = 0$. Let $\alpha_\mathcal{C}$ be the restriction of $\alpha$ to $\mathcal{C}$. Given an $I_2$-code $\mathcal{C}$ of type $\{ k_1 , k_2\}$, the first isomorphism theorem applied to $\alpha_\mathcal{C}$ gives
\begin{align*}
|\mathcal{C}|=|\operatorname{res}(\mathcal{C})||\operatorname{tor}(\mathcal{C})|=2^{2 k_{1}+k_{2}} .
\end{align*}
Given $\mathbf{x} = (x_1 , \dotsc , x_n)$ and $\mathbf{y} = ( y_1 , \dotsc , y_n) \in I_2^n$, the inner product on $I_2^n$ is defined to be $ \mathbf{x} \cdot \mathbf{y}  \coloneqq \sum_{i=1}^{n} x_i y_i$. The dual code $\mathcal{C}^{\perp}$ of the $I_2$-code $\mathcal{C}$ is the module defined as
\begin{align*}
\mathcal{C}^{\perp}=\left\{\mathbf{y} \in I_{2}^{n} \mid \forall \mathbf{x} \in \mathcal{C} ,  \  \mathbf{x}  \cdot \mathbf{y} = 0   \right\}  .
\end{align*}

An $I_{2}$-code $\mathcal{C}$ is self-orthogonal (SO) if for all $\mathbf{x}, \mathbf{y} \in \mathcal{C}$, $\mathbf{x}  \cdot \mathbf{y} = 0$. If $\mathcal{C}=\mathcal{C}^{\perp}$, we say that $\mathcal{C}$ is self-dual (SD). If $\mathcal{C}$ is self-orthogonal and has size $2^n$, we say that $\mathcal{C}$ is quasi self-dual (QSD). A quasi self-dual code $\mathcal{C}$ over $I_2$ where each  codeword has even weight is called a type IV code. A quasi self-dual $I_2$-code with an even torsion code is called a quasi type IV (QTIV) code. Two $I_{2}$-codes $\mathcal{C}$ and $\mathcal{C}'$ are said to be permutation-equivalent if there is an $n \times n$ permutation matrix $P$ such that $\mathcal{C}' = \{ \mathbf{c} P \mathrel{:} \mathbf{c} \in \mathcal{C}    \}$.

\section{Cyclic Codes over $\boldsymbol{I_2}$}    \label{sec:cyclic-I2}

We adopt the definition of cyclic codes over finite fields and apply it to the ring $I_2$.

A cyclic code $\mathcal{C}$ of length $n$ over $I_2$ is an $I_2$-code with the property that if $\mathbf{c} = ( c_0 , c_1 , \dotsc , c_{n-2} , c_{n-1} )$ is in $\mathcal{C}$, then $\tilde{\mathbf{c}} = ( c_{n-1} , c_0 , \dotsc , c_{n-2}  )$ is also in $\mathcal{C}$.

We will be flexible in using the cyclic shift $\sigma$. That is, if  $\mathbf{x} = ( x_0 , x_1 , \dotsc , x_{n-2} , x_{n-1} )$ is an $n$-tuple vector, then $\sigma ( \mathbf{x} ) = ( x_{n-1} , x_0 , \dotsc , x_{n-2}  )$. With this, an $I_2$-code $\mathcal{C}$ is cyclic if $\sigma (\mathcal{C}) = \mathcal{C}$.

Let $\mathcal{C} \subseteq I_2^n$ be an $I_2$-code. Define the twist map
\begin{align*}
\tau \colon  \operatorname{res} (\mathcal{C})  \to  \mathbb{F}_2^n /  \operatorname{tor}  ( \mathcal{C}  ) ,  \qquad      \tau  ( \mathbf{r} ) =   \mathbf{s}_\mathbf{r}  +  \operatorname{tor}  ( \mathcal{C} )
\end{align*}
where $\mathbf{s}_\mathbf{r}  \in  \mathbb{F} _2^n$ is any vector for which $ \mathtt{a}  \mathbf{r}  +  \mathtt{b}  \mathbf{s}_\mathbf{r}  \in \mathcal{C}$. Note that the coset $ \tau ( \mathbf{r} )$ is well-defined and is independent of the choice of $ \mathbf{s}_\mathbf{r} $. Then
\begin{align*}
\mathcal{C}
=
\{  \mathtt{a} \mathbf{r}  +  \mathtt{b}  ( \mathbf{s}_\mathbf{r}  + \mathbf{t}   ) \mathrel{:}  \mathbf{r}  \in  \operatorname{res}  (\mathcal{C}) ,  \,  \mathbf{t}  \in  \operatorname{tor} (\mathcal{C})   \}  .
\end{align*}
It can be checked that the map $\tau$ is $\mathbb{F}_2$-linear and the set of $I_2$-codes is in a one-to-one correspondence with the set of triples $( \operatorname{res}  (\mathcal{C}) , \operatorname{tor}  (\mathcal{C}),  \tau )$.

\begin{lemma}  \label{lem: untwisted I2-codes equiv statements}
Let $\mathcal{C}$ be an $I_2$-code of length $n$. The following are equivalent:
\begin{enumerate}[(i)]
\item  $\tau ( \mathbf{r} ) =  \operatorname{tor}  (\mathcal{C})$ for all $ \mathbf{r}  \in \operatorname{res} (\mathcal{C})$  \label{itm: untwisted 1}
\item  $\mathbf{s}_\mathbf{r}  \in  \operatorname{tor}  (\mathcal{C})$ for all $ \mathbf{r}  \in \operatorname{res} (\mathcal{C})$  \label{itm: untwisted 2}
\item  $\mathcal{C} =  \mathtt{a} \operatorname{res} (\mathcal{C})  +  \mathtt{b}  \operatorname{tor}  (\mathcal{C})$  \label{itm: untwisted 3}
\end{enumerate}
\end{lemma}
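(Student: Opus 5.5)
We will prove the cycle of implications (i)$\Rightarrow$(ii)$\Rightarrow$(iii)$\Rightarrow$(i). Throughout we rely on the description $\mathcal{C}=\{\mathtt{a}\mathbf{r}+\mathtt{b}(\mathbf{s}_\mathbf{r}+\mathbf{t}) : \mathbf{r}\in\operatorname{res}(\mathcal{C}),\ \mathbf{t}\in\operatorname{tor}(\mathcal{C})\}$ and on the fact that $\tau$ is well defined. The latter rests on one observation: if $\mathtt{a}\mathbf{r}+\mathtt{b}\mathbf{s}\in\mathcal{C}$ and $\mathtt{a}\mathbf{r}+\mathtt{b}\mathbf{s}'\in\mathcal{C}$, then their difference $\mathtt{b}(\mathbf{s}-\mathbf{s}')$ lies in $\mathcal{C}$, so $\mathbf{s}-\mathbf{s}'\in\operatorname{tor}(\mathcal{C})$. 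We also use the elementwise decomposition: every $r\in I_2$ is uniquely $\mathtt{a}x+\mathtt{b}y$ with $x,y\in\mathbb{F}_2$, and $\alpha(\mathtt{a}x+\mathtt{b}y)=x$.

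\textbf{(i)$\Rightarrow$(ii).} Here $\tau(\mathbf{r})=\mathbf{s}_\mathbf{r}+\operatorname{tor}(\mathcal{C})$ equals the zero coset $\operatorname{tor}(\mathcal{C})$, which means exactly that $\mathbf{s}_\mathbf{r}\in\operatorname{tor}(\mathcal{C})$. Since $\tau$ is independent of the choice of $\mathbf{s}_\mathbf{r}$, this holds for every admissible choice.

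\textbf{(ii)$\Rightarrow$(iii).} For $\subseteq$, take any element $\mathtt{a}\mathbf{r}+\mathtt{b}(\mathbf{s}_\mathbf{r}+\mathbf{t})$ of $\mathcal{C}$. By (ii), $\mathbf{s}_\mathbf{r}+\mathbf{t}\in\operatorname{tor}(\mathcal{C})$, so the element lies in $\mathtt{a}\operatorname{res}(\mathcal{C})+\mathtt{b}\operatorname{tor}(\mathcal{C})$. For $\supseteq$, take $\mathtt{a}\mathbf{r}+\mathtt{b}\mathbf{t}$ with $\mathbf{r}\in\operatorname{res}(\mathcal{C})$ and $\mathbf{t}\in\operatorname{tor}(\mathcal{C})$. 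Write $\mathtt{b}\mathbf{t}=\mathtt{b}\big(\mathbf{s}_\mathbf{r}+(\mathbf{t}-\mathbf{s}_\mathbf{r})\big)$. Since $\mathbf{t}-\mathbf{s}_\mathbf{r}\in\operatorname{tor}(\mathcal{C})$ by (ii), this element has the displayed form and so lies in $\mathcal{C}$. Alternatively, $\mathtt{a}\mathbf{r}+\mathtt{b}\mathbf{s}_\mathbf{r}\in\mathcal{C}$ and $\mathtt{b}(\mathbf{t}-\mathbf{s}_\mathbf{r})\in\mathcal{C}$, and their sum is $\mathtt{a}\mathbf{r}+\mathtt{b}\mathbf{t}$.

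\textbf{(iii)$\Rightarrow$(i).} Let $\mathbf{r}\in\operatorname{res}(\mathcal{C})$. By (iii) with $\mathbf{t}=\mathbf{0}$, the vector $\mathtt{a}\mathbf{r}$ lies in $\mathcal{C}$, so $\mathbf{s}_\mathbf{r}=\mathbf{0}$ is an admissible choice. Hence $\tau(\mathbf{r})=\mathbf{0}+\operatorname{tor}(\mathcal{C})=\operatorname{tor}(\mathcal{C})$ by well-definedness.

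The only potential subtlety is interpreting (ii) when $\mathbf{s}_\mathbf{r}$ is not unique, and the well-definedness remark settles it. The coefficient-comparison issue, that $\mathtt{a}\mathbf{x}+\mathtt{b}\mathbf{y}$ determines $\mathbf{x}$ and $\mathbf{y}$ uniquely, is covered by the elementwise decomposition of $I_2$. No other step should present difficulty.
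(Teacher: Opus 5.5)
Your proof is correct and follows the same cycle (i)$\Rightarrow$(ii)$\Rightarrow$(iii)$\Rightarrow$(i) as the paper, with the same key steps. These are: the coset $\tau(\mathbf{r})$ is trivial exactly when $\mathbf{s}_\mathbf{r}\in\operatorname{tor}(\mathcal{C})$; $\mathbf{s}_\mathbf{r}$ can be absorbed into $\mathbf{t}$; and $\mathbf{0}$ is an admissible $\mathbf{s}_\mathbf{r}$ once $\mathtt{a}\mathbf{r}\in\mathcal{C}$. Your explicit check of both inclusions in (ii)$\Rightarrow$(iii), and your remark that (ii) does not depend on the choice of $\mathbf{s}_\mathbf{r}$, only spell out what the paper's set-equality chain leaves implicit.
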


\begin{proof}
(\ref{itm: untwisted 1}) $\Rightarrow$ (\ref{itm: untwisted 2}): Assume that $\tau (\mathbf{r}) = \operatorname{tor}(\mathcal{C})$ for all $\mathbf{r} \in \operatorname{res}(\mathcal{C})$. By the definition of $\tau$, we must have $\mathbf{s}_{\mathbf{r}} + \operatorname{tor}(\mathcal{C}) = \operatorname{tor}(\mathcal{C})$, and this holds when $\mathbf{s}_{\mathbf{r}} \in \operatorname{tor}(\mathcal{C})$. Thus, $\mathbf{s}_\mathbf{r}  \in  \operatorname{tor}  (\mathcal{C})$ for all $ \mathbf{r}  \in \operatorname{res} (\mathcal{C})$.

(\ref{itm: untwisted 2}) $\Rightarrow$ (\ref{itm: untwisted 3}):  Since $\mathbf{s}_\mathbf{r}  \in  \operatorname{tor}  (\mathcal{C})$ for all $ \mathbf{r}  \in \operatorname{res} (\mathcal{C})$, then $\mathbf{t}' \coloneqq  \mathbf{s}_{\mathbf{r}} + \mathbf{t} \in \operatorname{tor} ( \mathcal{C} )$. Thus, $
\mathcal{C}
=
\{  \mathtt{a} \mathbf{r}  +  \mathtt{b}  ( \mathbf{s}_\mathbf{r}  + \mathbf{t}   )  \mathrel{:}  \mathbf{r}  \in  \operatorname{res}  (\mathcal{C}) ,  \,  \mathbf{t}  \in  \operatorname{tor} (\mathcal{C})   \}
=
\{  \mathtt{a} \mathbf{r}  +  \mathtt{b}  \mathbf{t}'   \mathrel{:}  \mathbf{r}  \in  \operatorname{res}  (\mathcal{C}) ,  \,  \mathbf{t}'  \in  \operatorname{tor} (\mathcal{C})   \}
= \mathtt{a} \operatorname{res} (\mathcal{C})  +  \mathtt{b}  \operatorname{tor}  (\mathcal{C})
$.

(\ref{itm: untwisted 3}) $\Rightarrow$ (\ref{itm: untwisted 1}):  Assume that $\mathcal{C} = \mathtt{a}\operatorname{res}(\mathcal{C}) + \mathtt{b}\operatorname{tor}(\mathcal{C})$. Let $\mathbf{r} \in \operatorname{res}(\mathcal{C})$. Then $\mathtt{a}\mathbf{r} = \mathtt{a}\mathbf{r} + \mathtt{b}\mathbf{0} \in \mathcal{C}$. Thus, $\mathbf{0}$ is a valid choice for $\mathbf{s}_{\mathbf{r}}$ in the definition of $\tau (\mathbf{r})$, and so $\tau (\mathbf{r}) = \mathbf{0} + \operatorname{tor}(\mathcal{C}) = \operatorname{tor}(\mathcal{C})$. Since $\mathbf{r} \in \operatorname{res}(\mathcal{C})$ is arbitrary, it follows that $\tau ( \mathbf{r} ) =  \operatorname{tor}  (\mathcal{C})$ for all $ \mathbf{r}  \in \operatorname{res} (\mathcal{C})$.
\end{proof}

We say that an $I_2$-code $\mathcal{C}$ is \textit{untwisted} if $\mathcal{C}$ satisfies any of the equivalent conditions in \Cref{lem: untwisted I2-codes equiv statements}; otherwise, we say that $\mathcal{C}$ is \textit{twisted}. We can give a stronger version of statement (\ref{itm: untwisted 3}) of \Cref{lem: untwisted I2-codes equiv statements} as seen in the lemma below.

\begin{lemma}  \label{thm:C as direct sum of resC and torC}
If $\mathcal{C}$ is an untwisted $I_2$-code, then $\mathcal{C} = \mathtt{a} \operatorname{res} (\mathcal{C}) \oplus \mathtt{b} \operatorname{tor} (\mathcal{C})$.
\end{lemma}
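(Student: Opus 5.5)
By \Cref{lem: untwisted I2-codes equiv statements}~(\ref{itm: untwisted 3}), an untwisted code already satisfies $\mathcal{C} = \mathtt{a}\operatorname{res}(\mathcal{C}) + \mathtt{b}\operatorname{tor}(\mathcal{C})$. So the only thing left to prove is that this sum is direct, that is,
\begin{align*}
\mathtt{a}\operatorname{res}(\mathcal{C}) \cap \mathtt{b}\operatorname{tor}(\mathcal{C}) = \{\mathbf{0}\}.
\end{align*}
Before that, I will check that both summands really are subgroups of $\mathcal{C}$. First, $\mathtt{b}\operatorname{tor}(\mathcal{C}) \subseteq \mathcal{C}$ holds by the definition of the torsion code. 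Second, for $\mathbf{r} \in \operatorname{res}(\mathcal{C})$ we have $\mathtt{a}\mathbf{r} \in \mathcal{C}$. This follows from the proof of (\ref{itm: untwisted 3}) $\Rightarrow$ (\ref{itm: untwisted 1}), or directly from (\ref{itm: untwisted 2}): since $\mathbf{s}_\mathbf{r} \in \operatorname{tor}(\mathcal{C})$, we get $\mathtt{a}\mathbf{r} = (\mathtt{a}\mathbf{r} + \mathtt{b}\mathbf{s}_\mathbf{r}) - \mathtt{b}\mathbf{s}_\mathbf{r} \in \mathcal{C}$. Both sets are additive subgroups, because $\mathbf{x} \mapsto \mathtt{a}\mathbf{x}$ and $\mathbf{x} \mapsto \mathtt{b}\mathbf{x}$ are additive maps $\mathbb{F}_2^n \to I_2^n$, using $2\mathtt{a} = 2\mathtt{b} = 0$.

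For the trivial intersection, I will use the fact that every element of $I_2$ has a unique expression $\mathtt{a}x + \mathtt{b}y$ with $x, y \in \mathbb{F}_2$. This holds because $I_2 = \{0, \mathtt{a}, \mathtt{b}, \mathtt{c}\}$ has exactly four elements and $(x, y) \mapsto \mathtt{a}x + \mathtt{b}y$ hits all four. Suppose $\mathtt{a}\mathbf{r} = \mathtt{b}\mathbf{t}$. Applying the reduction map $\alpha$ to both sides gives $\mathbf{r} = \alpha(\mathtt{a}\mathbf{r}) = \alpha(\mathtt{b}\mathbf{t}) = \mathbf{0}$. Hence $\mathtt{a}\mathbf{r} = \mathbf{0}$, and then $\mathtt{b}\mathbf{t} = \mathbf{0}$ as well. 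This shows the intersection is trivial, so the sum is direct.

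As a consistency check, I can count: $|\mathtt{a}\operatorname{res}(\mathcal{C})| \cdot |\mathtt{b}\operatorname{tor}(\mathcal{C})| = 2^{k_1} \cdot 2^{k_1+k_2}$, which matches $|\mathcal{C}| = 2^{2k_1+k_2}$. Here I use that $\mathbf{x} \mapsto \mathtt{a}\mathbf{x}$ and $\mathbf{x} \mapsto \mathtt{b}\mathbf{x}$ are injective, again by uniqueness of coordinates. There is no real obstacle. The only point needing care is to interpret $\oplus$ as an internal direct sum of additive groups, or of $I_2$-submodules. For the latter, one also checks closure under $I_2$-multiplication: $\mathtt{a}\cdot\mathtt{a}\mathbf{r} = \mathtt{b}\mathbf{r}$, which lies in $\mathtt{b}\operatorname{tor}(\mathcal{C})$ because $\operatorname{res}(\mathcal{C}) \subseteq \operatorname{tor}(\mathcal{C})$. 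So $\mathtt{a}\operatorname{res}(\mathcal{C})$ by itself is not a submodule, and I will state the decomposition as one of additive groups, that is, of $\mathbb{F}_2$-spaces.
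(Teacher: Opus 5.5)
Your proof is correct and is essentially the paper's argument. You take the sum from condition (iii) of the preceding lemma, then apply $\alpha$ to $\mathtt{a}\mathbf{r} = \mathtt{b}\mathbf{t}$ to force $\mathbf{r} = \mathbf{0}$, which gives a trivial intersection.

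Your extra remarks are also accurate and make explicit a point the paper leaves implicit. These are the cardinality check and the observation that $\mathtt{a}\operatorname{res}(\mathcal{C})$ is generally not an $I_2$-submodule, since $\mathtt{a}\cdot\mathtt{a}\mathbf{r} = \mathtt{b}\mathbf{r}$, so $\oplus$ should be read as a direct sum of $\mathbb{F}_2$-spaces.
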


\begin{proof}
From statement (\ref{itm: untwisted 3}) of \Cref{lem: untwisted I2-codes equiv statements}, $\mathcal{C} = \mathtt{a} \operatorname{res} (\mathcal{C}) + \mathtt{b} \operatorname{tor} (\mathcal{C})$. Now, if $\mathtt{a} \mathbf{r} = \mathtt{b} \mathbf{t}$ where $ \mathbf{r} \in \operatorname{res} (\mathcal{C})$ and $ \mathbf{t} \in\operatorname{tor}( \mathcal{C} )$, then $\alpha ( \mathtt{a} \mathbf{r} )  = \alpha ( \mathtt{b}  \mathbf{t} ) = 0$ since $\alpha (  \mathtt{a} ) = 1$ and $\alpha ( \mathtt{b} ) = 0$. Hence, $\mathbf{r} = \mathbf{0}$, and so $ \mathtt{b} \mathbf{t} = \mathbf{0}$, implying that $\mathbf{t} = \mathbf{0}$. Thus, $ \mathtt{a} \operatorname{res}(\mathcal{C}) \cap \mathtt{b} \operatorname{tor}(\mathcal{C}) = \{ \mathbf{0} \}$. So, the sum is direct, i.e., $\mathcal{C} = \mathtt{a} \operatorname{res} (\mathcal{C}) \oplus \mathtt{b} \operatorname{tor} (\mathcal{C})$.
\end{proof}

\begin{theorem}  \label{thm:gen-matrix-cyclic-I2-codes}
Let $\mathcal{C}$ be a cyclic code over $I_2$ of length $n$ and type $\{ k_1 , k_2 \}$. Let $G_1$ and $G_2$ be binary generator matrices of $\operatorname{res}( \mathcal{C} )$ and $\operatorname{tor}( \mathcal{C} )$, respectively. For each row $\mathbf{r}$ of $G_1$, choose a vector $\mathbf{s}_{\mathbf{r}} \in \mathbb{F}_2^n$ such that
$
\mathtt{a}\mathbf{r} + \mathtt{b}\mathbf{s}_{\mathbf{r}} \in \mathcal{C}
$.
Let $S$ be the matrix whose rows are the vectors $\mathbf{s}_{\mathbf{r}}$. Then a generator matrix of $\mathcal{C}$ is
\begin{align*}
G = \begin{pmatrix}
\mathtt{a}G_1 + \mathtt{b} S  \\ 
\mathtt{b}G_2
\end{pmatrix}
\in I_2^{ ( 2k_1 + k_2 ) \times n}  .
\end{align*}
Moreover, if $\mathcal{C}$ is untwisted, $G$ can be reduced to
\begin{align*}
G = \begin{pmatrix} \mathtt{a}G_1 \\
 \mathtt{b}G_2 \end{pmatrix}.
\end{align*}
\end{theorem}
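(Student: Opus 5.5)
The plan is to show that the $I_2$-submodule $\langle G\rangle$ generated by the rows of $G$ equals $\mathcal{C}$. Cyclicity plays no role in this argument; it is really a statement about arbitrary $I_2$-codes. I will take "generator matrix" to mean that the rows generate $\mathcal{C}$ as an $I_2$-module. Since $I_2$ has no unity, the submodule generated by a set of vectors $\mathbf{v}_i$ should be read as the set of all $\mathbb{Z}$-combinations together with $I_2$-multiples, that is, $\{\sum_i (m_i \mathbf{v}_i + \lambda_i \mathbf{v}_i) : m_i \in \{0,1\},\ \lambda_i \in I_2\}$. I will state this convention explicitly, and I will treat the row count $2k_1+k_2$ in the displayed dimensions loosely.

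For the inclusion $\langle G\rangle\subseteq\mathcal{C}$, it suffices to check the rows. Each row $\mathtt{a}\mathbf{r}+\mathtt{b}\mathbf{s}_\mathbf{r}$ lies in $\mathcal{C}$ by the choice of $\mathbf{s}_\mathbf{r}$. Each row $\mathtt{b}\mathbf{t}$ of $\mathtt{b}G_2$ lies in $\mathcal{C}$ by the definition of $\operatorname{tor}(\mathcal{C})$. Since $\mathcal{C}$ is closed under addition and under scalar multiplication by $I_2$, the inclusion follows.

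For the reverse inclusion, I use the decomposition stated before \Cref{lem: untwisted I2-codes equiv statements}: every $\mathbf{c}\in\mathcal{C}$ has the form $\mathtt{a}\mathbf{r}+\mathtt{b}(\mathbf{s}_\mathbf{r}+\mathbf{t})$ with $\mathbf{r}\in\operatorname{res}(\mathcal{C})$ and $\mathbf{t}\in\operatorname{tor}(\mathcal{C})$.
\begin{enumerate}[(1)]
\item Write $\mathbf{r}=\sum_{i\in A}\mathbf{r}_i$ as a sum of rows of $G_1$, and consider $\mathbf{u}=\sum_{i\in A}(\mathtt{a}\mathbf{r}_i+\mathtt{b}\mathbf{s}_{\mathbf{r}_i})$, an integer combination of rows of $G$.
\item Then $\mathbf{u}=\mathtt{a}\mathbf{r}+\mathtt{b}\mathbf{s}'$ with $\mathbf{s}'=\sum_{i\in A}\mathbf{s}_{\mathbf{r}_i}$, and $\mathbf{u}\in\mathcal{C}$.
\item Hence $\mathbf{c}-\mathbf{u}=\mathtt{b}(\mathbf{s}_\mathbf{r}+\mathbf{t}-\mathbf{s}')$ lies in $\mathcal{C}$. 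By definition of torsion, $\mathbf{s}_\mathbf{r}+\mathbf{t}-\mathbf{s}'\in\operatorname{tor}(\mathcal{C})$. This is just the linearity of $\tau$ noted in the paper.
\item So this difference is a sum of rows of $G_2$, and $\mathtt{b}$ times it is a sum of rows of $\mathtt{b}G_2$.
\end{enumerate}
This gives $\mathbf{c}\in\langle G\rangle$. The main subtle point is that we must use additive (integer) combinations rather than $I_2$-scalar combinations. Multiplying $\mathtt{a}\mathbf{r}_i+\mathtt{b}\mathbf{s}_i$ by $\mathtt{a}$ gives $\mathtt{b}\mathbf{r}_i$, which loses the $\mathtt{a}$-component. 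So the non-unital module generation convention is exactly what makes the argument work. As a sanity check on the size: $|\langle G\rangle|=|\operatorname{res}||\operatorname{tor}|=2^{2k_1+k_2}=|\mathcal{C}|$.

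For the untwisted case, \Cref{lem: untwisted I2-codes equiv statements}(ii) gives $\mathbf{s}_\mathbf{r}\in\operatorname{tor}(\mathcal{C})$ for every row $\mathbf{r}$. So $\mathtt{b}\mathbf{s}_\mathbf{r}$ is a sum of rows of $\mathtt{b}G_2$, and elementary row operations (subtracting these sums) replace $\mathtt{a}G_1+\mathtt{b}S$ by $\mathtt{a}G_1$ without changing the generated module. Equivalently, we may choose $\mathbf{s}_\mathbf{r}=\mathbf{0}$. This matches \Cref{thm:C as direct sum of resC and torC}, which gives $\mathcal{C}=\mathtt{a}\operatorname{res}(\mathcal{C})\oplus\mathtt{b}\operatorname{tor}(\mathcal{C})$.
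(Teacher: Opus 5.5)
Your proof is correct and follows the paper's route. You decompose $\mathbf{c}=\mathtt{a}\mathbf{r}+\mathtt{b}(\mathbf{s}_{\mathbf{r}}+\mathbf{t})$, cover the residue part with rows of $\mathtt{a}G_1+\mathtt{b}S$ and the torsion part with rows of $\mathtt{b}G_2$, and take $\mathbf{s}_{\mathbf{r}}=\mathbf{0}$ (equivalently, row-reduce against $\mathtt{b}G_2$) in the untwisted case.

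Your version is tighter than the paper's in two places. You absorb the discrepancy $\mathbf{s}'-\mathbf{s}_{\mathbf{r}}\in\operatorname{tor}(\mathcal{C})$ into $\mathtt{b}G_2$, where the paper silently assumes sums of rows of $\mathtt{a}G_1+\mathtt{b}S$ hit $\mathtt{a}\mathbf{r}+\mathtt{b}\mathbf{s}_{\mathbf{r}}$ exactly. You also make explicit that the span must include additive combinations, since $I_2$ has no unity. One small correction: you need not treat the row count loosely. $G_2$ has $k_1+k_2$ rows, so $G$ has exactly $2k_1+k_2$ rows.
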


\begin{proof}
From the structural decomposition of $I_2$-codes, every codeword $\mathbf{c}\in\mathcal{C}$ can be written uniquely as
$
\mathbf{c} = \mathtt{a} \mathbf{r} + \mathtt{b} ( \mathbf{s}_{\mathbf{r}} + \mathbf{t} )
$
where $\mathbf{r} \in \operatorname{res}(\mathcal{C})$, $\mathbf{t} \in \operatorname{tor} (\mathcal{C})$, and $\mathbf{s}_{\mathbf{r}}$ is a representative of the coset $\tau (\mathbf{r})$.
Since $G_1$ and $G_2$ generate $\operatorname{res}(\mathcal{C})$ and $\operatorname{tor}(\mathcal{C})$ as binary codes, the rows of $\mathtt{a} G_1 + \mathtt{b}S$ generate all vectors of the form $\mathtt{a} \mathbf{r} + \mathtt{b} \mathbf{s}_{\mathbf{r}}$ and the rows of $\mathtt{b}G_2$ generate all vectors of the form $\mathtt{b} \mathbf{t}$.
Thus, the $I_2$-span of the rows of $G$ contains every element of $\mathcal{C}$ and therefore equals $\mathcal{C}$. Now, if $\mathcal{C}$ is untwisted, then we can take $\mathbf{s}_{\mathbf{r}} = \mathbf{0}$ for all $\mathbf{r} \in \operatorname{res} ( \mathcal{C} )$. Thus, $S = \mathbf{0}$, and we get $G = \begin{pmatrix} \mathtt{a}G_1 \\
 \mathtt{b}G_2 \end{pmatrix}$.
\end{proof}

\begin{remark}
The matrix $S$ captures the deviation from the direct sum decomposition $\mathcal{C}=\mathtt{a}\operatorname{res}(\mathcal{C}) \oplus \mathtt{b}\operatorname{tor}(\mathcal{C})$ and is therefore referred to as the \emph{twist matrix} of $\mathcal{C}$.
When $S\neq \mathbf{0}$, the residue and torsion components no longer separate cleanly, even though the cyclic shift still preserves the module structure. The case $S = \mathbf{0}$ corresponds precisely to the untwisted codes.
\end{remark}

\begin{example}  \label{ex: cyclic code of length 4}
Let $\mathcal{C}$ be the cyclic code of length $n = 4$ and type $\{ 1  , 1 \}$ given by $\mathcal{C} = \{ (0,0,0,0), ( \mathtt{b} , 0 , \mathtt{b} , 0  ) , ( 0, \mathtt{b} , 0 , \mathtt{b} ), ( \mathtt{b} , \mathtt{b} , \mathtt{b} , \mathtt{b})  , ( \mathtt{c} , \mathtt{c} , \mathtt{a} , \mathtt{a} ) , ( \mathtt{a} , \mathtt{c} , \mathtt{c} , \mathtt{a} ) , ( \mathtt{c} , \mathtt{a} , \mathtt{a} , \mathtt{c} ) , ( \mathtt{a} , \mathtt{a} , \mathtt{c} , \mathtt{c} )  \}$ over $I_2$. Then $\operatorname{res} ( \mathcal{C} ) = \alpha ( \mathcal{C} ) = \{ (0,0,0,0), (1,1,1,1) \}$ and $\operatorname{tor} ( \mathcal{C} ) = \{ (0,0,0,0) , (1,0,1,0),$ $(0,1,0,1) , (1,1,1,1) \}$. Thus,
\begin{align*}
G_1 = \left(
\begin{array}{cccc}
1 & 1 & 1 & 1
\end{array}
\right)
\end{align*}
and
\begin{align*}
G_2 = \left(
\begin{array}{cccc}
1 & 0 & 1 & 0  \\
0 & 1 & 0 & 1
\end{array}
\right)
\end{align*}
are generator matrices for $\operatorname{res} ( \mathcal{C} )$ and $\operatorname{tor} ( \mathcal{C} )$, respectively. Observe that $G_1$ is just a row vector, so if we take $\mathbf{s_r} = (1,1,0,0) \in \mathbb{F}_2^4$, we see that $\mathtt{a} \mathbf{r} + \mathtt{b} \mathbf{s_r} = ( \mathtt{c} , \mathtt{c} , \mathtt{a} , \mathtt{a} ) \in \mathcal{C}$ where $\mathbf{r} = (1,1,1,1)$. Thus, a twist matrix $S$ for the code $\mathcal{C}$ is given by $S = \left(
\begin{array}{cccc}
1 & 1 & 0 & 0
\end{array}
\right)$, i.e., $\mathcal{C}$ is twisted. By \Cref{thm:gen-matrix-cyclic-I2-codes}, a generator matrix of $\mathcal{C}$ is given by
\begin{align*}
G = \left(
\begin{array}{cccc}
\mathtt{c} & \mathtt{c} & \mathtt{a} & \mathtt{a}  \\
\mathtt{b} & 0 & \mathtt{b} & 0  \\
0 & \mathtt{b} & 0 & \mathtt{b}
\end{array}
\right) .
\end{align*}
\end{example}

We now give a characterization for  cyclic codes over $I_2$.

\begin{theorem}  \label{thm:cyclic code characterization}
An $I_2$-code $\mathcal{C}$ is a cyclic code if and only if both $\operatorname{res} (\mathcal{C})$ and $\operatorname{tor} (\mathcal{C})$ are binary cyclic  codes and $\tau (\sigma (\mathbf{r})) = \sigma ( \tau (\mathbf{r})  )$ for all $\mathbf{r} \in \operatorname{res} (\mathcal{C})$.
\end{theorem}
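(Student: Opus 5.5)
We prove the two directions separately. Throughout we use that $\sigma$ is additive and $\mathbb{F}_2$-linear, and that it commutes with scalar multiplication by $\mathtt{a}$ and $\mathtt{b}$ and with the reduction map. Concretely, for $\mathbf{x},\mathbf{y}\in\mathbb{F}_2^n$ we have
\begin{align*}
\sigma(\mathtt{a}\mathbf{x}+\mathtt{b}\mathbf{y})=\mathtt{a}\sigma(\mathbf{x})+\mathtt{b}\sigma(\mathbf{y}), \qquad \alpha(\sigma(\mathbf{c}))=\sigma(\alpha(\mathbf{c})).
\end{align*}
Once $\operatorname{tor}(\mathcal{C})$ is known to be cyclic, $\sigma$ induces a well-defined map on the quotient $\mathbb{F}_2^n/\operatorname{tor}(\mathcal{C})$ by $\sigma(\mathbf{s}+\operatorname{tor}(\mathcal{C}))=\sigma(\mathbf{s})+\operatorname{tor}(\mathcal{C})$. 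This is the meaning we give to $\sigma(\tau(\mathbf{r}))$ in the statement. We also need $\operatorname{res}(\mathcal{C})$ cyclic so that $\tau(\sigma(\mathbf{r}))$ is defined. The compatibility condition therefore only makes sense after the cyclicity of both binary codes is established, and we order the steps accordingly.

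\emph{($\Rightarrow$)} Assume $\sigma(\mathcal{C})=\mathcal{C}$.
\begin{enumerate}[(1)]
\item If $\mathbf{r}=\alpha(\mathbf{c})\in\operatorname{res}(\mathcal{C})$, then $\sigma(\mathbf{r})=\alpha(\sigma(\mathbf{c}))\in\operatorname{res}(\mathcal{C})$. So $\operatorname{res}(\mathcal{C})$ is cyclic.
\item If $\mathtt{b}\mathbf{x}\in\mathcal{C}$, then $\mathtt{b}\sigma(\mathbf{x})=\sigma(\mathtt{b}\mathbf{x})\in\mathcal{C}$. So $\operatorname{tor}(\mathcal{C})$ is cyclic.
\item Let $\mathbf{r}\in\operatorname{res}(\mathcal{C})$. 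Applying $\sigma$ to $\mathtt{a}\mathbf{r}+\mathtt{b}\mathbf{s}_\mathbf{r}\in\mathcal{C}$ gives $\mathtt{a}\sigma(\mathbf{r})+\mathtt{b}\sigma(\mathbf{s}_\mathbf{r})\in\mathcal{C}$. Hence $\sigma(\mathbf{s}_\mathbf{r})$ is an admissible choice of $\mathbf{s}_{\sigma(\mathbf{r})}$.
\item Since $\tau$ is well defined (independent of the representative), step (3) gives $\tau(\sigma(\mathbf{r}))=\sigma(\mathbf{s}_\mathbf{r})+\operatorname{tor}(\mathcal{C})=\sigma(\tau(\mathbf{r}))$.
\end{enumerate}

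\emph{($\Leftarrow$)} Assume $\operatorname{res}(\mathcal{C})$ and $\operatorname{tor}(\mathcal{C})$ are cyclic and $\tau\circ\sigma=\sigma\circ\tau$ on $\operatorname{res}(\mathcal{C})$.
\begin{enumerate}[(1)]
\item By the structural description preceding \Cref{lem: untwisted I2-codes equiv statements}, every codeword has the form $\mathbf{c}=\mathtt{a}\mathbf{r}+\mathtt{b}(\mathbf{s}_\mathbf{r}+\mathbf{t})$ with $\mathbf{r}\in\operatorname{res}(\mathcal{C})$ and $\mathbf{t}\in\operatorname{tor}(\mathcal{C})$. Then $\sigma(\mathbf{c})=\mathtt{a}\sigma(\mathbf{r})+\mathtt{b}(\sigma(\mathbf{s}_\mathbf{r})+\sigma(\mathbf{t}))$.
\item We have $\sigma(\mathbf{r})\in\operatorname{res}(\mathcal{C})$ and $\sigma(\mathbf{t})\in\operatorname{tor}(\mathcal{C})$.
\item The compatibility hypothesis says $\sigma(\mathbf{s}_\mathbf{r})\in\tau(\sigma(\mathbf{r}))$, so $\sigma(\mathbf{s}_\mathbf{r})=\mathbf{s}_{\sigma(\mathbf{r})}+\mathbf{t}'$ for some $\mathbf{t}'\in\operatorname{tor}(\mathcal{C})$.
\item Substituting, $\sigma(\mathbf{c})=\mathtt{a}\sigma(\mathbf{r})+\mathtt{b}(\mathbf{s}_{\sigma(\mathbf{r})}+\mathbf{t}'+\sigma(\mathbf{t}))$, which lies in $\mathcal{C}$ by the same description. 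Hence $\sigma(\mathcal{C})\subseteq\mathcal{C}$.
\item Equality $\sigma(\mathcal{C})=\mathcal{C}$ follows because $\sigma$ is a bijection and $\mathcal{C}$ is finite.
\end{enumerate}

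The only delicate point is the well-definedness issue: the induced map of $\sigma$ on the quotient needs $\sigma(\operatorname{tor}(\mathcal{C}))\subseteq\operatorname{tor}(\mathcal{C})$, and $\tau(\sigma(\mathbf{r}))$ needs $\sigma(\mathbf{r})\in\operatorname{res}(\mathcal{C})$. Both are secured by establishing the cyclicity of the two binary codes before invoking the compatibility condition. We also use that $\tau(\mathbf{r})$ does not depend on the choice of $\mathbf{s}_\mathbf{r}$, which was noted when $\tau$ was defined. Everything else is routine bookkeeping.
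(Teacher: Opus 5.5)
Your proof is correct and follows essentially the same route as the paper. You get cyclicity of $\operatorname{res}(\mathcal{C})$ and $\operatorname{tor}(\mathcal{C})$ from $\sigma$ commuting with $\alpha$ and with multiplication by $\mathtt{b}$, and your backward direction rewrites $\sigma(\mathbf{s}_\mathbf{r})=\mathbf{s}_{\sigma(\mathbf{r})}+\mathbf{t}'$ exactly as the paper does. The only difference is in the forward compatibility step: you observe directly that $\sigma(\mathbf{s}_\mathbf{r})$ is an admissible representative of $\tau(\sigma(\mathbf{r}))$, whereas the paper compares the cosets $\sigma(K_{\mathbf{r}})$ and $K_{\sigma(\mathbf{r})}$ without stating why they coincide. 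Your version is slightly tighter, and it also makes explicit that $\sigma$ is well defined on $\mathbb{F}_2^n/\operatorname{tor}(\mathcal{C})$, which the paper uses implicitly.
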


\begin{proof}
For the forward implication, assume that $\mathcal{C}$ is a cyclic code over $I_2$.
Let $\mathbf{r} \in \operatorname{res} (\mathcal{C})$. Then, by definition of $\operatorname{res}(\mathcal{C})$, there exists a codeword $\mathbf{c} \in \mathcal{C}$ such that $\alpha(\mathbf{c}) = \mathbf{r}$. Thus, $\mathbf{c} = \mathtt{a} \mathbf{r} + \mathtt{b} ( \mathbf{s}_{\mathbf{r}}  + \mathbf{t} )$ where $\mathbf{t} \in \operatorname{tor} ( \mathcal{C} )$ and $\mathbf{s}_{\mathbf{r}}$ is a chosen coset representative. Since $\mathcal{C}$ is cyclic,
\begin{align*}
\sigma(\mathbf{c})
= \sigma ( \mathtt{a} \mathbf{r} + \mathtt{b} ( \mathbf{s}_{\mathbf{r}}  + \mathbf{t}   )  )
= \mathtt{a} \sigma(\mathbf{r}) + \mathtt{b} \sigma( (  \mathbf{s}_{\mathbf{r}}  + \mathbf{t}   )   ) \in \mathcal{C},
\end{align*}
and therefore,
$
\alpha(\sigma(\mathbf{c})) = \sigma(\mathbf{r}) \in \operatorname{res}(\mathcal{C})
$.
Thus, $\operatorname{res}(\mathcal{C})$ is a binary cyclic code.

Now, let $\mathbf{t} \in \operatorname{tor}(\mathcal{C})$. Thus, $\mathtt{b} \mathbf{t} \in \mathcal{C}$. Since $\mathcal{C}$ is cyclic,
$
\sigma( \mathtt{b} \mathbf{t}) = \mathtt{b} \sigma ( \mathbf{t} ) \in \mathcal{C}
$,
and hence $\sigma(\mathbf{t}) \in \operatorname{tor}(\mathcal{C})$. Therefore, $\operatorname{tor}(\mathcal{C})$ is also a binary cyclic code.

Now, fix  $\mathbf{r} \in \operatorname{res} (\mathcal{C})$ and consider
\begin{align*}
K_{\mathbf{r}}
:=
( \mathtt{a} \mathbf{r} + \mathtt{b} \mathbf{s}_{\mathbf{r}}   )
+ \mathtt{b} \operatorname{tor} (\mathcal{C})
=
\{
\mathtt{a} \mathbf{r} + \mathtt{b} (\mathbf{s}_{\mathbf{r}} + \mathbf{t})
\mathrel{:}  \mathbf{t} \in \operatorname{tor}(\mathcal{C})
\}  .
\end{align*}
Applying $\sigma$ gives
\begin{align*}
\sigma(K_{\mathbf{r}})
&= \{
\sigma( \mathtt{a} \mathbf{r} + \mathtt{b} (\mathbf{s}_{\mathbf{r}} + \mathbf{t} ) )
\mathrel{:}  \mathbf{t} \in \operatorname{tor}(\mathcal{C})
\}  \\
&=
\{
\mathtt{a}\sigma(\mathbf{r}) + \mathtt{b} \sigma(\mathbf{s}_{\mathbf{r}} + \mathbf{t})
\mathrel{:}  \mathbf{t} \in \operatorname{tor}(\mathcal{C})
\} \\
&=
\{
\mathtt{a}\sigma(\mathbf{r}) + \mathtt{b} (\sigma(\mathbf{s}_{\mathbf{r}}) + \sigma(\mathbf{t}) )
\mathrel{:}  \mathbf{t} \in \operatorname{tor}(\mathcal{C})
\} .
\end{align*}
Since $\operatorname{tor}(\mathcal{C})$ is cyclic, $\sigma(\mathbf{t})$ runs through
$\operatorname{tor}(\mathcal{C})$, and so
\begin{align*}
\sigma(K_{\mathbf{r}})
=
\{
\mathtt{a}\sigma(\mathbf{r}) + \mathtt{b}\big(\sigma(\mathbf{s}_{\mathbf{r}}) + \mathbf{t}'\big)
\mathrel{:}   \mathbf{t}' \in \operatorname{tor}(\mathcal{C})
\} .
\end{align*}
On the other hand,
\begin{align*}
K_{\sigma(\mathbf{r})}
=
\{
\mathtt{a} \sigma(\mathbf{r}) + \mathtt{b} (\mathbf{s}_{\sigma(\mathbf{r})} + \mathbf{t}'' )
\mathrel{:}   \mathbf{t}'' \in \operatorname{tor}(\mathcal{C})
\} .
\end{align*}
Therefore, $\sigma(K_{\mathbf{r}}) = K_{\sigma(\mathbf{r})}$ if and only if
$
\sigma(\mathbf{s}_{\mathbf{r}})
\equiv
\mathbf{s}_{\sigma(\mathbf{r})}
\pmod{\operatorname{tor}(\mathcal{C})}
$
which is equivalent to
\begin{align*}
\tau (\sigma(\mathbf{r}))
=
\mathbf{s}_{\sigma(\mathbf{r})} + \operatorname{tor}(\mathcal{C})
=
\sigma(\mathbf{s}_{\mathbf{r}}) + \operatorname{tor}(\mathcal{C})
=
\sigma( \tau (\mathbf{r})).
\end{align*}

For the backward implication, assume that $\operatorname{res}(\mathcal{C})$ and $\operatorname{tor}(\mathcal{C})$ are cyclic and that
$\tau (\sigma(\mathbf{r})) = \sigma( \tau (\mathbf{r}))
$ for all $\mathbf{r} \in \operatorname{res}(\mathcal{C})$.
Let $\mathbf{c} \in \mathcal{C}$. Then
$
\mathbf{c}
=
\mathtt{a}\mathbf{r} + \mathtt{b}(\mathbf{s}_{\mathbf{r}} + \mathbf{t})
$
where $\mathbf{r} \in \operatorname{res}(\mathcal{C})$ and $\mathbf{t} \in \operatorname{tor}(\mathcal{C})$.
Hence,
\begin{align*}
\sigma(\mathbf{c})
=
\mathtt{a}\sigma(\mathbf{r}) + \mathtt{b}\sigma(\mathbf{s}_{\mathbf{r}} + \mathbf{t})
=
\mathtt{a}\sigma(\mathbf{r}) + \mathtt{b}\sigma(\mathbf{s}_{\mathbf{r}}) + \mathtt{b}\sigma(\mathbf{t}).
\end{align*}
Since $\sigma(\mathbf{t}) \in \operatorname{tor}(\mathcal{C})$ and
$
\sigma(\mathbf{s}_{\mathbf{r}})
=
\mathbf{s}_{\sigma(\mathbf{r})} + \mathbf{t}_0
$
for some $\mathbf{t}_0 \in \operatorname{tor}(\mathcal{C})$,
we obtain
\begin{align*}
\sigma(\mathbf{c})
=
\mathtt{a}\sigma(\mathbf{r})
+ \mathtt{b}\big( \mathbf{s}_{\sigma(\mathbf{r})} + \mathbf{t}_0 + \sigma(\mathbf{t}) \big).
\end{align*}
Setting
$
\mathbf{t}'
\coloneqq
\mathbf{t}_0 + \sigma(\mathbf{t})
\in \operatorname{tor}(\mathcal{C})
$,
we conclude that
\begin{align*}
\sigma(\mathbf{c})
=
\mathtt{a}\sigma(\mathbf{r}) + \mathtt{b}\big( \mathbf{s}_{\sigma(\mathbf{r})} + \mathbf{t}' \big)
\in \mathcal{C}.
\end{align*}
Thus $\sigma(\mathcal{C}) \subseteq \mathcal{C}$. Since $\sigma$ is bijective,
$\mathcal{C}$ is cyclic.
\end{proof}

\begin{example}
Consider the $I_2$-code $\mathcal{C}$ with generator matrix
\begin{align*}
G=
\begin{pmatrix*}[c]
\mathtt{c} & 0 & \mathtt{a} & 0 \\
\mathtt{b} & \mathtt{a} & 0 & \mathtt{a} \\
\mathtt{b} & 0 & 0 & \mathtt{b} \\
0 & \mathtt{b} & 0 & \mathtt{b} \\
0 & 0 & \mathtt{b} & \mathtt{b}
\end{pmatrix*}.
\end{align*}
We have
\begin{align*}
\operatorname{res}(\mathcal{C})
&=
\left\langle
1010,\,
0101
\right\rangle
=
\{
0000,\,
1010,\,
0101,\,
1111
\}
\end{align*}
and
\begin{align*}
\operatorname{tor}(\mathcal{C})
&=
\left\langle
1001,\,
0101,\,
0011
\right\rangle
=
\{
0000,\,
1001,\,
0101,\,
0011,\,
1100,\,
1010,\,
0110,\,
1111
\}.
\end{align*}
Thus, both $\operatorname{res}(\mathcal{C})$ and $\operatorname{tor}(\mathcal{C})$ are binary cyclic codes. Now, let
$
\mathbf{r}_1=1010
$
and
$
\mathbf{r}_2=0101
$.
Choose
$
\mathbf{s}_{\mathbf{r}_1}=1000
$
and
$
\mathbf{s}_{\mathbf{r}_2}=0100
$.
Then
$
\tau(\mathbf{r}_1)
=
1000+\operatorname{tor}(\mathcal{C})$,
$\tau(\mathbf{r}_2)
=
0100+\operatorname{tor}(\mathcal{C})
$.
Since
$
1000+0100=1100\in \operatorname{tor}(\mathcal{C})
$,
these two cosets are equal. Moreover,
$
\tau(1111)
=
\tau(\mathbf{r}_1+\mathbf{r}_2)
=
1100+\operatorname{tor}(\mathcal{C})
=
\operatorname{tor}(\mathcal{C})
$.
%
Now, consider the following:
\begin{enumerate}

\item[(i)]
For $\mathbf r=0000$, we have
$
\tau(\sigma(0000))
=
\tau(0000)
=
\operatorname{tor}(\mathcal{C})
$
and
$
\sigma(\tau(0000))
=
\sigma(\operatorname{tor}(\mathcal{C}))
=
\operatorname{tor}(\mathcal{C})
$.
Hence,
$
\tau(\sigma(0000))=\sigma(\tau(0000))
$.

\item[(ii)]
For $\mathbf r=1010=\mathbf{r}_1$, we have
$
\sigma(\mathbf{r}_1)=\mathbf{r}_2
$.
Thus,
$
\tau(\sigma(\mathbf{r}_1))
=
\tau(\mathbf{r}_2)
=
0100+\operatorname{tor}(\mathcal{C})
$.
On the other hand,
$
\sigma(\tau(\mathbf{r}_1))
=
\sigma(1000+\operatorname{tor}(\mathcal{C}))
=
0001+\operatorname{tor}(\mathcal{C})
$.
Since
$
0100+0001=0101\in\operatorname{tor}(\mathcal{C})
$,
we get
$
0100+\operatorname{tor}(\mathcal{C})
=
0001+\operatorname{tor}(\mathcal{C})
$.
Hence,
$
\tau(\sigma(\mathbf{r}_1))=\sigma(\tau(\mathbf{r}_1))
$.

\item[(iii)]
For $\mathbf r=0101=\mathbf{r}_2$, we have
$
\sigma(\mathbf{r}_2)=\mathbf{r}_1
$.
Thus,
$
\tau(\sigma(\mathbf{r}_2))
=
\tau(\mathbf{r}_1)
=
1000+\operatorname{tor}(\mathcal{C})
$.
Also,
$
\sigma(\tau(\mathbf{r}_2))
=
\sigma(0100+\operatorname{tor}(\mathcal{C}))
=
0010+\operatorname{tor}(\mathcal{C})
$.
Since
$
1000+0010=1010\in\operatorname{tor}(\mathcal{C})
$,
we obtain
$
1000+\operatorname{tor}(\mathcal{C})
=
0010+\operatorname{tor}(\mathcal{C})
$.
Hence,
$
\tau(\sigma(\mathbf{r}_2))=\sigma(\tau(\mathbf{r}_2))
$.

\item[(iv)]
For $\mathbf r=1111=\mathbf{r}_1+\mathbf{r}_2$, we have
$
\tau(1111)
=
\operatorname{tor}(\mathcal{C})
$.
Since $\sigma(1111)=1111$, it follows that
$
\tau(\sigma(1111))
=
\operatorname{tor}(\mathcal{C})
$.
Also, since $\operatorname{tor}(\mathcal{C})$ is cyclic,
$
\sigma(\tau(1111))
=
\sigma(\operatorname{tor}(\mathcal{C}))
=
\operatorname{tor}(\mathcal{C})
$.
Therefore,
$
\tau(\sigma(1111))=\sigma(\tau(1111))
$.

\end{enumerate}

Therefore,
$
\tau(\sigma(\mathbf r))
=
\sigma(\tau(\mathbf r))
$
for all
$
\mathbf r\in\operatorname{res}(\mathcal{C}).
$
By Theorem~\ref{thm:cyclic code characterization}, $\mathcal{C}$ is a cyclic $I_2$-code. Moreover, since
$
1000\notin \operatorname{tor}(\mathcal{C})
$,
we have
$
\tau(\mathbf{r}_1)\neq \operatorname{tor}(\mathcal{C})
$.
Thus, $\mathcal{C}$ is twisted.
\end{example}


If $\mathcal{C}$ is an untwisted $I_2$-code, then for all $\mathbf{r} \in \operatorname{res} ( \mathcal{C} )$, $\tau ( \mathbf{r} ) = \operatorname{tor} ( \mathcal{C} )$, and so $\tau ( \sigma ( \mathbf{r} ) ) = \operatorname{tor} ( \mathcal{C} ) = \sigma ( \tau ( \mathbf{r} ) )$ since $\operatorname{res} ( \mathcal{C} )$ and $\operatorname{tor} ( \mathcal{C} )$ are both cyclic. Thus, we obtain the following corollary from the previous theorem.

\begin{corollary}  \label{cor:char-untwisted}
Let $\mathcal{C}$ be an untwisted $I_2$-code. Then $ \mathcal{C}$ is a cyclic code over $I_2$ if and only if $\operatorname{res}(\mathcal{C})$ and $\operatorname{tor} (\mathcal{C}) $ are both binary cyclic codes.
\end{corollary}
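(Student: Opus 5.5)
The plan is to deduce the statement from \Cref{thm:cyclic code characterization}. By that theorem, $\mathcal{C}$ is cyclic exactly when $\operatorname{res}(\mathcal{C})$ and $\operatorname{tor}(\mathcal{C})$ are binary cyclic codes and $\tau(\sigma(\mathbf{r})) = \sigma(\tau(\mathbf{r}))$ for every $\mathbf{r} \in \operatorname{res}(\mathcal{C})$. So it suffices to show that, for an untwisted code, the compatibility condition on $\tau$ holds automatically once the residue and torsion codes are cyclic.

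The forward direction is immediate. If $\mathcal{C}$ is cyclic, \Cref{thm:cyclic code characterization} already gives that $\operatorname{res}(\mathcal{C})$ and $\operatorname{tor}(\mathcal{C})$ are cyclic; untwistedness is not needed here.

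For the backward direction, assume $\operatorname{res}(\mathcal{C})$ and $\operatorname{tor}(\mathcal{C})$ are cyclic. Since $\mathcal{C}$ is untwisted, condition (\ref{itm: untwisted 1}) of \Cref{lem: untwisted I2-codes equiv statements} gives $\tau(\mathbf{r}) = \operatorname{tor}(\mathcal{C})$ for every $\mathbf{r} \in \operatorname{res}(\mathcal{C})$. We then check the compatibility condition in three steps:
\begin{enumerate}[(1)]
\item Because $\operatorname{res}(\mathcal{C})$ is cyclic, $\sigma(\mathbf{r}) \in \operatorname{res}(\mathcal{C})$, so $\tau(\sigma(\mathbf{r}))$ is defined and equals $\operatorname{tor}(\mathcal{C})$.
\item The shift $\sigma$ acts on cosets by $\sigma(\mathbf{s} + \operatorname{tor}(\mathcal{C})) = \sigma(\mathbf{s}) + \sigma(\operatorname{tor}(\mathcal{C}))$. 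Because $\operatorname{tor}(\mathcal{C})$ is cyclic, $\sigma(\operatorname{tor}(\mathcal{C})) = \operatorname{tor}(\mathcal{C})$, so this action is well defined on $\mathbb{F}_2^n/\operatorname{tor}(\mathcal{C})$.
\item Hence $\sigma(\tau(\mathbf{r})) = \sigma(\operatorname{tor}(\mathcal{C})) = \operatorname{tor}(\mathcal{C}) = \tau(\sigma(\mathbf{r}))$.
\end{enumerate}
This verifies the compatibility condition, and \Cref{thm:cyclic code characterization} then shows that $\mathcal{C}$ is cyclic.

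There is no real obstacle. The only point requiring care is step (2): $\sigma$ must be a well-defined map on the quotient $\mathbb{F}_2^n/\operatorname{tor}(\mathcal{C})$, which is exactly where the cyclicity of $\operatorname{tor}(\mathcal{C})$ is used.

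As a cross-check, one can argue directly from \Cref{thm:C as direct sum of resC and torC}: $\mathcal{C} = \mathtt{a}\operatorname{res}(\mathcal{C}) \oplus \mathtt{b}\operatorname{tor}(\mathcal{C})$. Since $\sigma(\mathtt{a}\mathbf{r} + \mathtt{b}\mathbf{t}) = \mathtt{a}\sigma(\mathbf{r}) + \mathtt{b}\sigma(\mathbf{t})$, this lies in $\mathcal{C}$ whenever both binary codes are cyclic.
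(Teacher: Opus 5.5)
Your proposal is correct and matches the paper's argument: untwistedness gives $\tau(\mathbf{r}) = \operatorname{tor}(\mathcal{C})$ for all $\mathbf{r} \in \operatorname{res}(\mathcal{C})$, so the compatibility condition in \Cref{thm:cyclic code characterization} holds automatically once $\operatorname{res}(\mathcal{C})$ and $\operatorname{tor}(\mathcal{C})$ are cyclic. Your cross-check via $\mathcal{C} = \mathtt{a}\operatorname{res}(\mathcal{C}) \oplus \mathtt{b}\operatorname{tor}(\mathcal{C})$ is also valid and gives the backward direction without using the characterization theorem at all.
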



\begin{theorem}[Theorem 3 in \cite{Duality_I2E2H2}]  \label{thm:resC and torC of Cdual}
If $\mathcal{C}$ is an $I_2$-code, then $\operatorname{res} ( \mathcal{C}^{\perp} ) = \operatorname{res} ( \mathcal{C} )^{\perp}$ and $\operatorname{tor}  ( \mathcal{C}^{\perp} ) = \mathbb{F}_{2}^{n}$.
\end{theorem}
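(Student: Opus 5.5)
We need to prove that for an $I_2$-code $\mathcal{C}$ we have $\operatorname{res}(\mathcal{C}^\perp)=\operatorname{res}(\mathcal{C})^\perp$ and $\operatorname{tor}(\mathcal{C}^\perp)=\mathbb{F}_2^n$. The plan is to reduce everything to a single computation of the inner product on $I_2^n$. Write $\mathbf{x}=\mathtt{a}\mathbf{u}+\mathtt{b}\mathbf{v}$ and $\mathbf{y}=\mathtt{a}\mathbf{u}'+\mathtt{b}\mathbf{v}'$ with $\mathbf{u},\mathbf{v},\mathbf{u}',\mathbf{v}'\in\mathbb{F}_2^n$. From the multiplication table ($\mathtt{a}^2=\mathtt{b}$, $\mathtt{ab}=\mathtt{b}^2=0$) we get
\begin{align*}
\mathbf{x}\cdot\mathbf{y}=\mathtt{b}\,(\mathbf{u}\cdot\mathbf{u}'),
\end{align*}
where $\mathbf{u}\cdot\mathbf{u}'$ is the binary inner product. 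Since $\alpha(\mathbf{x})=\mathbf{u}$ and $\alpha(\mathbf{y})=\mathbf{u}'$, this says $\mathbf{x}\cdot\mathbf{y}=0$ if and only if $\alpha(\mathbf{x})\cdot\alpha(\mathbf{y})=0$ in $\mathbb{F}_2$. Everything else follows from this identity.

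\textbf{Torsion part.} Take any $\mathbf{t}\in\mathbb{F}_2^n$. The vector $\mathtt{b}\mathbf{t}$ has residue $\mathbf{0}$, so by the identity it is orthogonal to every vector of $I_2^n$. In particular $\mathtt{b}\mathbf{t}\in\mathcal{C}^\perp$, hence $\operatorname{tor}(\mathcal{C}^\perp)=\mathbb{F}_2^n$.

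\textbf{Residue part.} By the identity, a vector $\mathbf{y}$ lies in $\mathcal{C}^\perp$ if and only if $\alpha(\mathbf{y})\cdot\alpha(\mathbf{c})=0$ for all $\mathbf{c}\in\mathcal{C}$. As $\mathbf{c}$ runs over $\mathcal{C}$, $\alpha(\mathbf{c})$ runs over $\operatorname{res}(\mathcal{C})$, so this condition says exactly that $\alpha(\mathbf{y})\in\operatorname{res}(\mathcal{C})^\perp$. This gives $\operatorname{res}(\mathcal{C}^\perp)\subseteq\operatorname{res}(\mathcal{C})^\perp$. For the reverse inclusion, given $\mathbf{w}\in\operatorname{res}(\mathcal{C})^\perp$, the vector $\mathtt{a}\mathbf{w}$ satisfies the same criterion, so $\mathtt{a}\mathbf{w}\in\mathcal{C}^\perp$, and $\alpha(\mathtt{a}\mathbf{w})=\mathbf{w}$. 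Hence $\operatorname{res}(\mathcal{C})^\perp\subseteq\operatorname{res}(\mathcal{C}^\perp)$ and the two codes are equal.

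As a consistency check, the result says $\mathcal{C}^\perp=\alpha^{-1}(\operatorname{res}(\mathcal{C})^\perp)$. Its size is $2^{\,n-k_1}\cdot 2^{\,n}$, since the kernel of $\alpha$ on $I_2^n$ is $\mathtt{b}\mathbb{F}_2^n$, which has $2^n$ elements. The only real work in the argument is verifying the inner product identity from the multiplication table; the remaining steps are direct consequences of it.
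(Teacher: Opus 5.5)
Your proof is correct and complete. The paper gives no argument of its own for this statement; it imports it as Theorem~3 of \cite{Duality_I2E2H2}. Your identity $\mathbf{x}\cdot\mathbf{y}=\mathtt{b}\,(\alpha(\mathbf{x})\cdot\alpha(\mathbf{y}))$, read off from $\mathtt{a}^2=\mathtt{b}$ and $\mathtt{ab}=\mathtt{b}^2=0$, therefore serves as a self-contained justification rather than an alternative to an in-paper proof.

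Your argument also proves more than the statement. From the characterization $\mathcal{C}^\perp=\alpha^{-1}(\operatorname{res}(\mathcal{C})^\perp)$, together with $\ker\alpha=\mathtt{b}\mathbb{F}_2^n$ and the section $\mathbf{w}\mapsto\mathtt{a}\mathbf{w}$, you immediately get $\mathcal{C}^\perp=\mathtt{a}\operatorname{res}(\mathcal{C})^\perp\oplus\mathtt{b}\mathbb{F}_2^n$. That is the second result the paper cites without proof (Theorem~\ref{thm:Cdual as direct sum}).

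It further shows that $\mathcal{C}^\perp$ depends only on $\operatorname{res}(\mathcal{C})$, and not on the torsion code or twist map of $\mathcal{C}$. Hence $\mathcal{C}^\perp$ is always untwisted, since $\mathbb{F}_2^n/\operatorname{tor}(\mathcal{C}^\perp)$ is trivial. This is exactly the fact the paper relies on in Proposition~\ref{thm:dual-is-also-cyclic} and Corollary~\ref{cor:Gray image of dual of C}.
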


\begin{theorem}[Theorem 4 in \cite{Duality_I2E2H2}]  \label{thm:Cdual as direct sum}
If $\mathcal{C}$ is an $I_2$-code, then $\mathcal{C}^\perp = \mathtt{a}  \operatorname{res} (\mathcal{C})^\perp  \oplus  \mathtt{b} \mathbb{F}_2^n$.
\end{theorem}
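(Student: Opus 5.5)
We want to show $\mathcal{C}^\perp = \mathtt{a}\operatorname{res}(\mathcal{C})^\perp \oplus \mathtt{b}\mathbb{F}_2^n$. The plan is to prove the two inclusions directly from the multiplication table of $I_2$, and then get directness from \Cref{thm:C as direct sum of resC and torC}-style reasoning.

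The key computational fact comes first. For $x = \mathtt{a}x_1 + \mathtt{b}x_2$ and $y = \mathtt{a}y_1 + \mathtt{b}y_2$ in $I_2$ with $x_i, y_i \in \mathbb{F}_2$, Table~\ref{tbl:mult table I2} gives $xy = \mathtt{b}x_1y_1$, because $\mathtt{ab} = \mathtt{b}^2 = 0$ and $\mathtt{a}^2 = \mathtt{b}$. Extending coordinatewise, for $\mathbf{x}, \mathbf{y} \in I_2^n$ we get
\begin{align*}
\mathbf{x} \cdot \mathbf{y} = \mathtt{b}\,\big(\alpha(\mathbf{x}) \cdot \alpha(\mathbf{y})\big),
\end{align*}
where the inner product on the right is the binary one. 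Since $\mathtt{b} \cdot 1 = \mathtt{b} \neq 0$, this shows that $\mathbf{x}\cdot\mathbf{y} = 0$ if and only if $\alpha(\mathbf{x}) \cdot \alpha(\mathbf{y}) = 0$ in $\mathbb{F}_2$.

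It follows that $\mathbf{y} \in \mathcal{C}^\perp$ if and only if $\alpha(\mathbf{y}) \perp \alpha(\mathbf{x})$ for all $\mathbf{x}\in\mathcal{C}$, that is, if and only if $\alpha(\mathbf{y}) \in \alpha(\mathcal{C})^\perp = \operatorname{res}(\mathcal{C})^\perp$. Every $\mathbf{y}\in I_2^n$ can be written as $\mathbf{y} = \mathtt{a}\alpha(\mathbf{y}) + \mathtt{b}\mathbf{t}$ for some $\mathbf{t}\in\mathbb{F}_2^n$, and for such $\mathbf{y}$ we have $\alpha(\mathbf{y})$ equal to the $\mathtt{a}$-component. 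Therefore
\begin{align*}
\mathcal{C}^\perp = \{\mathtt{a}\mathbf{r} + \mathtt{b}\mathbf{t} : \mathbf{r}\in \operatorname{res}(\mathcal{C})^\perp,\ \mathbf{t}\in\mathbb{F}_2^n\} = \mathtt{a}\operatorname{res}(\mathcal{C})^\perp + \mathtt{b}\mathbb{F}_2^n.
\end{align*}

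For directness, suppose $\mathtt{a}\mathbf{r} = \mathtt{b}\mathbf{t}$. Applying $\alpha$ gives $\mathbf{r} = \mathbf{0}$, and then $\mathtt{b}\mathbf{t} = \mathbf{0}$ forces $\mathbf{t} = \mathbf{0}$, exactly as in the proof of \Cref{thm:C as direct sum of resC and torC}. This also recovers \Cref{thm:resC and torC of Cdual}, and shows that $\mathcal{C}^\perp$ is untwisted.

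There is no real obstacle here. The only step requiring care is the identity $\mathbf{x}\cdot\mathbf{y} = \mathtt{b}(\alpha(\mathbf{x})\cdot\alpha(\mathbf{y}))$, which must be checked against the full multiplication table, including the entries involving $\mathtt{c}$, to confirm that the $\mathtt{b}$-components contribute nothing to the product.
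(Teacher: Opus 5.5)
Your proof is correct. The paper gives no proof of its own here: it imports this statement, together with \Cref{thm:resC and torC of Cdual}, as Theorem~4 of \cite{Duality_I2E2H2}, so there is no in-paper argument to compare against.

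Your route rests on the single identity $\mathbf{x}\cdot\mathbf{y} = \mathtt{b}\,(\alpha(\mathbf{x})\cdot\alpha(\mathbf{y}))$, which holds because $\mathtt{a}^2=\mathtt{b}$ and $\mathtt{ab}=\mathtt{b}^2=0$. This approach has several advantages:
\begin{itemize}
\item It yields both cited theorems at once, so the paper's later uses of them (\Cref{thm:dual-is-also-cyclic}, \Cref{cor:Gray image of dual of C}) become self-contained.
\item It makes transparent why $\mathcal{C}^\perp$ is always untwisted.
\item It shows that $\mathcal{C}^\perp$ depends only on $\alpha(\mathcal{C})$. The argument never uses that $\mathcal{C}$ is a submodule, so it works for any subset.
\end{itemize}

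One point of precision: $\mathtt{a}\operatorname{res}(\mathcal{C})^\perp$ is not an $I_2$-submodule, since $\mathtt{a}\cdot\mathtt{a}\mathbf{r}=\mathtt{b}\mathbf{r}\notin \mathtt{a}\operatorname{res}(\mathcal{C})^\perp$ for $\mathbf{r}\neq\mathbf{0}$. So the $\oplus$ here is a direct sum of additive subgroups, as in \Cref{thm:C as direct sum of resC and torC}. Your trivial-intersection argument establishes exactly that.
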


\begin{proposition}  \label{thm:dual-is-also-cyclic}
If $\mathcal{C}$ is a cyclic code over $I_2$, then the dual code $\mathcal{C}^\perp$ is also cyclic.
\end{proposition}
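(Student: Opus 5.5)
We will deduce the statement from the structural description of the dual in \Cref{thm:Cdual as direct sum}, reducing everything to the classical fact that the dual of a binary cyclic code is cyclic. By \Cref{thm:Cdual as direct sum} we have
\begin{align*}
\mathcal{C}^\perp = \mathtt{a}\operatorname{res}(\mathcal{C})^\perp \oplus \mathtt{b}\mathbb{F}_2^n ,
\end{align*}
and by \Cref{thm:resC and torC of Cdual} we have $\operatorname{res}(\mathcal{C}^\perp)=\operatorname{res}(\mathcal{C})^\perp$ and $\operatorname{tor}(\mathcal{C}^\perp)=\mathbb{F}_2^n$. In particular $\mathcal{C}^\perp = \mathtt{a}\operatorname{res}(\mathcal{C}^\perp)+\mathtt{b}\operatorname{tor}(\mathcal{C}^\perp)$, so $\mathcal{C}^\perp$ is untwisted by statement (\ref{itm: untwisted 3}) of \Cref{lem: untwisted I2-codes equiv statements}. (Indeed the twist map of $\mathcal{C}^\perp$ lands in $\mathbb{F}_2^n/\mathbb{F}_2^n=0$, so it is trivially untwisted.)

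We then apply \Cref{cor:char-untwisted}: it suffices to show that $\operatorname{res}(\mathcal{C}^\perp)$ and $\operatorname{tor}(\mathcal{C}^\perp)$ are binary cyclic codes. The torsion code $\mathbb{F}_2^n$ is obviously invariant under $\sigma$. For the residue code, \Cref{thm:cyclic code characterization} gives that $\operatorname{res}(\mathcal{C})$ is binary cyclic since $\mathcal{C}$ is cyclic. We then invoke the standard fact that the dual of a binary cyclic code is cyclic. The argument is short: for $\mathbf{y}\in\operatorname{res}(\mathcal{C})^\perp$ and $\mathbf{x}\in\operatorname{res}(\mathcal{C})$, we have $\sigma(\mathbf{y})\cdot\mathbf{x}=\mathbf{y}\cdot\sigma^{-1}(\mathbf{x})=\mathbf{y}\cdot\sigma^{n-1}(\mathbf{x})=0$, because the inner product is invariant under simultaneously permuting coordinates and $\sigma^{n-1}(\mathbf{x})\in\operatorname{res}(\mathcal{C})$.

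Alternatively, without \Cref{cor:char-untwisted}, one can check directly that $\sigma(\mathtt{a}\mathbf{u}+\mathtt{b}\mathbf{v})=\mathtt{a}\sigma(\mathbf{u})+\mathtt{b}\sigma(\mathbf{v})$ lies in $\mathcal{C}^\perp$. This could even be done from the definition alone, via $\sigma(\mathbf{y})\cdot\mathbf{x}=\mathbf{y}\cdot\sigma^{-1}(\mathbf{x})$ over $I_2$, since $\sigma^{-1}=\sigma^{n-1}$ preserves $\mathcal{C}$. There is no real obstacle here; the only point requiring care is confirming that the cited decomposition makes $\mathcal{C}^\perp$ untwisted, so that \Cref{cor:char-untwisted} applies.
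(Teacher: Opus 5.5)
Your proof is correct and follows the paper's argument: you use \Cref{thm:resC and torC of Cdual} and \Cref{thm:Cdual as direct sum} to see that $\mathcal{C}^\perp$ is untwisted, with residue code $\operatorname{res}(\mathcal{C})^\perp$ and torsion code $\mathbb{F}_2^n$, and then conclude by \Cref{cor:char-untwisted} together with the fact that duals of binary cyclic codes are cyclic. Your alternative argument, $\sigma(\mathbf{y})\cdot\mathbf{x}=\mathbf{y}\cdot\sigma^{n-1}(\mathbf{x})=0$ computed directly over $I_2$, is also valid, and it has the advantage of not needing the cited structure theorems at all.
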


\begin{proof}
Assume that $\mathcal{C}$ is cyclic. From Theorems \ref{thm:resC and torC of Cdual} and \ref{thm:Cdual as direct sum}, we have $\operatorname{res}(\mathcal{C}^\perp) = \operatorname{res}(\mathcal{C})^\perp$, $\operatorname{tor}(\mathcal{C}^\perp) = \mathbb{F}_2^n$, and $\mathcal{C}^{\perp} = \mathtt{a}\operatorname{res}(\mathcal{C})^\perp \oplus \mathtt{b}\mathbb{F}_2^n$. Hence, $\mathcal{C}^\perp$ is untwisted. Since the dual of a binary cyclic code is cyclic, $\operatorname{res}(\mathcal{C})^\perp$ is cyclic, and $\mathbb{F}_2^n$ is trivially cyclic. Thus, by \Cref{cor:char-untwisted}, $\mathcal{C}^\perp$ is cyclic.
\end{proof}

\begin{lemma}  \label{lem:phi-commutes}
Let $\sigma$ be the cyclic shift. Then for all $\mathbf{c} \in I_2^n$, $\phi (\sigma(\mathbf{c})) = \sigma (\phi (\mathbf{c}))$.
\end{lemma}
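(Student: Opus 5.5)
The plan is to prove \Cref{lem:phi-commutes} by a direct coordinatewise comparison. The key observation is that $\phi$ is applied entrywise to vectors, while $\sigma$ only permutes coordinate positions. An entrywise map and a coordinate permutation always commute, so no algebraic property of $\phi$ (additivity, or compatibility with the ring operations) will be needed.

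Concretely, write $\mathbf{c} = (c_0, c_1, \dotsc, c_{n-1}) \in I_2^n$. By the definition of the natural extension of $\phi$ to $I_2^n$,
\begin{align*}
\phi(\mathbf{c}) = (\phi(c_0), \phi(c_1), \dotsc, \phi(c_{n-1})) \in \mathbb{F}_4^n .
\end{align*}
Next we compute both sides. On one side, $\sigma(\mathbf{c}) = (c_{n-1}, c_0, \dotsc, c_{n-2})$, so applying $\phi$ entrywise gives
\begin{align*}
\phi(\sigma(\mathbf{c})) = (\phi(c_{n-1}), \phi(c_0), \dotsc, \phi(c_{n-2})).
\end{align*}
On the other side, applying the cyclic shift $\sigma$ on $\mathbb{F}_4^n$ (defined by the same formula as on any $n$-tuple) to $\phi(\mathbf{c})$ gives the same tuple.

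Equivalently, for each index $i$ (taken modulo $n$), both sides have $i$-th coordinate $\phi(c_{i-1})$. Equality of all coordinates then yields $\phi(\sigma(\mathbf{c})) = \sigma(\phi(\mathbf{c}))$ for every $\mathbf{c} \in I_2^n$.

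I do not expect a genuine obstacle here. The only point needing care is to stress that $\sigma$ denotes the cyclic shift on $n$-tuples over any alphabet, as the paper allows when it says it will be flexible with $\sigma$. It should therefore be read on $I_2^n$ on the left and on $\mathbb{F}_4^n$ on the right. Once this is made explicit, the lemma is immediate from the fact that $\phi$ is defined coordinatewise.
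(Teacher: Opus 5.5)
Your proof is correct and is essentially the paper's argument: both write $\phi(\sigma(\mathbf{c})) = (\phi(c_{n-1}), \phi(c_0), \dotsc, \phi(c_{n-2}))$ coordinatewise and recognize this tuple as $\sigma(\phi(\mathbf{c}))$. Your remark that only the entrywise nature of $\phi$ matters, with $\sigma$ read on $I_2^n$ and on $\mathbb{F}_4^n$ respectively, is accurate.
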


\begin{proof}
Let $\mathbf{c} = (c_0, c_1, \dots, c_{n-1}) \in I_2^n$. Then $\sigma(\mathbf{c}) = (c_{n-1},c_0,\dots,c_{n-2})$. Applying $\phi$ coordinatewise gives
\begin{align*}
\phi (\sigma(\mathbf{c}))
&= (\phi (c_{n-1}), \phi (c_0),\dots,\phi(c_{n-2}))  \\
&= \sigma (\phi(c_0),\dots,\phi(c_{n-2}),\phi(c_{n-1}))  \\
&= \sigma (\phi (\mathbf{c})).  \tag*{\qedhere}
\end{align*}
\renewcommand{\qedsymbol}{}
\end{proof}

\begin{proposition}  \label{thm:add-code-is-also-cyclic}
Let $\mathcal{C}$ be an $I_2$-code. Then $\mathcal{C}$ is a cyclic code if and only if the additive code $\phi ( \mathcal{C} )$ is also cyclic.
\end{proposition}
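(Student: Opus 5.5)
The plan is to use \Cref{lem:phi-commutes} together with the fact that $\phi$ is a bijection from $I_2^n$ to $\mathbb{F}_4^n$. Nothing deeper is needed, because cyclicity of both $\mathcal{C}$ and $\phi(\mathcal{C})$ is the set-theoretic condition of invariance under $\sigma$.

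First I will record that $\phi\colon I_2\to\mathbb{F}_4$ is a bijection and in fact an additive group isomorphism. Comparing \Cref{tbl:add table I2} with addition in $\mathbb{F}_4=\{0,1,\omega,\omega^2\}$, we have $\phi(\mathtt{a}+\mathtt{b})=\phi(\mathtt{c})=\omega^2=\omega+1=\phi(\mathtt{a})+\phi(\mathtt{b})$. Both groups are of exponent $2$, and $\phi$ sends the generators $\mathtt{a},\mathtt{b}$ to $\omega,1$, so these checks suffice. Extended coordinatewise, $\phi$ is therefore an additive bijection $I_2^n\to\mathbb{F}_4^n$. In particular $\phi(\mathcal{C})$ is an additive code, and $\phi$ is injective on subsets.

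For the forward direction, assume $\sigma(\mathcal{C})=\mathcal{C}$. By \Cref{lem:phi-commutes},
\begin{align*}
\sigma(\phi(\mathcal{C}))=\phi(\sigma(\mathcal{C}))=\phi(\mathcal{C}),
\end{align*}
so $\phi(\mathcal{C})$ is cyclic.

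For the converse, assume $\sigma(\phi(\mathcal{C}))=\phi(\mathcal{C})$. Then \Cref{lem:phi-commutes} gives $\phi(\sigma(\mathcal{C}))=\sigma(\phi(\mathcal{C}))=\phi(\mathcal{C})$. Since $\phi$ is injective on $I_2^n$, we conclude $\sigma(\mathcal{C})=\mathcal{C}$. Because $\mathcal{C}$ is already assumed to be an $I_2$-code, i.e., a submodule, this is exactly the definition of a cyclic code over $I_2$.

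The only point needing care is the converse, which relies on injectivity of $\phi$ to pull the equality back from $\mathbb{F}_4^n$ to $I_2^n$. Linearity of $\mathcal{C}$ is a hypothesis of the statement, so it never has to be recovered from the additive structure of $\phi(\mathcal{C})$.
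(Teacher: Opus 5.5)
Your proposal is correct and follows essentially the same argument as the paper. Both directions rest on \Cref{lem:phi-commutes}, and the converse uses injectivity of $\phi$ to pull $\sigma$-invariance back from $\phi(\mathcal{C})$ to $\mathcal{C}$; you argue with whole sets where the paper argues codeword by codeword, and your extra check that $\phi$ is additive is correct but not needed for this statement.
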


\begin{proof}
Assume that $\mathcal{C}$ is cyclic, i.e., $\sigma(\mathcal{C}) = \mathcal{C}$. Let $\mathbf{u} \in \phi(\mathcal{C})$. Then $\mathbf{u} = \phi(\mathbf{c})$ for some $\mathbf{c} \in \mathcal{C}$. By  \Cref{lem:phi-commutes}, $\sigma(\mathbf{u}) = \phi(\sigma(\mathbf{c})) \in \phi(\mathcal{C})$, and so $\phi(\mathcal{C})$ is cyclic.  Conversely, assume that $\phi(\mathcal{C})$ is cyclic. If $\mathbf{c} \in \mathcal{C}$, then $\phi(\mathbf{c}) \in \phi(\mathcal{C})$, so $\sigma (\phi(\mathbf{c})) \in \phi(\mathcal{C})$. By \Cref{lem:phi-commutes}, $\sigma (\phi (\mathbf{c})) = \phi(\sigma(\mathbf{c}))$. Since $\phi$ is bijective coordinatewise, $\sigma(\mathbf{c}) \in \mathcal{C}$. Thus $\mathcal{C}$ is cyclic.
\end{proof}

\begin{proposition}
Let $\mathcal{C}$ be an $I_2$-code. If the additive code $\phi ( \mathcal{C} )$ is cyclic, then $\phi ( \mathcal{C}^\perp )$ is also cyclic.
\end{proposition}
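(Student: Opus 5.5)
The plan is to chain the three results just established. Suppose $\phi(\mathcal{C})$ is cyclic.

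First, \Cref{thm:add-code-is-also-cyclic} (the backward direction) shows that $\mathcal{C}$ itself is a cyclic code over $I_2$.

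Next, \Cref{thm:dual-is-also-cyclic} gives that $\mathcal{C}^\perp$ is a cyclic code over $I_2$. For this step to be legitimate, $\mathcal{C}^\perp$ must be an $I_2$-code, i.e.\ an $I_2$-submodule of $I_2^n$. That holds because the inner product is bilinear over $I_2$:
\begin{align*}
(r\mathbf{y}+\mathbf{y}')\cdot \mathbf{x} = r(\mathbf{y}\cdot\mathbf{x}) + \mathbf{y}'\cdot\mathbf{x}
\end{align*}
by commutativity. Alternatively, it can be read off from \Cref{thm:Cdual as direct sum}, which writes $\mathcal{C}^\perp$ explicitly as the submodule $\mathtt{a}\operatorname{res}(\mathcal{C})^\perp \oplus \mathtt{b}\mathbb{F}_2^n$.

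Finally, applying the forward direction of \Cref{thm:add-code-is-also-cyclic} to the $I_2$-code $\mathcal{C}^\perp$ gives that $\phi(\mathcal{C}^\perp)$ is cyclic.

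There is no serious obstacle. The only point needing care is confirming that the hypotheses of the cited propositions apply, namely that $\mathcal{C}^\perp$ is an $I_2$-code, and the bilinearity argument above settles this.

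As a cross-check that avoids the dual proposition, one could use \Cref{thm:Cdual as direct sum} directly. It gives
\begin{align*}
\phi(\mathcal{C}^\perp) = \{\phi(\mathtt{a}\mathbf{r}+\mathtt{b}\mathbf{t}) : \mathbf{r}\in\operatorname{res}(\mathcal{C})^\perp,\ \mathbf{t}\in\mathbb{F}_2^n\},
\end{align*}
and this set is invariant under $\sigma$ by \Cref{lem:phi-commutes}, since $\operatorname{res}(\mathcal{C})^\perp$ is the dual of a binary cyclic code and hence cyclic. I would present the short chained argument and mention this alternative only if needed.
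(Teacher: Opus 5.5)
Your proposal is correct and matches the paper's proof, which likewise chains \Cref{thm:add-code-is-also-cyclic} (backward, to get $\mathcal{C}$ cyclic), \Cref{thm:dual-is-also-cyclic}, and \Cref{thm:add-code-is-also-cyclic} again (forward, applied to $\mathcal{C}^\perp$). Your check that $\mathcal{C}^\perp$ is an $I_2$-submodule is a detail the paper leaves implicit. Your alternative via \Cref{thm:Cdual as direct sum} also works, provided you first note that $\operatorname{res}(\mathcal{C})$ is cyclic, which itself follows from $\mathcal{C}$ being cyclic.
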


\begin{proof}
This is immediate from Propositions \ref{thm:dual-is-also-cyclic} and \ref{thm:add-code-is-also-cyclic}.
\end{proof}

\begin{corollary}
An  $I_2$-code $\mathcal{C} \subseteq  I_2^n$ is a self-orthogonal cyclic code if and only if $\operatorname{res}(\mathcal{C}) $ is a binary self-orthogonal cyclic code.
\end{corollary}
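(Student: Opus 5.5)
The core of the argument is a single computation of the inner product on $I_2^n$ in terms of residues. Write two arbitrary codewords as $\mathbf{x} = \mathtt{a}\mathbf{r} + \mathtt{b}\mathbf{s}$ and $\mathbf{y} = \mathtt{a}\mathbf{r}' + \mathtt{b}\mathbf{s}'$ with $\mathbf{r},\mathbf{r}',\mathbf{s},\mathbf{s}' \in \mathbb{F}_2^n$. This is possible because every element of $I_2$ is $\mathtt{a}x+\mathtt{b}y$. Using $\mathtt{a}^2=\mathtt{b}$, $\mathtt{ab}=0$ and $\mathtt{b}^2=0$ (all read off from Table~\ref{tbl:mult table I2}), each coordinate product is $x_iy_i = \mathtt{b}\,r_ir'_i$. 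Summing over $i$ gives
\begin{align*}
\mathbf{x}\cdot\mathbf{y} = \mathtt{b}\,(\mathbf{r}\cdot\mathbf{r}') ,
\end{align*}
where the binary inner product is reduced mod $2$. Since $\mathtt{b}\neq 0$, we get $\mathbf{x}\cdot\mathbf{y}=0$ if and only if $\alpha(\mathbf{x})\cdot\alpha(\mathbf{y})=0$ in $\mathbb{F}_2$. As $\alpha(\mathcal{C})=\operatorname{res}(\mathcal{C})$, this shows that $\mathcal{C}$ is self-orthogonal if and only if $\operatorname{res}(\mathcal{C})$ is self-orthogonal, for any $I_2$-code, cyclic or not. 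This also agrees with \Cref{thm:Cdual as direct sum}: $\mathcal{C}\subseteq\mathcal{C}^\perp=\mathtt{a}\operatorname{res}(\mathcal{C})^\perp\oplus\mathtt{b}\mathbb{F}_2^n$ holds exactly when $\operatorname{res}(\mathcal{C})\subseteq\operatorname{res}(\mathcal{C})^\perp$. I could use that theorem as an alternative route.

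For the forward direction, suppose $\mathcal{C}$ is a self-orthogonal cyclic code. Then $\operatorname{res}(\mathcal{C})$ is self-orthogonal by the computation above. It is cyclic by the forward implication of \Cref{thm:cyclic code characterization}, or directly because $\alpha$ commutes with $\sigma$.

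The main obstacle is the converse, namely recovering cyclicity of $\mathcal{C}$ from that of $\operatorname{res}(\mathcal{C})$. By \Cref{thm:cyclic code characterization}, cyclicity of $\mathcal{C}$ also requires $\operatorname{tor}(\mathcal{C})$ to be cyclic and $\tau$ to commute with $\sigma$. Neither of these is controlled by $\operatorname{res}(\mathcal{C})$. For instance, take $n\ge 2$ and $\mathcal{C}=\mathtt{b}\langle(1,0,\dots,0)\rangle$. Then $\operatorname{res}(\mathcal{C})=\{\mathbf{0}\}$, which is a self-orthogonal cyclic code, but $\mathcal{C}$ is not cyclic. So the literal converse is false, and I would prove the statement in the following corrected form, keeping $\mathcal{C}$ cyclic as a standing hypothesis, as in the surrounding section.

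\textbf{Corrected statement.} A cyclic $I_2$-code $\mathcal{C}$ is self-orthogonal if and only if $\operatorname{res}(\mathcal{C})$ is a binary self-orthogonal cyclic code.

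Under that hypothesis the converse is just the residue criterion from the first paragraph. Alternatively, one can assume in addition that $\operatorname{tor}(\mathcal{C})$ is cyclic and that $\tau\circ\sigma=\sigma\circ\tau$, and then invoke \Cref{thm:cyclic code characterization} to obtain cyclicity of $\mathcal{C}$.
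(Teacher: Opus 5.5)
You are right, and the defect is in the statement and in the paper's proof, not in your argument. Your example $\mathcal{C}=\{\mathbf{0},\mathtt{b}(1,0,\dots,0)\}$ is an $I_2$-submodule because $I_2\mathtt{b}=\{0\}$. It is self-orthogonal, and $\operatorname{res}(\mathcal{C})=\{\mathbf{0}\}$ is a self-orthogonal cyclic code, yet $\sigma(\mathcal{C})\neq\mathcal{C}$ for $n\ge 2$.

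The paper's proof combines \Cref{thm:cyclic code characterization} with an external corollary stating that $\mathcal{C}$ is self-orthogonal if and only if $\operatorname{res}(\mathcal{C})$ is. That handles the forward direction exactly as you do. In the backward direction, however, the converse half of \Cref{thm:cyclic code characterization} also requires $\operatorname{tor}(\mathcal{C})$ to be cyclic and $\tau\circ\sigma=\sigma\circ\tau$. Nothing about $\operatorname{res}(\mathcal{C})$ supplies either condition.

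Both conditions are genuinely needed. Take $n=4$ and $\mathcal{C}=\{\mathbf{0},(\mathtt{c},\mathtt{a},\mathtt{a},\mathtt{a}),(\mathtt{b},\mathtt{b},\mathtt{b},\mathtt{b}),(\mathtt{a},\mathtt{c},\mathtt{c},\mathtt{c})\}$. This is a self-orthogonal $I_2$-code with $\operatorname{res}(\mathcal{C})=\operatorname{tor}(\mathcal{C})=\{0000,1111\}$, both cyclic and the former self-orthogonal. It is still not cyclic, because $\tau(\sigma(1111))=1000+\operatorname{tor}(\mathcal{C})$ differs from $\sigma(\tau(1111))=0100+\operatorname{tor}(\mathcal{C})$. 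What the paper's ingredients actually establish is your corrected statement, with cyclicity of $\mathcal{C}$ as a standing hypothesis.

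For the self-orthogonality criterion itself, the only difference from the paper is how you obtain it. You derive it from the identity $\mathbf{x}\cdot\mathbf{y}=\mathtt{b}\,(\alpha(\mathbf{x})\cdot\alpha(\mathbf{y}))$, which follows from $\mathtt{a}^2=\mathtt{b}$ and $\mathtt{ab}=\mathtt{b}^2=0$, rather than importing it from the cited duality paper. This keeps the argument self-contained and makes explicit that the criterion holds for every $I_2$-code, cyclic or not.
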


\begin{proof}
The proof follows from Theorem \ref{thm:cyclic code characterization} and Corollary 1 in \cite{Duality_I2E2H2}.
\end{proof}

\begin{corollary}
A code $\mathcal{C}= \mathtt{a} \mathcal{C}_1 + \mathtt{b} \mathcal{C}_2$ is a quasi self-dual cyclic code over $I_2$ if and only if $ \mathcal{C}_ 1$ is a self-orthogonal cyclic $[n , k]$-code over $\mathbb{F}_2$ and $ \mathcal{C}_2$ is a cyclic $[n, n-k]$-code over $  \mathbb{F}  _2$ such that $ \mathcal{C}_1 \subseteq\mathcal{C}_2$.
\end{corollary}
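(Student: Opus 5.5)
The plan is to reduce everything to the triple $(\operatorname{res}(\mathcal{C}),\operatorname{tor}(\mathcal{C}),\tau)$ and then apply \Cref{cor:char-untwisted}. The first step is to identify these invariants for a code of the form $\mathcal{C}=\mathtt{a}\mathcal{C}_1+\mathtt{b}\mathcal{C}_2$, where $\mathcal{C}_1,\mathcal{C}_2$ are binary codes. Applying $\alpha$ gives $\alpha(\mathtt{a}\mathbf{r}+\mathtt{b}\mathbf{t})=\mathbf{r}$, so $\operatorname{res}(\mathcal{C})=\mathcal{C}_1$. Now suppose $\mathtt{b}\mathbf{x}=\mathtt{a}\mathbf{r}+\mathtt{b}\mathbf{t}$. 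Applying $\alpha$ forces $\mathbf{r}=\mathbf{0}$, hence $\mathbf{x}=\mathbf{t}$. This gives $\operatorname{tor}(\mathcal{C})=\mathcal{C}_2$.

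The second step is the role of the inclusion $\mathcal{C}_1\subseteq\mathcal{C}_2$. Since $\mathtt{a}^2=\mathtt{b}$ and $\mathtt{ab}=\mathtt{b}^2=0$, we have $\mathtt{a}(\mathtt{a}\mathbf{r}+\mathtt{b}\mathbf{t})=\mathtt{b}\mathbf{r}$. Hence the set $\mathtt{a}\mathcal{C}_1+\mathtt{b}\mathcal{C}_2$ is an $I_2$-submodule if and only if $\mathtt{b}\mathcal{C}_1\subseteq \mathtt{b}\mathcal{C}_2$, i.e. $\mathcal{C}_1\subseteq\mathcal{C}_2$. (This is the general fact $\operatorname{res}\subseteq\operatorname{tor}$.) In the forward direction, the hypothesis that $\mathcal{C}$ is an $I_2$-code therefore yields $\mathcal{C}_1\subseteq\mathcal{C}_2$. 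In the backward direction, this inclusion is exactly what makes $\mathcal{C}$ an $I_2$-code. In either case $\mathcal{C}$ satisfies condition (\ref{itm: untwisted 3}) of \Cref{lem: untwisted I2-codes equiv statements}, so $\mathcal{C}$ is untwisted.

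Next come the two numerical conditions. If $\dim\mathcal{C}_1=k$ and $\dim\mathcal{C}_2=k'$, then $\mathcal{C}$ has type $\{k,k'-k\}$. Thus $|\mathcal{C}|=2^{2k+(k'-k)}=2^{k+k'}$, and $|\mathcal{C}|=2^n$ holds if and only if $k'=n-k$. For self-orthogonality I would use the product rules again to compute
\[
(\mathtt{a}\mathbf{r}+\mathtt{b}\mathbf{t})\cdot(\mathtt{a}\mathbf{r}'+\mathtt{b}\mathbf{t}')=\mathtt{b}\,(\mathbf{r}\cdot\mathbf{r}').
\]
So $\mathcal{C}$ is self-orthogonal if and only if $\mathcal{C}_1$ is a binary self-orthogonal code; this is also the content of Corollary 1 in \cite{Duality_I2E2H2}, which the previous corollary invokes. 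For cyclicity, \Cref{cor:char-untwisted} applies because $\mathcal{C}$ is untwisted. It says $\mathcal{C}$ is cyclic if and only if $\operatorname{res}(\mathcal{C})=\mathcal{C}_1$ and $\operatorname{tor}(\mathcal{C})=\mathcal{C}_2$ are both binary cyclic codes.

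Assembling these gives both directions. If $\mathcal{C}$ is quasi self-dual and cyclic, then:
\begin{enumerate}[(a)]
\item the submodule property gives $\mathcal{C}_1\subseteq\mathcal{C}_2$;
\item the size condition gives $\dim\mathcal{C}_2=n-k$;
\item self-orthogonality of $\mathcal{C}$ gives self-orthogonality of $\mathcal{C}_1$;
\item \Cref{cor:char-untwisted} gives cyclicity of $\mathcal{C}_1$ and $\mathcal{C}_2$.
\end{enumerate}
The converse reads the same equivalences backwards. The main obstacle is not computational. It is being careful that the description $\mathcal{C}=\mathtt{a}\mathcal{C}_1+\mathtt{b}\mathcal{C}_2$ really identifies $\mathcal{C}_1$ and $\mathcal{C}_2$ with the residue and torsion codes, so that the untwisted corollary may be applied. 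That identification depends on the submodule observation $\mathtt{a}\cdot\mathcal{C}\supseteq \mathtt{b}\mathcal{C}_1$, which forces $\mathcal{C}_1\subseteq\mathcal{C}_2$.
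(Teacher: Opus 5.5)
Your proof is correct and follows the same route as the paper. Both combine a characterization of quasi self-dual codes of the form $\mathtt{a}\mathcal{C}_1+\mathtt{b}\mathcal{C}_2$ with the residue/torsion characterization of cyclicity; you use it for untwisted codes as \Cref{cor:char-untwisted}, which is the specialization of \Cref{thm:cyclic code characterization} that the paper cites.

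The only difference is in how the quasi self-dual part is obtained. The paper imports it from Corollary~4 of \cite{MassFormula_Ip}, whereas you verify it directly via $(\mathtt{a}\mathbf{r}+\mathtt{b}\mathbf{t})\cdot(\mathtt{a}\mathbf{r}'+\mathtt{b}\mathbf{t}')=\mathtt{b}(\mathbf{r}\cdot\mathbf{r}')$ and $|\mathcal{C}|=2^{k+k'}$. You also make explicit the identifications $\operatorname{res}(\mathcal{C})=\mathcal{C}_1$ and $\operatorname{tor}(\mathcal{C})=\mathcal{C}_2$, and the untwistedness of $\mathcal{C}$, which the paper's one-line proof leaves implicit.
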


\begin{proof}
The proof follows from Theorem \ref{thm:cyclic code characterization} and Corollary 4 in \cite{MassFormula_Ip} by taking $p=2$, $\mathcal{C}_1 = \operatorname{res}(\mathcal{C})$, and $\mathcal{C}_ 2 =  \operatorname{tor}  (\mathcal{C})$.
\end{proof}

\begin{corollary}
An $I_2$-code $\mathcal{C}$ of length $n$ is a self-dual cyclic code if and only if $\operatorname{res}(\mathcal{C})$ is a binary self-dual cyclic code and $ \operatorname{tor} (\mathcal{C}) = \mathbb{F} _2^n$.
\end{corollary}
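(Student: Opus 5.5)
The plan is to combine the characterization of self-dual $I_2$-codes with \Cref{thm:cyclic code characterization}, in the same way the two preceding corollaries were obtained.

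\textbf{Step 1: self-duality in terms of residue and torsion codes.} We show that $\mathcal{C}$ is self-dual if and only if $\operatorname{res}(\mathcal{C})$ is binary self-dual and $\operatorname{tor}(\mathcal{C})=\mathbb{F}_2^n$. This should follow from \Cref{thm:resC and torC of Cdual} and \Cref{thm:Cdual as direct sum}.

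For the forward direction, assume $\mathcal{C}=\mathcal{C}^\perp$. Taking residue and torsion codes and applying \Cref{thm:resC and torC of Cdual} gives $\operatorname{res}(\mathcal{C})=\operatorname{res}(\mathcal{C})^\perp$ and $\operatorname{tor}(\mathcal{C})=\mathbb{F}_2^n$.

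For the converse, assume $\operatorname{res}(\mathcal{C})=\operatorname{res}(\mathcal{C})^\perp$ and $\operatorname{tor}(\mathcal{C})=\mathbb{F}_2^n$. Then every coset $\tau(\mathbf{r})$ in $\mathbb{F}_2^n/\mathbb{F}_2^n$ is trivial, so $\mathcal{C}$ is untwisted. By \Cref{thm:C as direct sum of resC and torC},
\[
\mathcal{C}=\mathtt{a}\operatorname{res}(\mathcal{C})\oplus\mathtt{b}\mathbb{F}_2^n .
\]
By \Cref{thm:Cdual as direct sum},
\[
\mathcal{C}^\perp=\mathtt{a}\operatorname{res}(\mathcal{C})^\perp\oplus\mathtt{b}\mathbb{F}_2^n .
\]
These two sets coincide. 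Alternatively, one could cite the analogue of Corollary 1 in \cite{Duality_I2E2H2}, but the direct argument above is self-contained.

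\textbf{Step 2: adding cyclicity, forward direction.} Suppose $\mathcal{C}$ is a self-dual cyclic code. Step 1 gives $\operatorname{res}(\mathcal{C})$ self-dual and $\operatorname{tor}(\mathcal{C})=\mathbb{F}_2^n$. \Cref{thm:cyclic code characterization} gives that $\operatorname{res}(\mathcal{C})$ is cyclic.

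\textbf{Step 3: adding cyclicity, converse direction.} Suppose $\operatorname{res}(\mathcal{C})$ is a self-dual cyclic binary code and $\operatorname{tor}(\mathcal{C})=\mathbb{F}_2^n$. Step 1 shows $\mathcal{C}$ is self-dual and untwisted. Both $\operatorname{res}(\mathcal{C})$ and $\mathbb{F}_2^n$ are cyclic, so \Cref{cor:char-untwisted} shows $\mathcal{C}$ is cyclic.

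The only point requiring care is the untwistedness observation in Step 1. When the torsion code is all of $\mathbb{F}_2^n$, the twist map takes values in the zero quotient, so no compatibility condition remains. I do not expect any real obstacle.
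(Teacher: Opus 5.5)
Your proposal is correct and takes the same route as the paper, which combines the residue/torsion characterization of self-dual $I_2$-codes (Theorem 5 of \cite{Duality_I2E2H2}) with \Cref{thm:cyclic code characterization}. The only difference is that you derive that characterization yourself from \Cref{thm:resC and torC of Cdual}, \Cref{thm:Cdual as direct sum} and \Cref{thm:C as direct sum of resC and torC}, and you make explicit the untwistedness forced by $\operatorname{tor}(\mathcal{C})=\mathbb{F}_2^n$, which the paper's ``immediate'' leaves implicit.
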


\begin{proof}
The proof is immediate from Theorem \ref{thm:cyclic code characterization} and Theorem 5 in \cite{Duality_I2E2H2}.
\end{proof}

\begin{proposition}\label{thm:dual-containing-I2}
Let $\mathcal{C}$ be an $I_2$-code of length $n$. Then $\mathcal{C}^\perp \subseteq \mathcal{C}$ if and only if $\operatorname{res}(\mathcal{C})^\perp \subseteq \operatorname{res}(\mathcal{C})$ and $\operatorname{tor}(\mathcal{C}) = \mathbb{F}_2^n$.
\end{proposition}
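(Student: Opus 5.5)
The plan is to use the explicit description of the dual from \Cref{thm:Cdual as direct sum}, $\mathcal{C}^\perp = \mathtt{a}\operatorname{res}(\mathcal{C})^\perp \oplus \mathtt{b}\mathbb{F}_2^n$, together with \Cref{thm:resC and torC of Cdual}, and to compare residue and torsion codes on both sides of the inclusion $\mathcal{C}^\perp \subseteq \mathcal{C}$.

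\emph{Forward direction.} Assume $\mathcal{C}^\perp \subseteq \mathcal{C}$.
\begin{enumerate}[(1)]
\item Applying $\alpha$ to both sides preserves inclusion. By \Cref{thm:resC and torC of Cdual} this gives $\operatorname{res}(\mathcal{C})^\perp = \operatorname{res}(\mathcal{C}^\perp) \subseteq \operatorname{res}(\mathcal{C})$.
\item The torsion code is monotone under inclusion of codes: if $\mathtt{b}\mathbf{x} \in \mathcal{C}^\perp$, then $\mathtt{b}\mathbf{x} \in \mathcal{C}$.
\item Hence $\mathbb{F}_2^n = \operatorname{tor}(\mathcal{C}^\perp) \subseteq \operatorname{tor}(\mathcal{C})$, so $\operatorname{tor}(\mathcal{C}) = \mathbb{F}_2^n$.
\end{enumerate}

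\emph{Backward direction.} Assume $\operatorname{res}(\mathcal{C})^\perp \subseteq \operatorname{res}(\mathcal{C})$ and $\operatorname{tor}(\mathcal{C}) = \mathbb{F}_2^n$. It suffices to show that both summands $\mathtt{a}\operatorname{res}(\mathcal{C})^\perp$ and $\mathtt{b}\mathbb{F}_2^n$ lie in $\mathcal{C}$.
\begin{enumerate}[(1)]
\item Since $\operatorname{tor}(\mathcal{C}) = \mathbb{F}_2^n$, the definition of the torsion code gives $\mathtt{b}\mathbf{x} \in \mathcal{C}$ for every $\mathbf{x} \in \mathbb{F}_2^n$. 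So $\mathtt{b}\mathbb{F}_2^n \subseteq \mathcal{C}$.
\item For $\mathtt{a}\mathbf{r}$ with $\mathbf{r} \in \operatorname{res}(\mathcal{C})^\perp \subseteq \operatorname{res}(\mathcal{C})$, choose $\mathbf{s}_\mathbf{r}$ with $\mathtt{a}\mathbf{r} + \mathtt{b}\mathbf{s}_\mathbf{r} \in \mathcal{C}$.
\item Since $\mathbf{s}_\mathbf{r} \in \mathbb{F}_2^n = \operatorname{tor}(\mathcal{C})$, we have $\mathtt{b}\mathbf{s}_\mathbf{r} \in \mathcal{C}$.
\item Subtracting, $\mathtt{a}\mathbf{r} \in \mathcal{C}$.
\end{enumerate}
Both summands therefore lie in $\mathcal{C}$. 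Because $\mathcal{C}$ is closed under addition, $\mathcal{C}^\perp \subseteq \mathcal{C}$.

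The key point of the backward direction is that any twist is absorbed once the torsion code is all of $\mathbb{F}_2^n$. Equivalently, $\tau$ is trivial and $\mathcal{C}$ is untwisted by \Cref{lem: untwisted I2-codes equiv statements}, since every coset $\tau(\mathbf{r})$ equals $\operatorname{tor}(\mathcal{C})$.
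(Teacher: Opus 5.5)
Your proof is correct and follows essentially the same route as the paper. Both directions rest on the description $\mathcal{C}^\perp = \mathtt{a}\operatorname{res}(\mathcal{C})^\perp \oplus \mathtt{b}\mathbb{F}_2^n$, and your backward direction absorbs the twist $\mathtt{b}\mathbf{s}_{\mathbf{r}}$ using $\operatorname{tor}(\mathcal{C}) = \mathbb{F}_2^n$ exactly as the paper does. The only difference is cosmetic and in the forward direction: you use monotonicity of $\operatorname{res}$ and $\operatorname{tor}$ under inclusion together with $\operatorname{res}(\mathcal{C}^\perp)=\operatorname{res}(\mathcal{C})^\perp$ and $\operatorname{tor}(\mathcal{C}^\perp)=\mathbb{F}_2^n$. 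The paper instead matches the coefficients of $\mathtt{a}\mathbf{u}+\mathtt{b}\mathbf{v}$ against $\mathtt{a}\mathbf{r}+\mathtt{b}(\mathbf{s}_{\mathbf{r}}+\mathbf{t})$, and your phrasing is slightly cleaner.
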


\begin{proof}
Suppose $\mathcal{C}^\perp \subseteq \mathcal{C}$. Let $\mathbf{u} \in \operatorname{res}(\mathcal{C})^\perp$ and $\mathbf{v} \in \mathbb{F}_2^n$. Then $\mathtt{a}\mathbf{u} + \mathtt{b}\mathbf{v} \in \mathcal{C}^\perp \subseteq \mathcal{C}$. Thus, there exist $\mathbf{r} \in \operatorname{res}(\mathcal{C})$ and $\mathbf{t} \in \operatorname{tor} (\mathcal{C})$ such that $\mathtt{a}\mathbf{u} + \mathtt{b}\mathbf{v}
= \mathtt{a}\mathbf{r} + \mathtt{b}(\mathbf{s}_{\mathbf{r}} + \mathbf{t})$. Since representations in $I_2$ are unique coefficientwise, we obtain $\mathbf{u} = \mathbf{r}$ and $\mathbf{v} = \mathbf{s}_{\mathbf{r}} + \mathbf{t}$. Thus, $\mathbf{u} \in \operatorname{res}(\mathcal{C})$, and so $\operatorname{res}(\mathcal{C})^\perp \subseteq \operatorname{res}(\mathcal{C})$. Moreover, the equality $\mathbf{v} = \mathbf{s}_{\mathbf{u}} + \mathbf{t}$ holds for every $\mathbf{v} \in \mathbb{F}_2^n$, implying that $\mathbf{s}_{\mathbf{u}} + \operatorname{tor}(\mathcal{C}) = \mathbb{F}_2^n$. Hence, $\operatorname{tor}(\mathcal{C}) = \mathbb{F}_2^n$.

Conversely, suppose $\operatorname{res}(\mathcal{C})^\perp \subseteq \operatorname{res}(\mathcal{C})$ and $\operatorname{tor}(\mathcal{C}) = \mathbb{F}_2^n$. Let $\mathtt{a}\mathbf{u} + \mathtt{b}\mathbf{v} \in \mathcal{C}^\perp$. Then $\mathbf{u} \in \operatorname{res}(\mathcal{C})^\perp \subseteq \operatorname{res}(\mathcal{C})$. Thus, there exists $\mathbf{t} \in \operatorname{tor}(\mathcal{C})$ such that $\mathtt{a}\mathbf{u} + \mathtt{b}(\mathbf{s}_{\mathbf{u}} + \mathbf{t}) \in \mathcal{C}$. Since $\operatorname{tor}(\mathcal{C}) = \mathbb{F}_2^n$, choose $\mathbf{t} = \mathbf{v}-\mathbf{s}_{\mathbf{u}}$, giving $\mathtt{a}\mathbf{u} + \mathtt{b}\mathbf{v} \in \mathcal{C}$. Therefore, $\mathcal{C}^\perp \subseteq \mathcal{C}$.
\end{proof}

\section{Gray Images  of $\boldsymbol{I_2}$-Codes}
\label{sec: Gray images of I2-codes}

For $r = x \mathtt{a} + y \mathtt{b} \in  I_2$ where $x , y \in \mathbb{F}_2$, define the Gray map $\Phi \colon I_2 \to \mathbb{F}_2^2$ by $\Phi ( r  )  \coloneqq ( x , x + y )$. It can be seen that the map $\Phi$ is a bijection. We extend $\Phi$ naturally from $I_2^n$ to $\mathbb{F}_2^{2n}$, i.e., for any $\mathbf{r} = (r_0 , r_1 , \dotsc , r_{n-1}) \in I_2^n$ where $r_i =  \mathtt{a}  u_i +  \mathtt{b}  v_i$, $0 \leq i \leq n-1$, define $\Phi (\mathbf{r}) = (u (\mathbf{r}) \, | \, u (\mathbf{r}) + v (\mathbf{r}))$ where $u (\mathbf{r}) = (u_0 , u_1 , \dotsc , u_{n-1})$, $v (\mathbf{r}) = (v_0 , v_1 , \dotsc , v_{n-1})$, and ``$|$'' denotes the vector concatenation. It can be checked that the map $\Phi$ is $\mathbb{F}_2$-linear and is a weight-preserving map from $(I_2^n, \operatorname{wt}_L)$ to $(\mathbb{F}_2^{2n},  \operatorname{wt}_H)$, i.e., $\operatorname{wt}_L ( \mathbf{x}  )  =  \operatorname{wt}_H  ( \Phi ( \mathbf{x} )  )$.

We are now ready to characterize the Gray image of an $I_2$-code.

\begin{theorem}   \label{thm:GrayImageGeneral}
Let $\mathcal{C} \subseteq I_2^n$ be an $I_2$-code. For each
$\mathbf{r} \in \operatorname{res}(\mathcal{C})$, choose a representative
$\mathbf{s}_{\mathbf{r}} \in \mathbb{F}_2^n$ of the coset
$\tau (\mathbf{r}) \in \mathbb{F}_2^n / \operatorname{tor}(\mathcal{C})$. Then the Gray image of
$\mathcal{C}$ is given by
$
\Phi(\mathcal{C})
=
\{
(\mathbf{r} \, | \,  \mathbf{r} + \mathbf{s}_{\mathbf{r}} + \mathbf{t})
\mathrel{:}
\mathbf{r} \in \operatorname{res}(\mathcal{C}), \
\mathbf{t} \in \operatorname{tor}(\mathcal{C})
\}
$.
Moreover, this expression is independent of the choice of representatives $\mathbf{s}_{\mathbf{r}}$.
\end{theorem}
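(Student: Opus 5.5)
The plan is to combine the structural decomposition of $I_2$-codes from \Cref{sec:cyclic-I2} with the coordinatewise definition of $\Phi$. Recall that every codeword $\mathbf{c}\in\mathcal{C}$ can be written as $\mathbf{c}=\mathtt{a}\mathbf{r}+\mathtt{b}(\mathbf{s}_{\mathbf{r}}+\mathbf{t})$ with $\mathbf{r}=\alpha(\mathbf{c})\in\operatorname{res}(\mathcal{C})$ and $\mathbf{t}\in\operatorname{tor}(\mathcal{C})$. Conversely, every such expression lies in $\mathcal{C}$: we have $\mathtt{a}\mathbf{r}+\mathtt{b}\mathbf{s}_{\mathbf{r}}\in\mathcal{C}$ by the choice of $\mathbf{s}_{\mathbf{r}}$, and $\mathtt{b}\mathbf{t}\in\mathcal{C}$ by the definition of torsion. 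Since $\mathcal{C}$ is closed under addition, the sum is in $\mathcal{C}$. This gives the description $\mathcal{C}=\{\mathtt{a}\mathbf{r}+\mathtt{b}(\mathbf{s}_{\mathbf{r}}+\mathbf{t})\}$, where the chosen representative $\mathbf{s}_{\mathbf{r}}$ is fixed for each $\mathbf{r}$.

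Next, apply $\Phi$. For $\mathbf{x}=\mathtt{a}\mathbf{u}+\mathtt{b}\mathbf{v}$ the definition gives $\Phi(\mathbf{x})=(\mathbf{u}\mid\mathbf{u}+\mathbf{v})$. Here I should note that the coefficient vectors $u(\mathbf{x}),v(\mathbf{x})$ are uniquely determined by $\mathbf{x}$, since $\{\mathtt{a},\mathtt{b}\}$ is an $\mathbb{F}_2$-basis of $I_2$. Hence for $\mathbf{c}=\mathtt{a}\mathbf{r}+\mathtt{b}(\mathbf{s}_{\mathbf{r}}+\mathbf{t})$ we read off $u(\mathbf{c})=\mathbf{r}$ and $v(\mathbf{c})=\mathbf{s}_{\mathbf{r}}+\mathbf{t}$. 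Therefore $\Phi(\mathbf{c})=(\mathbf{r}\mid\mathbf{r}+\mathbf{s}_{\mathbf{r}}+\mathbf{t})$. Taking images of both inclusions in the set description of $\mathcal{C}$ yields the stated equality for $\Phi(\mathcal{C})$.

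For independence of representatives, suppose $\mathbf{s}'_{\mathbf{r}}$ is another representative of $\tau(\mathbf{r})$. Then $\mathbf{s}'_{\mathbf{r}}=\mathbf{s}_{\mathbf{r}}+\mathbf{t}_0$ for some $\mathbf{t}_0\in\operatorname{tor}(\mathcal{C})$. As $\mathbf{t}$ ranges over $\operatorname{tor}(\mathcal{C})$, so does $\mathbf{t}_0+\mathbf{t}$, because $\operatorname{tor}(\mathcal{C})$ is a linear code. So for each fixed $\mathbf{r}$ the fibre $\{(\mathbf{r}\mid\mathbf{r}+\mathbf{s}_{\mathbf{r}}+\mathbf{t})\}$ is the same set, namely $(\mathbf{r}\mid\mathbf{r}+\tau(\mathbf{r}))$ read as a coset. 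Taking the union over $\mathbf{r}$ gives the same set. Alternatively, one can simply observe that the left-hand side $\Phi(\mathcal{C})$ does not depend on any choice at all.

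No step is a real obstacle. The only point needing care is justifying the coefficientwise uniqueness of $\mathtt{a}\mathbf{u}+\mathtt{b}\mathbf{v}$, which is what allows $\Phi$ to be evaluated on the decomposition.
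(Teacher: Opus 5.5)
Your proposal is correct and follows essentially the same route as the paper. Both arguments decompose each codeword as $\mathtt{a}\mathbf{r}+\mathtt{b}(\mathbf{s}_{\mathbf{r}}+\mathbf{t})$, use closure of $\mathcal{C}$ under addition for the reverse inclusion, read off $\Phi$ coefficientwise, and get independence by absorbing $\mathbf{t}_0$ into $\operatorname{tor}(\mathcal{C})$; your justification that the $\mathtt{a},\mathtt{b}$-coefficients are unique, and your remark that independence is automatic because $\Phi(\mathcal{C})$ involves no choices, are small but welcome sharpenings of the paper's argument.
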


\begin{proof}
Recall that for every codeword $\mathbf{c} \in \mathcal{C}$ can be written uniquely as
$
\mathbf{c}
=
\mathtt{a}\mathbf{r} + \mathtt{b}(\mathbf{s}_{\mathbf{r}} + \mathbf{t})
$
where $\mathbf{r} \in \operatorname{res}(\mathcal{C})$ and $\mathbf{t} \in \operatorname{tor}(\mathcal{C})$. Applying the Gray map, we obtain
$
\Phi(\mathbf{c})
=
(\mathbf{r} \, | \,  \mathbf{r} + \mathbf{s}_{\mathbf{r}} + \mathbf{t}),
$
establishing
$
\Phi(\mathcal{C})
\subseteq
\{
(\mathbf{r} \, | \, \mathbf{r} + \mathbf{s}_{\mathbf{r}} + \mathbf{t})  \mathrel{:}
\mathbf{r} \in \operatorname{res}(\mathcal{C}),\
\mathbf{t} \in \operatorname{tor}(\mathcal{C})
\}
$.

Conversely, for any $(\mathbf{r} \, | \, \mathbf{r} + \mathbf{s}_{\mathbf{r}} + \mathbf{t})  \in
\{
(\mathbf{r} \, | \,  \mathbf{r} + \mathbf{s}_{\mathbf{r}} + \mathbf{t})  \mathrel{:}
\mathbf{r} \in \operatorname{res}(\mathcal{C}),\
\mathbf{t} \in \operatorname{tor}(\mathcal{C})
\}$, we have $\mathtt{a} \mathbf{r} + \mathtt{b} {\mathbf{s_r}} \in \mathcal{C}$ and $\mathtt{b} \mathbf{t} \in \mathcal{C}$. Because $\mathcal{C}$ is closed under addition, $\mathtt{a} \mathbf{r} + \mathtt{b} {\mathbf{s_r}} + \mathtt{b} \mathbf{t} \in \mathcal{C}$. Hence,  $\mathtt{a}\mathbf{r} + \mathtt{b} (\mathbf{s}_{\mathbf{r}} + \mathbf{t}) \in \mathcal{C}
$ and $
(\mathbf{r} \, | \, \mathbf{r} + \mathbf{s}_{\mathbf{r}} + \mathbf{t})  = \Phi (  \mathtt{a}\mathbf{r} + \mathtt{b} (\mathbf{s}_{\mathbf{r}} + \mathbf{t})  ) \in \Phi ( \mathcal{C} )$.
Therefore,
$
\{
(\mathbf{r} \, | \,  \mathbf{r} + \mathbf{s}_{\mathbf{r}} + \mathbf{t})  \mathrel{:}
\mathbf{r} \in \operatorname{res}(\mathcal{C}),\
\mathbf{t} \in \operatorname{tor}(\mathcal{C})
\}
\subseteq
\Phi ( \mathcal{C} )
$.
Finally, if another system of representatives satisfies
$\mathbf{s}'_{\mathbf{r}} = \mathbf{s}_{\mathbf{r}} + \mathbf{t}_{\mathbf{r}}$
for some $\mathbf{t}_{\mathbf{r}} \in \operatorname{tor}(\mathcal{C})$, then
\begin{align*}
(\mathbf{r} \, | \, \mathbf{r} + \mathbf{s}'_{\mathbf{r}} + \mathbf{t})
=
(\mathbf{r} \, | \, \mathbf{r} + \mathbf{s}_{\mathbf{r}} + (\mathbf{t} + \mathbf{t}_{\mathbf{r}})),
\end{align*}
and since $\mathbf{t}$ ranges over $\operatorname{tor}(\mathcal{C})$, the resulting set remains unchanged.
\end{proof}

\begin{corollary}  \label{cor:Gray image of dual of C}
Let $\mathcal{C}$ be an $I_2$-code of length $n$.  Then
$
\Phi(\mathcal{C}^\perp) = \{(\mathbf{r} \, | \,   \mathbf{r}  +  \mathbf{t}) \mathrel{:}   \mathbf{r}  \in\operatorname{res} ( \mathcal{C} )^\perp, \,  \mathbf{t}  \in  \mathbb{F} _2^n \}
$.
\end{corollary}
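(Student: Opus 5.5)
The plan is to specialize \Cref{thm:GrayImageGeneral} to the code $\mathcal{C}^\perp$, using the structural description of the dual in \Cref{thm:resC and torC of Cdual} and \Cref{thm:Cdual as direct sum}. \Cref{thm:GrayImageGeneral} applies to any $I_2$-code, and $\mathcal{C}^\perp$ is an $I_2$-submodule of $I_2^n$. Hence
\begin{align*}
\Phi(\mathcal{C}^\perp) = \{ (\mathbf{r} \,|\, \mathbf{r} + \mathbf{s}_{\mathbf{r}} + \mathbf{t}) \mathrel{:} \mathbf{r} \in \operatorname{res}(\mathcal{C}^\perp),\ \mathbf{t} \in \operatorname{tor}(\mathcal{C}^\perp) \},
\end{align*}
where $\mathbf{s}_{\mathbf{r}}$ is any representative of the twist coset of $\mathbf{r}$ computed in $\mathcal{C}^\perp$.

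Next I substitute the dual's invariants. By \Cref{thm:resC and torC of Cdual}, $\operatorname{res}(\mathcal{C}^\perp) = \operatorname{res}(\mathcal{C})^\perp$ and $\operatorname{tor}(\mathcal{C}^\perp) = \mathbb{F}_2^n$. Since the torsion code is all of $\mathbb{F}_2^n$, the quotient $\mathbb{F}_2^n/\operatorname{tor}(\mathcal{C}^\perp)$ is trivial. So every twist coset is $\mathbb{F}_2^n$ itself and we may take $\mathbf{s}_{\mathbf{r}} = \mathbf{0}$. Alternatively, \Cref{thm:Cdual as direct sum} shows directly that $\mathcal{C}^\perp$ is untwisted, so $\mathtt{a}\mathbf{r} \in \mathcal{C}^\perp$ and $\mathbf{0}$ is a valid choice of $\mathbf{s}_{\mathbf{r}}$.

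By the independence-of-representatives clause of \Cref{thm:GrayImageGeneral}, this choice gives exactly $\{(\mathbf{r} \,|\, \mathbf{r}+\mathbf{t}) : \mathbf{r} \in \operatorname{res}(\mathcal{C})^\perp,\ \mathbf{t} \in \mathbb{F}_2^n\}$, which is the claim. There is no real obstacle here. The only point needing care is to justify that $\mathbf{s}_{\mathbf{r}}$ can be taken to be zero, and the triviality of the quotient handles that.
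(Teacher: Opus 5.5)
Your proposal is correct and matches the paper's proof: the paper likewise uses \Cref{thm:resC and torC of Cdual} and \Cref{thm:Cdual as direct sum} to see that $\mathcal{C}^\perp$ is untwisted, takes $\mathbf{s}_{\mathbf{r}} = \mathbf{0}$, and applies \Cref{thm:GrayImageGeneral}. Your added remark is also valid: since $\operatorname{tor}(\mathcal{C}^\perp) = \mathbb{F}_2^n$, the quotient $\mathbb{F}_2^n/\operatorname{tor}(\mathcal{C}^\perp)$ is trivial, so $\mathbf{0}$ is a representative of every twist coset, and this gives the same choice of $\mathbf{s}_{\mathbf{r}}$ a little more directly.
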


\begin{proof}
From Theorems \ref{thm:resC and torC of Cdual} and \ref{thm:Cdual as direct sum}, it follows that $\mathcal{C}^\perp$ is an untwisted $I_2$-code. Thus, we can set $\mathbf{s}_{\mathbf{r}} = \mathbf{0}$ for all $\mathbf{r} \in \operatorname{res} ( \mathcal{C}^\perp )$, and the result follows from \Cref{thm:GrayImageGeneral}.
\end{proof}

%

\begin{theorem}   \label{thm:dual-phi}
Let $\mathcal{C} \subseteq I_2^n$ be an $I_2$-code. For each
$\mathbf{r} \in \operatorname{res}(\mathcal{C})$, choose a representative
$\mathbf{s}_{\mathbf{r}} \in \mathbb{F}_2^n$ of the coset
$\tau (\mathbf{r}) \in \mathbb{F}_2^n / \operatorname{tor}(\mathcal{C})$. Then the dual of the Gray image of $\mathcal{C}$ is given by
$
\Phi(\mathcal{C})^\perp
=
\{
(\mathbf{u} \, | \, \mathbf{v})    \mathrel{:}
\mathbf{v} \in \operatorname{tor} ( \mathcal{C} )^\perp
,  \,
(\mathbf{u} + \mathbf{v}) \cdot \mathbf{r}
=
\mathbf{v} \cdot \mathbf{s}_{\mathbf{r}}  \
\forall \, \mathbf{r} \in \operatorname{res} ( \mathcal{C} )
\}
$.
Moreover, this expression is independent of the choice of representatives $\mathbf{s}_{\mathbf{r}}$.
\end{theorem}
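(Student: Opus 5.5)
We will use the description of $\Phi(\mathcal{C})$ from \Cref{thm:GrayImageGeneral}, namely $\Phi(\mathcal{C}) = \{(\mathbf{r} \,|\, \mathbf{r}+\mathbf{s}_{\mathbf{r}}+\mathbf{t}) : \mathbf{r}\in\operatorname{res}(\mathcal{C}),\ \mathbf{t}\in\operatorname{tor}(\mathcal{C})\}$. A vector $(\mathbf{u}\,|\,\mathbf{v}) \in \mathbb{F}_2^{2n}$ lies in $\Phi(\mathcal{C})^\perp$ exactly when
\begin{align*}
\mathbf{u}\cdot\mathbf{r} + \mathbf{v}\cdot(\mathbf{r}+\mathbf{s}_{\mathbf{r}}+\mathbf{t}) = 0 \quad \text{for all } \mathbf{r}\in\operatorname{res}(\mathcal{C}),\ \mathbf{t}\in\operatorname{tor}(\mathcal{C}).
\end{align*}
The plan is to separate this single family of conditions into the two conditions appearing in the statement, and then prove the reverse inclusion.

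First we specialize. Taking $\mathbf{r}=\mathbf{0}$, we may take $\mathbf{s}_{\mathbf{0}}=\mathbf{0}$ by \Cref{lem: untwisted I2-codes equiv statements}-style reasoning: $\mathtt{a}\mathbf{0}+\mathtt{b}\mathbf{0}\in\mathcal{C}$, and the set in \Cref{thm:GrayImageGeneral} is independent of representatives. This yields $\mathbf{v}\cdot\mathbf{t}=0$ for all $\mathbf{t}\in\operatorname{tor}(\mathcal{C})$, that is, $\mathbf{v}\in\operatorname{tor}(\mathcal{C})^\perp$. Next we take $\mathbf{t}=\mathbf{0}$ with $\mathbf{r}$ arbitrary. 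Over $\mathbb{F}_2$, where $-1=1$, this gives $(\mathbf{u}+\mathbf{v})\cdot\mathbf{r} = \mathbf{v}\cdot\mathbf{s}_{\mathbf{r}}$. Hence $\Phi(\mathcal{C})^\perp$ is contained in the set described in the statement.

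For the reverse inclusion, suppose $\mathbf{v}\in\operatorname{tor}(\mathcal{C})^\perp$ and the identity holds for every $\mathbf{r}$. Then for any $\mathbf{r}$ and $\mathbf{t}$ the full inner product equals $(\mathbf{u}+\mathbf{v})\cdot\mathbf{r} + \mathbf{v}\cdot\mathbf{s}_{\mathbf{r}} + \mathbf{v}\cdot\mathbf{t}$. The first two terms cancel, and the last term vanishes, so the inner product is $0$.

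Independence of the choice of representatives is the only point that needs care, and it follows from the first condition. If $\mathbf{s}'_{\mathbf{r}}=\mathbf{s}_{\mathbf{r}}+\mathbf{t}_{\mathbf{r}}$ with $\mathbf{t}_{\mathbf{r}}\in\operatorname{tor}(\mathcal{C})$, then $\mathbf{v}\cdot\mathbf{s}'_{\mathbf{r}}=\mathbf{v}\cdot\mathbf{s}_{\mathbf{r}}$ whenever $\mathbf{v}\in\operatorname{tor}(\mathcal{C})^\perp$. The second condition is therefore unchanged, and since $\Phi(\mathcal{C})$ itself does not depend on the representatives, neither does the description of its dual. We do not expect a real obstacle here. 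The only subtlety is to justify $\mathbf{s}_{\mathbf{0}}\in\operatorname{tor}(\mathcal{C})$, which holds because $\tau$ is linear, or alternatively to note that the condition with $\mathbf{r}=\mathbf{0}$ and varying $\mathbf{t}$ still forces $\mathbf{v}\cdot\mathbf{t}$ to be constant, hence $0$, since the value at $\mathbf{t}=\mathbf{0}$ combined with linearity in $\mathbf{t}$ forces it.
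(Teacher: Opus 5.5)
Your proposal is correct and follows essentially the same route as the paper: expand the inner product against $(\mathbf{r}\,|\,\mathbf{r}+\mathbf{s}_{\mathbf{r}}+\mathbf{t})$, isolate $\mathbf{v}\in\operatorname{tor}(\mathcal{C})^\perp$ by varying $\mathbf{t}$, read off the residue condition $(\mathbf{u}+\mathbf{v})\cdot\mathbf{r}=\mathbf{v}\cdot\mathbf{s}_{\mathbf{r}}$, and get independence of representatives from $\mathbf{v}\cdot\mathbf{t}_{\mathbf{r}}=0$. Your explicit handling of $\mathbf{s}_{\mathbf{0}}\in\operatorname{tor}(\mathcal{C})$ and of the reverse inclusion makes precise steps the paper leaves implicit.
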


\begin{proof}
From \Cref{thm:GrayImageGeneral}, we have
\begin{align*}
\Phi(\mathcal{C})
=
\{
(\mathbf{r} \, |  \,  \mathbf{r} + \mathbf{s}_{\mathbf{r}} + \mathbf{t})
\mathrel{:}
\mathbf{r} \in \operatorname{res}(\mathcal{C})  ,  \
\mathbf{t} \in \operatorname{tor}(\mathcal{C})
\}  .
\end{align*}
Let $(\mathbf{u} \, | \, \mathbf{v}) \in \mathbb{F}_2^{2n}$. Then
$(\mathbf{u} \, | \, \mathbf{v}) \in \Phi(\mathcal{C})^\perp$ if and only if
$
(\mathbf{u} \, | \,  \mathbf{v})
\cdot
(\mathbf{r} \, |  \,  \mathbf{r} + \mathbf{s}_{\mathbf{r}} + \mathbf{t})
= 0
$
for all $\mathbf{r} \in \operatorname{res}(\mathcal{C})$ and $\mathbf{t} \in \operatorname{tor}(\mathcal{C})$. Computing the inner product
gives
\begin{align*}
0
=
(\mathbf{u} \, | \,  \mathbf{v})
\cdot
(\mathbf{r} \, |  \,  \mathbf{r} + \mathbf{s}_{\mathbf{r}} + \mathbf{t})    
=
\mathbf{u} \cdot \mathbf{r}
+
\mathbf{v} \cdot (\mathbf{r} + \mathbf{s}_{\mathbf{r}} + \mathbf{t})   
=
(\mathbf{u} + \mathbf{v}) \cdot \mathbf{r}
+
\mathbf{v} \cdot \mathbf{s}_{\mathbf{r}}
+
\mathbf{v} \cdot \mathbf{t}
\end{align*}
for all $\mathbf{r} \in \operatorname{res}(\mathcal{C})$ and $\mathbf{t} \in \operatorname{tor}(\mathcal{C})$.

First, fix $\mathbf{r} \in \operatorname{res}(\mathcal{C})$ and let $\mathbf{t}$ vary over $\operatorname{tor}(\mathcal{C})$. The above
expression must vanish for all $\mathbf{t} \in \operatorname{tor}(\mathcal{C})$, so we must have
$\mathbf{v} \cdot \mathbf{t} = 0$ for all $\mathbf{t} \in \operatorname{tor}(\mathcal{C})$, that is,
$
\mathbf{v} \in \operatorname{tor}(\mathcal{C})^\perp.
$
With this condition imposed, the orthogonality condition simplifies to
$
(\mathbf{u} + \mathbf{v}) \cdot \mathbf{r}
+
\mathbf{v} \cdot \mathbf{s}_{\mathbf{r}}
=
0
$
for all $\mathbf{r} \in \operatorname{res}(\mathcal{C})$,
or equivalently,
$
(\mathbf{u} + \mathbf{v}) \cdot \mathbf{r}
=
\mathbf{v} \cdot \mathbf{s}_{\mathbf{r}}
$
for all $\mathbf{r} \in \operatorname{res}(\mathcal{C})$.
Thus, $(\mathbf{u} \, | \, \mathbf{v}) \in \Phi(\mathcal{C})^\perp$ if and only if
$\mathbf{v} \in \operatorname{tor}(\mathcal{C})^\perp$ and
$
(\mathbf{u} + \mathbf{v}) \cdot \mathbf{r}
=
\mathbf{v} \cdot \mathbf{s}_{\mathbf{r}}
$
for all  $\mathbf{r} \in \operatorname{res}(\mathcal{C})
$,
which yields the claimed description of $\Phi(\mathcal{C})^\perp$.

Finally, we check that this description is independent of the choice of
representatives $\mathbf{s}_{\mathbf{r}}$. Suppose we replace each
$\mathbf{s}_{\mathbf{r}}$ by
$
\mathbf{s}'_{\mathbf{r}} = \mathbf{s}_{\mathbf{r}} + \mathbf{t}_{\mathbf{r}}
$
where $\mathbf{t}_{\mathbf{r}} \in \operatorname{tor}(\mathcal{C})$.
For $\mathbf{v} \in \operatorname{tor}(\mathcal{C})^\perp$, we have
\begin{align*}
\mathbf{v} \cdot \mathbf{s}'_{\mathbf{r}}
=
\mathbf{v} \cdot \mathbf{s}_{\mathbf{r}}
+
\mathbf{v} \cdot \mathbf{t}_{\mathbf{r}}
=
\mathbf{v} \cdot \mathbf{s}_{\mathbf{r}},
\end{align*}
since $\mathbf{v} \cdot \mathbf{t}_{\mathbf{r}} = 0$ for all
$\mathbf{t}_{\mathbf{r}} \in \operatorname{tor}(\mathcal{C})$. Hence, the condition
$
(\mathbf{u} + \mathbf{v}) \cdot \mathbf{r}
=
\mathbf{v} \cdot \mathbf{s}_{\mathbf{r}}
$
for all $\mathbf{r} \in \operatorname{res}(\mathcal{C})$
is equivalent to
$
(\mathbf{u} + \mathbf{v}) \cdot \mathbf{r}
=
\mathbf{v} \cdot \mathbf{s}'_{\mathbf{r}}
$
for all $\mathbf{r} \in \operatorname{res}(\mathcal{C})
$,
so the resulting set $\Phi(\mathcal{C})^\perp$ does not depend on the chosen
system of representatives.
\end{proof}

\begin{proposition}  \label{thm:resCdual subset of resC}
Let $\mathcal{C}$ be an untwisted $I_2$-code of length $n$.  Then $\Phi(\mathcal{C})^\perp \subseteq \Phi(\mathcal{C})$ if and only if $\operatorname{res} (\mathcal{C})^\perp \subseteq \operatorname{tor} (\mathcal{C})$ and $\operatorname{res} (\mathcal{C})^\perp + \operatorname{tor} (\mathcal{C})^\perp \subseteq \operatorname{res} (\mathcal{C})$.
\end{proposition}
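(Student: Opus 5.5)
Since $\mathcal{C}$ is untwisted, I would take $\mathbf{s}_{\mathbf{r}}=\mathbf{0}$ for every $\mathbf{r}$ (\Cref{lem: untwisted I2-codes equiv statements}) and write both $\Phi(\mathcal{C})$ and $\Phi(\mathcal{C})^\perp$ explicitly, so that the inclusion becomes a pair of membership conditions. Write $R=\operatorname{res}(\mathcal{C})$ and $T=\operatorname{tor}(\mathcal{C})$.

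By \Cref{thm:GrayImageGeneral},
\begin{align*}
\Phi(\mathcal{C})=\{(\mathbf{r}\,|\,\mathbf{r}+\mathbf{t}) \mathrel{:} \mathbf{r}\in R,\ \mathbf{t}\in T\}.
\end{align*}
By \Cref{thm:dual-phi} with $\mathbf{s}_{\mathbf{r}}=\mathbf{0}$, the pair $(\mathbf{u}\,|\,\mathbf{v})$ lies in $\Phi(\mathcal{C})^\perp$ exactly when $\mathbf{v}\in T^\perp$ and $(\mathbf{u}+\mathbf{v})\cdot\mathbf{r}=0$ for all $\mathbf{r}\in R$, that is, $\mathbf{u}+\mathbf{v}\in R^\perp$. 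Substituting $\mathbf{w}=\mathbf{u}+\mathbf{v}$ gives the parametrization
\begin{align*}
\Phi(\mathcal{C})^\perp=\{(\mathbf{w}+\mathbf{v}\,|\,\mathbf{v}) \mathrel{:} \mathbf{w}\in R^\perp,\ \mathbf{v}\in T^\perp\}.
\end{align*}

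Next I would characterize when a vector $(\mathbf{x}\,|\,\mathbf{y})$ lies in $\Phi(\mathcal{C})$. This happens exactly when $\mathbf{x}\in R$ and $\mathbf{x}+\mathbf{y}\in T$: given these, set $\mathbf{r}=\mathbf{x}$ and $\mathbf{t}=\mathbf{x}+\mathbf{y}$, and the converse is immediate. Applying this to $(\mathbf{w}+\mathbf{v}\,|\,\mathbf{v})$, and using that the sum of the two halves is $\mathbf{w}$, the inclusion $\Phi(\mathcal{C})^\perp\subseteq\Phi(\mathcal{C})$ is equivalent to
\begin{align*}
\mathbf{w}+\mathbf{v}\in R \ \text{ and } \ \mathbf{w}\in T \qquad \text{for all } \mathbf{w}\in R^\perp,\ \mathbf{v}\in T^\perp .
\end{align*}
Since $R^\perp+T^\perp=\{\mathbf{w}+\mathbf{v}\mathrel{:}\mathbf{w}\in R^\perp,\ \mathbf{v}\in T^\perp\}$, this is literally the pair of conditions $R^\perp\subseteq T$ and $R^\perp+T^\perp\subseteq R$, which proves both directions at once.

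The only delicate step is getting the dual's parametrization right from \Cref{thm:dual-phi}, in particular the change of variables $\mathbf{w}=\mathbf{u}+\mathbf{v}$. After that the argument is bookkeeping. As a consistency check, setting $\mathbf{v}=\mathbf{0}$ shows the condition forces $R^\perp\subseteq R$, which is already contained in $R^\perp+T^\perp\subseteq R$, so no separate hypothesis is needed.
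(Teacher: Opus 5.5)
Your proposal is correct and follows essentially the paper's proof. Both set $\mathbf{s}_{\mathbf{r}}=\mathbf{0}$, write $\Phi(\mathcal{C})^\perp=\{(\mathbf{u}\,|\,\mathbf{v}) : \mathbf{v}\in\operatorname{tor}(\mathcal{C})^\perp,\ \mathbf{u}+\mathbf{v}\in\operatorname{res}(\mathcal{C})^\perp\}$, and match components via $\mathbf{r}=\mathbf{w}+\mathbf{v}$, $\mathbf{t}=\mathbf{w}$. Your membership criterion $(\mathbf{x}\,|\,\mathbf{y})\in\Phi(\mathcal{C})\iff \mathbf{x}\in\operatorname{res}(\mathcal{C}),\ \mathbf{x}+\mathbf{y}\in\operatorname{tor}(\mathcal{C})$ only packages the paper's two separate directions into a single chain of equivalences.
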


\begin{proof}
Let $\mathcal{C}$ be an untwisted $I_2$-code. Then we can set $\mathbf{s}_{\mathbf{r}} = \mathbf{0}$ for all $\mathbf{x} \in \operatorname{res} ( \mathcal{C} )$. Thus, from Theorems \ref{thm:GrayImageGeneral} and \ref{thm:dual-phi}, we obtain
\begin{align*}
\Phi(\mathcal{C})
&=
\{ (\mathbf{r} \, | \, \mathbf{r} + \mathbf{t}) \mathrel{:}  \mathbf{r} \in \operatorname{res}(\mathcal{C}),\ \mathbf{t} \in \operatorname{tor}(\mathcal{C}) \}
\end{align*}
and
\begin{align*}
\Phi(\mathcal{C})^\perp
&=
\{ (\mathbf{u} \, | \, \mathbf{v}) \mathrel{:}  \mathbf{v} \in \operatorname{tor}(\mathcal{C})^\perp,\ \mathbf{u} + \mathbf{v} \in \operatorname{res}(\mathcal{C})^\perp \}.
\end{align*}

Assume $\Phi(\mathcal{C})^\perp \subseteq \Phi(\mathcal{C})$.
Let $\mathbf{u} \in \operatorname{res}(\mathcal{C})^\perp$ and $\mathbf{v} \in \operatorname{tor}(\mathcal{C})^\perp$. Then
$
(\mathbf{u} + \mathbf{v} \mid \mathbf{v}) \in \Phi(\mathcal{C})^\perp
\subseteq \Phi(\mathcal{C})
$,
so there exists $\mathbf{r} \in \operatorname{res}(\mathcal{C})$ and $\mathbf{t} \in \operatorname{tor}(\mathcal{C})$ such that
$
(\mathbf{u} + \mathbf{v} \, | \, \mathbf{v})
= (\mathbf{r} \, | \, \mathbf{r} + \mathbf{t})
$.
Hence,
$
\mathbf{r} = \mathbf{u} + \mathbf{v}
$,
$\mathbf{v} = \mathbf{r} + \mathbf{t} = \mathbf{u} + \mathbf{v} + \mathbf{t}
$,
and so
$
\mathbf{t} = \mathbf{u}
$.
Thus, $\mathbf{u} \in \operatorname{tor}(\mathcal{C})$, and since $\mathbf{r} = \mathbf{u} + \mathbf{v} \in \operatorname{res}(\mathcal{C})$ for all $\mathbf{u} \in \operatorname{res}(\mathcal{C})^\perp$ and $\mathbf{v} \in \operatorname{tor}(\mathcal{C})^\perp$, we have
$
\operatorname{res}(\mathcal{C})^\perp  \subseteq \operatorname{tor}(\mathcal{C})$
and
$
\operatorname{res}(\mathcal{C})^\perp + \operatorname{tor}(\mathcal{C})^\perp \subseteq \operatorname{res}(\mathcal{C})
$.

Conversely, assume that
$
\operatorname{res}(\mathcal{C})^\perp \subseteq \operatorname{tor}(\mathcal{C})
$
and
$
\operatorname{res}(\mathcal{C})^\perp + \operatorname{tor}(\mathcal{C})^\perp \subseteq \operatorname{res}(\mathcal{C})
$.
Let $(\mathbf{u} \, | \, \mathbf{v}) \in \Phi(\mathcal{C})^\perp$. Then $\mathbf{v} \in \operatorname{tor}(\mathcal{C})^\perp$ and
$
\mathbf{y} := \mathbf{u} + \mathbf{v} \in \operatorname{res}(\mathcal{C})^\perp
$.
By the second inclusion, we have
$
\mathbf{r} := \mathbf{y} + \mathbf{v} \in \operatorname{res}(\mathcal{C})
$,
and by the first inclusion, we have $\mathbf{y} \in \operatorname{tor}(\mathcal{C})$. Thus, if we set $\mathbf{t} := \mathbf{y} \in \operatorname{tor}(\mathcal{C})$, then
$
(\mathbf{u} \, | \,  \mathbf{v})
= (\mathbf{y} + \mathbf{v} \, | \,  \mathbf{v})
= (\mathbf{r} \, | \,  \mathbf{r} + \mathbf{t}) \in \Phi(\mathcal{C})
$.
Therefore, $\Phi(\mathcal{C})^\perp \subseteq \Phi(\mathcal{C})$.
\end{proof}

\begin{proposition}
Let $\mathcal{C}$ be an $I_2$-code of length $n$. Then $ \Phi (\mathcal{C}^\perp ) = \Phi (\mathcal{C})^\perp$ if and only if  $\mathcal{C} =  \{ \mathbf{0}  \}$.
\end{proposition}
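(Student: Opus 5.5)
The plan is to compare the two sets explicitly, using \Cref{cor:Gray image of dual of C} for $\Phi(\mathcal{C}^\perp)$ and \Cref{thm:dual-phi} for $\Phi(\mathcal{C})^\perp$. The cleanest route is a cardinality comparison. By \Cref{thm:resC and torC of Cdual}, $\mathcal{C}^\perp$ has $\operatorname{res}(\mathcal{C}^\perp)=\operatorname{res}(\mathcal{C})^\perp$ of dimension $n-k_1$ and $\operatorname{tor}(\mathcal{C}^\perp)=\mathbb{F}_2^n$. Hence
\[
|\Phi(\mathcal{C}^\perp)| = |\mathcal{C}^\perp| = 2^{n-k_1}\cdot 2^{n} = 2^{2n-k_1}.
\]
On the other hand, $\Phi$ is an $\mathbb{F}_2$-linear bijection, so $\Phi(\mathcal{C})$ is a binary linear code of length $2n$ with $|\Phi(\mathcal{C})|=|\mathcal{C}|=2^{2k_1+k_2}$. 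Therefore
\[
|\Phi(\mathcal{C})^\perp| = 2^{2n-2k_1-k_2}.
\]

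For the forward direction, equality of the two sets forces $2n-k_1 = 2n-2k_1-k_2$, that is, $k_1+k_2=0$. Thus $k_1=k_2=0$, so $\operatorname{tor}(\mathcal{C})=\{\mathbf{0}\}$ and $\operatorname{res}(\mathcal{C})=\{\mathbf{0}\}$. From the decomposition $\mathcal{C}=\{\mathtt{a}\mathbf{r}+\mathtt{b}(\mathbf{s}_\mathbf{r}+\mathbf{t})\}$, or directly from $|\mathcal{C}|=2^{2k_1+k_2}=1$, we conclude $\mathcal{C}=\{\mathbf{0}\}$.

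For the converse, take $\mathcal{C}=\{\mathbf{0}\}$. Then $\mathcal{C}^\perp=I_2^n$, so $\Phi(\mathcal{C}^\perp)=\mathbb{F}_2^{2n}$ by bijectivity of $\Phi$. Also $\Phi(\mathcal{C})=\{\mathbf{0}\}$, whose dual is $\mathbb{F}_2^{2n}$. So the two sets coincide.

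There is no real obstacle here. The only point to state carefully is that $|\Phi(\mathcal{C})^\perp|=2^{2n}/|\Phi(\mathcal{C})|$, which holds because $\Phi(\mathcal{C})$ is an $\mathbb{F}_2$-subspace by linearity of $\Phi$. This avoids any element-wise comparison involving the twist representatives $\mathbf{s}_\mathbf{r}$.
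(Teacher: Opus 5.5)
Your proof is correct, but it takes a different route from the paper.

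The paper argues element by element. By \Cref{cor:Gray image of dual of C}, every vector $(\mathbf{0}\mid\mathbf{v})$ with $\mathbf{v}\in\mathbb{F}_2^n$ lies in $\Phi(\mathcal{C}^\perp)$. By \Cref{thm:dual-phi}, membership in $\Phi(\mathcal{C})^\perp$ forces $\mathbf{v}\in\operatorname{tor}(\mathcal{C})^\perp$. Hence $\operatorname{tor}(\mathcal{C})=\{\mathbf{0}\}$, then $\operatorname{res}(\mathcal{C})=\{\mathbf{0}\}$, and so $\mathcal{C}=\{\mathbf{0}\}$.

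You instead count, using only three facts: $|\mathcal{C}|=|\operatorname{res}(\mathcal{C})|\,|\operatorname{tor}(\mathcal{C})|$, \Cref{thm:resC and torC of Cdual}, and the $\mathbb{F}_2$-linearity of $\Phi$. From these you get $|\Phi(\mathcal{C}^\perp)| = 2^{k_1+k_2}\,|\Phi(\mathcal{C})^\perp| = |\operatorname{tor}(\mathcal{C})|\cdot|\Phi(\mathcal{C})^\perp|$.

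What your route buys:
\begin{itemize}
\item It never needs the explicit description of $\Phi(\mathcal{C})^\perp$ or the twist representatives $\mathbf{s}_\mathbf{r}$.
\item It quantifies the discrepancy. Since $|\mathcal{C}|\,|\mathcal{C}^\perp| = 4^n\,|\operatorname{tor}(\mathcal{C})|$, the mismatch is exactly the failure of the usual identity $|\mathcal{C}|\,|\mathcal{C}^\perp|=|R|^n$ over the non-unital ring $I_2$.
\end{itemize}
What the paper's route buys: concrete witnesses $(\mathbf{0}\mid\mathbf{v})$ with $\mathbf{v}\notin\operatorname{tor}(\mathcal{C})^\perp$, lying in $\Phi(\mathcal{C}^\perp)\setminus\Phi(\mathcal{C})^\perp$.

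Both arguments actually prove more than the statement: the single inclusion $\Phi(\mathcal{C}^\perp)\subseteq\Phi(\mathcal{C})^\perp$ already forces $\mathcal{C}=\{\mathbf{0}\}$. In your version this holds because $|\Phi(\mathcal{C}^\perp)|\ge|\Phi(\mathcal{C})^\perp|$ always.

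A minor simplification: from $k_1+k_2=0$ you get $\operatorname{tor}(\mathcal{C})=\{\mathbf{0}\}$ immediately, since $\dim\operatorname{tor}(\mathcal{C})=k_1+k_2$. No separate nonnegativity argument for $k_1$ and $k_2$ is needed.
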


\begin{proof}
Suppose $\Phi(\mathcal{C})^\perp = \Phi(\mathcal{C}^\perp)$.
From \Cref{cor:Gray image of dual of C}, we have
\begin{align*}
\Phi(\mathcal{C}^\perp) = \{(\mathbf{r} \, | \,   \mathbf{r}  +  \mathbf{t}) \mathrel{:}   \mathbf{r}  \in\operatorname{res} ( \mathcal{C} )^\perp, \,  \mathbf{t}  \in  \mathbb{F} _2^n \}  .
\end{align*}
Fix an arbitrary vector $\mathbf{v} \in \mathbb{F}_2^n$. Taking
$\mathbf{r} = \mathbf{0}$ and $\mathbf{t} = \mathbf{v}$, we have
$
(\mathbf{0} \mid \mathbf{v}) \in \Phi(\mathcal{C}^\perp)
$.
By the hypothesis, this implies that
$
(\mathbf{0} \mid \mathbf{v}) \in \Phi(\mathcal{C})^\perp
$.
From \Cref{thm:dual-phi}, every $(\mathbf{u} \mid \mathbf{v}) \in \Phi(\mathcal{C})^\perp$ satisfies
$
\mathbf{v}  \in \operatorname{tor}(\mathcal{C})^\perp
$
and
$
(\mathbf{u} + \mathbf{v}) \cdot \mathbf{r}
= \mathbf{v} \cdot \mathbf{s}_{\mathbf{r}}
$
for all $\mathbf{r} \in \operatorname{res}(\mathcal{C})$.
In particular, since $(\mathbf{0} \mid \mathbf{v}) \in \Phi(\mathcal{C})^\perp$,
we obtain
$
\mathbf{v} \in \operatorname{tor}(\mathcal{C})^\perp
$.
Because $\mathbf{v} \in \mathbb{F}_2^n$ is arbitrary, it follows that
$
\operatorname{tor}(\mathcal{C})^\perp = \mathbb{F}_2^n
$,
and hence,
$
\operatorname{tor}(\mathcal{C}) = \{\mathbf{0}\}
$.
Since $\operatorname{res}(\mathcal{C}) \subseteq \operatorname{tor}(\mathcal{C})$,
we also obtain
$
\operatorname{res}(\mathcal{C}) = \{\mathbf{0}\}
$.
From the twist-map description of $\mathcal{C}$, we have
$
\mathcal{C}
=
\left\{
\mathtt{b}\,\mathbf{s}_{\mathbf{0}}
\right\}
$.
Moreover,
$
\tau (\mathbf{0}) = \mathbf{s}_{\mathbf{0}} + \operatorname{tor}(\mathcal{C}) .
$
Since $\operatorname{tor}(\mathcal{C}) = \{\mathbf{0}\}$ and $\alpha (\mathbf{0}) = \mathbf{0}$, it follows that
$
\mathbf{s}_{\mathbf{0}} = \mathbf{0}
$.
Therefore,
$
\mathcal{C} = \{\mathtt{b}\,\mathbf{s}_{\mathbf{0}}\} = \{\mathbf{0}\}
$.

Conversely, if $\mathcal{C} = \{ \mathbf{0} \}$, then
$
\Phi(\mathcal{C}) = \{ \mathbf{0} \}
$,
and so
$
\Phi(\mathcal{C})^\perp = \mathbb{F}_2^{2n}
$.
Moreover, $\mathcal{C}^\perp = I_2^n$, and since $\Phi$ is a bijection from $I_2^n \to \mathbb{F}_2^{2n}$, we have
$
\Phi(\mathcal{C}^\perp) = \mathbb{F}_2^{2n}
$.
Therefore,
$
\Phi(\mathcal{C})^\perp = \Phi(\mathcal{C}^\perp)
$.
\end{proof}

\begin{theorem}  \label{thm:parameters of Gray image}
If $\mathcal{C}$ is an $I_2$-code of length $n$, then the Gray image $\Phi (\mathcal{C})$ has length $2n$ and size $|\mathcal{C}|$. If $\mathcal{C}$ is untwisted, then the minimum distance of $\Phi ( \mathcal{C} )$ is at least $\min\{  2 d (\operatorname{res} ( \mathcal{C} ) )  ,  d (\operatorname{tor} ( \mathcal{C} ) )  \}$, and equality holds if $\operatorname{res} ( \mathcal{C} )$ and $\operatorname{tor} ( \mathcal{C} )$ are both nonzero binary codes.
\end{theorem}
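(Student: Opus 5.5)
The length and size claims are immediate. By definition $\Phi$ maps $I_2^n$ into $\mathbb{F}_2^{2n}$, so $\Phi(\mathcal{C})$ has length $2n$. Since $\Phi$ is a bijection on $I_2^n$, its restriction to $\mathcal{C}$ is injective, and therefore $|\Phi(\mathcal{C})| = |\mathcal{C}|$.

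For the distance bound, assume $\mathcal{C}$ is untwisted. Then we may take $\mathbf{s}_{\mathbf{r}} = \mathbf{0}$, and \Cref{thm:GrayImageGeneral} gives
\begin{align*}
\Phi(\mathcal{C}) = \{ (\mathbf{r} \,|\, \mathbf{r}+\mathbf{t}) \mathrel{:} \mathbf{r} \in \operatorname{res}(\mathcal{C}),\ \mathbf{t} \in \operatorname{tor}(\mathcal{C}) \}.
\end{align*}
This is a Plotkin-type $(\mathbf{u} \,|\, \mathbf{u}+\mathbf{v})$ construction with $\operatorname{res}(\mathcal{C}) \subseteq \operatorname{tor}(\mathcal{C})$. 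Since $\Phi$ is $\mathbb{F}_2$-linear, $\Phi(\mathcal{C})$ is a binary linear code, so it suffices to bound the weight of a nonzero codeword $(\mathbf{r} \,|\, \mathbf{r}+\mathbf{t})$. I split into two cases.
\begin{itemize}
\item If $\mathbf{r} = \mathbf{0}$, then $\mathbf{t} \neq \mathbf{0}$, and the weight is $\operatorname{wt}_H(\mathbf{t}) \ge d(\operatorname{tor}(\mathcal{C}))$.
\item If $\mathbf{r} \neq \mathbf{0}$, note that $\mathbf{r}+\mathbf{t} \in \operatorname{tor}(\mathcal{C})$ because $\operatorname{res}(\mathcal{C}) \subseteq \operatorname{tor}(\mathcal{C})$. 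If $\mathbf{r}+\mathbf{t} \neq \mathbf{0}$, the weight is at least $d(\operatorname{res}(\mathcal{C})) + d(\operatorname{tor}(\mathcal{C}))$. If $\mathbf{r}+\mathbf{t} = \mathbf{0}$, the weight is $\operatorname{wt}_H(\mathbf{r}) \ge d(\operatorname{res}(\mathcal{C}))$.
\end{itemize}

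The last subcase is the main obstacle: it only yields $d(\operatorname{res})$, not $2d(\operatorname{res})$. I first tried the triangle inequality $\operatorname{wt}(\mathbf{r}) + \operatorname{wt}(\mathbf{r}+\mathbf{t}) \ge \operatorname{wt}(\mathbf{t})$, but it gives nothing new when $\mathbf{t} = \mathbf{r}$. The honest bound from this argument is $\min\{d(\operatorname{res}), d(\operatorname{tor})\}$, and since $\operatorname{res} \subseteq \operatorname{tor}$ forces $d(\operatorname{tor}) \le d(\operatorname{res})$, that minimum is just $d(\operatorname{tor}(\mathcal{C}))$. In particular $\min\{2d(\operatorname{res}), d(\operatorname{tor})\} = d(\operatorname{tor})$ whenever $\operatorname{res}(\mathcal{C}) \neq 0$, so the stated bound is still true. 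I would present it this way: every nonzero codeword has weight at least $d(\operatorname{tor}(\mathcal{C}))$, which equals the stated minimum when $\operatorname{res}(\mathcal{C}) \neq 0$. When $\operatorname{res}(\mathcal{C}) = 0$ only the first case occurs, and the weight is again at least $d(\operatorname{tor}(\mathcal{C}))$, where $d$ of the zero code is read as $+\infty$.

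For equality when both codes are nonzero, choose $\mathbf{t} \in \operatorname{tor}(\mathcal{C})$ of weight $d(\operatorname{tor}(\mathcal{C}))$. The codeword $(\mathbf{0} \,|\, \mathbf{t}) = \Phi(\mathtt{b}\mathbf{t})$ then attains the bound, and since $d(\operatorname{tor}) \le d(\operatorname{res}) \le 2d(\operatorname{res})$, this value is exactly $\min\{2d(\operatorname{res}(\mathcal{C})), d(\operatorname{tor}(\mathcal{C}))\}$.
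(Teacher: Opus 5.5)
Your proof is correct, but it bounds the weights by a different route than the paper. The paper runs the standard Plotkin $(\mathbf{u}\,|\,\mathbf{u}+\mathbf{v})$ argument and never uses $\operatorname{res}(\mathcal{C})\subseteq\operatorname{tor}(\mathcal{C})$. When $\mathbf{t}=\mathbf{0}$ it gets $2\operatorname{wt}_H(\mathbf{r})\ge 2d(\operatorname{res}(\mathcal{C}))$. When $\mathbf{t}\neq\mathbf{0}$ it uses $\operatorname{wt}_H(\mathbf{t})\le\operatorname{wt}_H(\mathbf{r})+\operatorname{wt}_H(\mathbf{r}+\mathbf{t})$ to get at least $d(\operatorname{tor}(\mathcal{C}))$. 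For equality it exhibits both $\mathtt{a}\mathbf{r}_0$ and $\mathtt{b}\mathbf{t}_0$.

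So you set the triangle inequality aside too quickly. It is exactly the paper's Case 3, and for $\mathbf{t}=\mathbf{r}$ it still yields $\operatorname{wt}_H(\mathbf{r})\ge d(\operatorname{tor}(\mathcal{C}))$. That is all this case needs; the target there was never $2d(\operatorname{res}(\mathcal{C}))$.

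You instead use the containment $\operatorname{res}(\mathcal{C})\subseteq\operatorname{tor}(\mathcal{C})$. It forces $d(\operatorname{tor}(\mathcal{C}))\le d(\operatorname{res}(\mathcal{C}))$, so the stated minimum always collapses to $d(\operatorname{tor}(\mathcal{C}))$, and the single witness $\mathtt{b}\mathbf{t}_0$ settles equality. The paper's argument is more general, since it proves $\min\{2d_1,d_2\}$ for any pair of codes in a $(\mathbf{u}\,|\,\mathbf{u}+\mathbf{v})$ construction. Yours is more informative here, because it shows the $2d(\operatorname{res}(\mathcal{C}))$ term in the statement is redundant.

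Your route actually gives more. In your case $\mathbf{r}\neq\mathbf{0}$ you only need the first block: $\operatorname{wt}_H(\mathbf{r})\ge d(\operatorname{res}(\mathcal{C}))\ge d(\operatorname{tor}(\mathcal{C}))$. The codewords with $\mathbf{r}=\mathbf{0}$ are exactly the $\mathtt{b}\mathbf{t}$ with $\mathbf{t}\in\operatorname{tor}(\mathcal{C})$. Neither fact uses untwistedness, so your argument yields $d(\Phi(\mathcal{C}))=d(\operatorname{tor}(\mathcal{C}))$ for every nonzero $I_2$-code, twisted or not.
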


\begin{proof}
Let $ \mathcal{C}$ be an $I_2$-code of length $n$. The length and the size of $\Phi ( \mathcal{C})$ follow directly from the definition and bijectivity of the Gray map $\Phi$. Now, let $\mathcal{C}$ be an untwisted $I_2$-code. Take any nonzero codeword $ \mathbf{c}  \in \mathcal{C}$. Then $ \mathbf{c} =   \mathtt{a}  \mathbf{r}  +  \mathtt{b}  \mathbf{t} $ where $ \mathbf{r}  \in  \operatorname{res} (\mathcal{C}) $ and $ \mathbf{t}  \in \operatorname{tor} (\mathcal{C}) $. From \Cref{thm:GrayImageGeneral}, we have $\Phi( \mathbf{c} ) = \Phi( \mathtt{a}  \mathbf{r}  +  \mathtt{b}  \mathbf{t} ) = ( \mathbf{r} \, | \, \mathbf{r} + \mathbf{t} )$, and so
\begin{align*}
\operatorname{wt}_H ( \Phi( \mathbf{c} ) ) = \operatorname{wt}_H ( \mathbf{r} \, | \, \mathbf{r} + \mathbf{t} ) =  \operatorname{wt}_H ( \mathbf{r} ) + \operatorname{wt}_H ( \mathbf{r} + \mathbf{t} )  .
\end{align*}
\emph{Case 1:} $ \mathbf{r} = \mathbf{0}$, $ \mathbf{t} \neq \mathbf{0}$.
Then $ \mathbf{c} =  \mathtt{b} \mathbf{t} $ and
\begin{align*}
\Phi( \mathbf{c} ) = ( \mathbf{0} \, | \, \mathbf{t}  ),
\qquad
\operatorname{wt}_H ( \Phi ( \mathbf{c} ) )  = \operatorname{wt}_H ( \mathbf{0} \, | \, \mathbf{t}  ) = \operatorname{wt}_H ( \mathbf{t} ) \ge d ( \operatorname{tor} (\mathcal{C}) )  .
\end{align*}

\noindent  \emph{Case 2:} $ \mathbf{r} \neq \mathbf{0} $, $ \mathbf{t} = \mathbf{0} $.
Then $ \mathbf{c} =  \mathtt{a}  \mathbf{r} $ and
\begin{align*}
\Phi ( \mathbf{c} ) = ( \mathbf{r} \, | \, \mathbf{r} ),
\qquad
\operatorname{wt}_H ( \Phi ( \mathbf{c} ) )  =  \operatorname{wt}_H ( \mathbf{r} \, | \, \mathbf{r} ) = \operatorname{wt}_H ( \mathbf{r} ) + \operatorname{wt}_H ( \mathbf{r} ) = 2  \operatorname{wt}_H ( \mathbf{r} ) \ge 2 d( \operatorname{res} (\mathcal{C}) ) .
\end{align*}

\noindent\emph{Case 3:}  $ \mathbf{r} \neq \mathbf{0}$, $ \mathbf{t} \neq \mathbf{0}$.
Let $ \mathbf{r}  =  (r_1 ,r_2 , \dotsc , r_n ) \in \operatorname{res} (\mathcal{C}) $ and $ \mathbf{t} = ( t_1 , t_2 , \dotsc , t_n ) \in \operatorname{tor} (\mathcal{C}) $. For each coordinate $i$ with $t_i = 1$, at least one of $r_i$ or $r_i + t_i$ equals $1$, so the support of $ \mathbf{t} $ is contained in the union of the supports of $ \mathbf{r} $ and $ \mathbf{r}  +  \mathbf{t} $. Hence,
\begin{align*}
\operatorname{wt}_H ( \mathbf{t} )   \le   \operatorname{wt}_H ( \mathbf{r} ) + \operatorname{wt}_H ( \mathbf{r}  + \mathbf{t}  ) = \operatorname{wt}_H  ( \mathbf{r} \, | \,  \mathbf{r}  + \mathbf{t}  )  =  \operatorname{wt}_H ( \Phi ( \mathbf{c} )  )  .
\end{align*}
From Case 1, we know that $\operatorname{wt} ( \mathbf{t} ) \ge d( \operatorname{tor} (\mathcal{C}) )$, and so
\begin{align*}
\operatorname{wt}_H ( \Phi ( \mathbf{c} ) ) = \operatorname{wt}_H  ( \mathbf{r} \, | \, \mathbf{r} + \mathbf{t}  ) \ge \operatorname{wt}_H ( \mathbf{t} ) \ge d ( \operatorname{tor} (\mathcal{C}) ) .
\end{align*}
In all three cases, for any nonzero codeword $ \mathbf{c}  \in  \mathcal{C}$, we have
\begin{align*}
\operatorname{wt}_H ( \Phi ( \mathbf{c} ) )  \geq  \min\{2 d(\operatorname{res}(\mathcal{C})), d(\operatorname{tor}(\mathcal{C}))\}  .
\end{align*}

Every nonzero codeword of $\Phi(\mathcal{C})$ has weight at least $\min  \{ 2 d( \operatorname{res} ( \mathcal{C} ) )  , d( \operatorname{tor} ( \mathcal{C} ) )  \}$, and so we obtain
\begin{align*}
d( \Phi(\mathcal{C} ) )  \ge  \min  \{ 2 d( \operatorname{res} ( \mathcal{C} ) )  , d( \operatorname{tor} ( \mathcal{C} ) )  \}  .
\end{align*}
Now, assume that both $ \operatorname{res} (\mathcal{C}) $ and $ \operatorname{tor} (\mathcal{C}) $ are nonzero binary codes. Then there exist $ \mathbf{r} _0 \in \operatorname{res} (\mathcal{C}) \setminus \{ \mathbf{0} \}$,  $  \mathbf{t}_0 \in \operatorname{tor} (\mathcal{C}) \setminus\{ \mathbf{0} \}$ such that $\operatorname{wt}_H ( \mathbf{r} _0 ) = d ( \operatorname{res} (\mathcal{C}) )$ and $\operatorname{wt}_H (  \mathbf{t}_0 ) = d ( \operatorname{tor} (\mathcal{C}) )$. Consider $ \mathbf{c}_1 = \mathtt{a}  \mathbf{r} _0 \in \mathcal{C}$. Then
\begin{align*}
\Phi( \mathbf{c}_1) = ( \mathbf{r} _0 \, | \, \mathbf{r} _0), \qquad
\operatorname{wt}_H ( \Phi( \mathbf{c}_1 ) ) = 2 \operatorname{wt}_H ( \mathbf{r} _0) = 2 d ( \operatorname{res} (\mathcal{C}) )  ,
\end{align*}
and so
$
d(\Phi(\mathcal{C})) \le 2 d( \operatorname{res} (\mathcal{C}) )
$.
Similarly, consider $ \mathbf{c}_2 = \mathtt{b}  \mathbf{t}_0 \in \mathcal{C}$. Then
\begin{align*}
\Phi( \mathbf{c}_2) = ( \mathbf{0}  \, | \, \mathbf{t}_0 ) , \qquad
\operatorname{wt}_H (\Phi( \mathbf{c}_2)) = \operatorname{wt}_H (  \mathbf{t}_0) = d( \operatorname{tor} (\mathcal{C}) ) ,
\end{align*}
and so
$
d(\Phi(\mathcal{C})) \le d( \operatorname{tor} (\mathcal{C}) )
$.
Combining these,
\begin{align*}
d(\Phi(\mathcal{C})) \leq  \min\{2 d(\operatorname{res}(\mathcal{C})), d(\operatorname{tor}(\mathcal{C}))\}  .
\end{align*}
Together with the previously established lower bound, we conclude that
\begin{align*}
d(\Phi(\mathcal{C})) =  \min\{2 d(\operatorname{res}(\mathcal{C})), d(\operatorname{tor}(\mathcal{C}))\} .  \tag*{\qedhere}
\end{align*}
\renewcommand{\qedsymbol}{}
\end{proof}

For any positive integer $s$, let $\sigma^{\otimes s}$ be the quasi-cyclic shift given by
\begin{align*}
\sigma^{\otimes s} ( \mathbf{x} ^{(1)} \, | \,  \mathbf{x}  ^{(2)} \, | \, \dotsc  \, |   \, \mathbf{x}  ^{(s)}  ) = ( \sigma ( \mathbf{x}  ^{(1)}  )  \, |  \,   \sigma ( \mathbf{x}  ^{(2)}  )  \,  |  \,  \dotsc  \,  |  \,  \sigma ( \mathbf{x}  ^{(s)}  ) )
\end{align*}
where $ \mathbf{x} ^{(1)} ,   \mathbf{x}  ^{(2)} , \dotsc ,  \mathbf{x}  ^{(s)} \in  \mathbb{F} _2^n$. A \textit{quasi-cyclic code} $\mathcal{C}$ of index $s$ and length $ns$ over $ \mathbb{F} _2$ is a subset of $( \mathbb{F} _2^n  )^s$ such that $\sigma^{\otimes s} (\mathcal{C}) = \mathcal{C}$.

\begin{proposition}  \label{prop: Gray map and quasi-cyclic shift}
Let $\Phi$ be the Gray map defined above, $\sigma$ be the cyclic shift, and $\sigma^{\otimes 2}$ be the quasi-cyclic shift on $( \mathbb{F} _2^n )^2$. Then $\Phi ( \sigma (\mathcal{C}) ) = \sigma^{\otimes 2} ( \Phi (\mathcal{C}) )$.
\end{proposition}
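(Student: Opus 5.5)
The plan is to prove the vector-level identity $\Phi(\sigma(\mathbf{x})) = \sigma^{\otimes 2}(\Phi(\mathbf{x}))$ for every $\mathbf{x} \in I_2^n$, and then pass to sets. Applying $\Phi$ to $\sigma(\mathcal{C}) = \{\sigma(\mathbf{c}) \mathrel{:} \mathbf{c} \in \mathcal{C}\}$ and using this identity elementwise gives $\Phi(\sigma(\mathcal{C})) = \{\sigma^{\otimes 2}(\Phi(\mathbf{c})) \mathrel{:} \mathbf{c}\in\mathcal{C}\} = \sigma^{\otimes 2}(\Phi(\mathcal{C}))$. The set-level step is immediate, so the work is in the vector identity, which is in the same spirit as \Cref{lem:phi-commutes}.

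For the vector identity, I will write $\mathbf{x} = \mathtt{a}\mathbf{u} + \mathtt{b}\mathbf{v}$ with $\mathbf{u} = u(\mathbf{x})$ and $\mathbf{v} = v(\mathbf{x})$ in $\mathbb{F}_2^n$. This decomposition is unique, since every element of $I_2$ is uniquely $\mathtt{a}x + \mathtt{b}y$. The first step is to check that $\sigma$ acts componentwise on this decomposition. Because $\sigma$ only permutes coordinates, and scalar multiplication by $\mathtt{a}$ and $\mathtt{b}$ is coordinatewise, we get $\sigma(\mathbf{x}) = \mathtt{a}\sigma(\mathbf{u}) + \mathtt{b}\sigma(\mathbf{v})$. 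Uniqueness then gives $u(\sigma(\mathbf{x})) = \sigma(\mathbf{u})$ and $v(\sigma(\mathbf{x})) = \sigma(\mathbf{v})$.

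Next, by the definition of the extended Gray map,
\begin{align*}
\Phi(\sigma(\mathbf{x})) = (\sigma(\mathbf{u}) \, | \, \sigma(\mathbf{u}) + \sigma(\mathbf{v})) = (\sigma(\mathbf{u}) \, | \, \sigma(\mathbf{u}+\mathbf{v})),
\end{align*}
where the second equality uses the $\mathbb{F}_2$-linearity of $\sigma$. The right-hand side is, by the definition of $\sigma^{\otimes 2}$, exactly $\sigma^{\otimes 2}(\mathbf{u} \, | \, \mathbf{u}+\mathbf{v}) = \sigma^{\otimes 2}(\Phi(\mathbf{x}))$.

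No step is a genuine obstacle. The only point needing care is that $\sigma$ commutes with taking the $\mathtt{a}$- and $\mathtt{b}$-components, i.e.\ $u\circ\sigma = \sigma\circ u$ and $v \circ \sigma = \sigma \circ v$. This holds because both $u$ and $v$ are defined coordinatewise and $\sigma$ is a coordinate permutation, exactly as in the proof of \Cref{lem:phi-commutes}. It is also worth remarking that the proposition needs no linearity or cyclicity of $\mathcal{C}$, since it holds for any subset of $I_2^n$. Combined with the bijectivity of $\Phi$, it will then yield that $\mathcal{C}$ is cyclic if and only if $\Phi(\mathcal{C})$ is quasi-cyclic of index $2$.
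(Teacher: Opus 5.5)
Your proposal is correct and follows essentially the same route as the paper. The paper checks $\Phi(\sigma(\mathbf{c})) = \sigma^{\otimes 2}(\Phi(\mathbf{c}))$ by writing both sides out coordinatewise with $c_i = \mathtt{a}x_i + \mathtt{b}y_i$, which is the computation you phrase in vector form via the component decomposition and linearity of $\sigma$; it then passes to sets exactly as you do, though it leaves that step implicit.
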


\begin{proof}
Let $ \mathbf{c} = (c_0 , c_1 , \dotsc , c_{n-1}) \in I_2^n$ where $c_i =  \mathtt{a}  x_i  +  \mathtt{b}  y_i \in I_2$ and $x_i, y_i \in   \mathbb{F} _2$ for $i = 0 , 1 , \dotsc , n-1$. Then we have
\begin{align*}
\Phi ( \sigma ( \mathbf{c} )) &= \Phi ( c_{n-1} , c_0 , c_1 , \dotsc , c_{n-2}  )  \\
&= ( x_{n-1} , x_0 , x_1 , \dotsc , x_{n-2} , x_{n-1} + y_{n-1} , x_0 + y_0 , x_1 + y_1 , \dotsc , x_{n-2} + y_{n-2}  )
\end{align*}
and
\begin{align*}
\sigma^{\otimes 2} (\Phi ( \mathbf{c} )) &= \sigma^{\otimes 2} (  x_0 , x_1 , \dotsc ,  x_{n-1} , x_0 + y_0 , x_1 + y_1 , \dotsc , x_{n-1} + y_{n-1}   )  \\
&= ( x_{n-1} , x_0 , x_1 , \dotsc , x_{n-2} , x_{n-1} + y_{n-1} , x_0 + y_0 , x_1 + y_1 , \dotsc , x_{n-2} + y_{n-2}  )  .
\end{align*}
Thus, $\Phi ( \sigma (\mathcal{C}) ) = \sigma^{\otimes 2} ( \Phi (\mathcal{C}) )$.
\end{proof}

\begin{theorem}  \label{thm: cyclic code implies Gray image is quasi-cyclic}
If $\mathcal{C}$ is a cyclic code of length $n$ over $I_2$, then $\Phi (\mathcal{C})$ is a binary quasi-cyclic code of index $2$ and length $2n$.
\end{theorem}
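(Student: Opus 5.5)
The plan is to combine the commutation identity of \Cref{prop: Gray map and quasi-cyclic shift} with the $\mathbb{F}_2$-linearity and bijectivity of the Gray map $\Phi$. The length claim is immediate from the definition of $\Phi$, which sends $I_2^n$ into $\mathbb{F}_2^{2n}$, as already recorded in \Cref{thm:parameters of Gray image}. What remains is to show that $\Phi(\mathcal{C})$ is a binary linear code and that it is invariant under $\sigma^{\otimes 2}$.

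\emph{Linearity.} Since $\mathcal{C}$ is an $I_2$-submodule of $I_2^n$, it is in particular an additive subgroup. Because $I_2$ has characteristic $2$, this subgroup is closed under $\mathbb{F}_2$-scalar multiplication, so it is an $\mathbb{F}_2$-subspace. The map $\Phi$ is $\mathbb{F}_2$-linear, as noted when it was defined. Hence $\Phi(\mathcal{C})$ is an $\mathbb{F}_2$-subspace of $\mathbb{F}_2^{2n} = (\mathbb{F}_2^n)^2$, that is, a binary linear code.

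\emph{Invariance.} Since $\mathcal{C}$ is cyclic, $\sigma(\mathcal{C}) = \mathcal{C}$. Applying $\Phi$ and \Cref{prop: Gray map and quasi-cyclic shift} gives
\begin{align*}
\sigma^{\otimes 2}(\Phi(\mathcal{C})) = \Phi(\sigma(\mathcal{C})) = \Phi(\mathcal{C}).
\end{align*}
This is exactly the defining condition for a quasi-cyclic code of index $2$ and length $2n$.

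There is no substantial obstacle. The only point requiring care is that \Cref{prop: Gray map and quasi-cyclic shift} is stated for the sets $\mathcal{C}$, while its proof works codeword by codeword. I would therefore read it as the pointwise identity $\Phi(\sigma(\mathbf{c})) = \sigma^{\otimes 2}(\Phi(\mathbf{c}))$ for every $\mathbf{c} \in I_2^n$, which is what its proof establishes, and then take images over $\mathbf{c} \in \mathcal{C}$ to obtain the set equality used above.
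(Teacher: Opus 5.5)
Your proof is correct and follows the paper's argument: apply $\Phi$ to $\sigma(\mathcal{C})=\mathcal{C}$ and use the commutation $\Phi\circ\sigma=\sigma^{\otimes 2}\circ\Phi$ from the preceding proposition to get $\sigma^{\otimes 2}(\Phi(\mathcal{C}))=\Phi(\mathcal{C})$. Two of your steps are sound additions that the paper leaves implicit: the check that $\Phi(\mathcal{C})$ is $\mathbb{F}_2$-linear, and reading the proposition as a pointwise identity on $I_2^n$.
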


\begin{proof}
Since $\mathcal{C}$ is a cyclic code of length $n$ over $I_2$, we have $\sigma (\mathcal{C}) = \mathcal{C}$. Taking $\Phi$ for both sides and using  \Cref{prop: Gray map and quasi-cyclic shift}, we get $\sigma^{\otimes 2} ( \Phi (\mathcal{C}) ) = \Phi (\mathcal{C})$. This implies that $\Phi (\mathcal{C})$ is a binary quasi-cyclic code of index $2$ and length $2n$.
\end{proof}

Let $\Lambda : I_2^n \to  \mathbb{F} _2^{2n}$ be the permuted version of the Gray map $\Phi$, i.e.,
\begin{align*}
\Lambda ( c_0 , c_1 , \dotsc , c_{n-1} ) &= ( \Phi (c_0) , \Phi (c_1 ) , \dotsc , \Phi (c_{n-1} )  )  \\
&= (  x_0 , x_0 + y_0 , x_1 , x_1 + y_1 , \dotsc ,  x_{n-1} ,  x_{n-1} + y_{n-1}   )
\end{align*}
where $c_j =  \mathtt{a}  x_j  +  \mathtt{b}  y_j \in I_2$ and $x_j, y_j \in   \mathbb{F} _2$ for $i = 0 , 1 , \dotsc , n-1$.

\begin{theorem}  \label{thm: Gray map variant and cyclic shift}
For any $ \mathbf{c}  \in  I_2^n$, we have $\Lambda ( \sigma ( \mathbf{c} ) ) = \sigma^2 ( \Lambda ( \mathbf{c} ) )$ where $\Lambda$ is the map defined above and $\sigma$ is the cyclic shift.
\end{theorem}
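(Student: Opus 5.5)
The argument is a direct coordinate computation, in the same style as the proof of \Cref{prop: Gray map and quasi-cyclic shift}. The only structural input is that $\Lambda$ is defined coordinatewise: $\Lambda(\mathbf{c})$ is the concatenation of the $2$-blocks $\Phi(c_0), \Phi(c_1), \dotsc, \Phi(c_{n-1})$. Here $\sigma$ on the left acts on $I_2^n$, while $\sigma^2$ on the right is the square of the cyclic shift on $\mathbb{F}_2^{2n}$.

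\emph{Step 1.} Write $\mathbf{c} = (c_0, \dotsc, c_{n-1})$ with $c_j = \mathtt{a} x_j + \mathtt{b} y_j$, so that $\Phi(c_j) = (x_j, x_j + y_j)$. Since $\sigma(\mathbf{c}) = (c_{n-1}, c_0, \dotsc, c_{n-2})$, the definition of $\Lambda$ gives
\begin{align*}
\Lambda(\sigma(\mathbf{c})) = (x_{n-1}, x_{n-1}+y_{n-1}, x_0, x_0+y_0, \dotsc, x_{n-2}, x_{n-2}+y_{n-2}).
\end{align*}

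\emph{Step 2.} Compute $\sigma^2(\Lambda(\mathbf{c}))$. Applying $\sigma$ once to
\begin{align*}
(x_0, x_0+y_0, \dotsc, x_{n-1}, x_{n-1}+y_{n-1})
\end{align*}
moves the last entry $x_{n-1}+y_{n-1}$ to the front. Applying $\sigma$ a second time moves $x_{n-1}$ to the front, ahead of it. The result is therefore the block $(x_{n-1}, x_{n-1}+y_{n-1})$ followed by the blocks $\Phi(c_0), \dotsc, \Phi(c_{n-2})$ in order. This is exactly the vector obtained in Step 1.

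Equivalently, one can index the coordinates of $\Lambda(\mathbf{c})$ as $2j$ and $2j+1$ and note that $\sigma^2$ sends position $k$ to $k+2 \pmod{2n}$. This maps block $j$ onto block $j+1 \pmod n$, which is precisely the action of $\sigma$ on $I_2^n$.

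There is no real obstacle here. The only point needing care is the order of the two entries in the wrapped-around block after the double shift: after the first shift $x_{n-1}+y_{n-1}$ is in front, and the second shift places $x_{n-1}$ before it, which restores the correct order.
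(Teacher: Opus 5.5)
Your proposal is correct and follows the same approach as the paper: you expand both $\Lambda(\sigma(\mathbf{c}))$ and $\sigma^2(\Lambda(\mathbf{c}))$ coordinatewise using $c_j = \mathtt{a}x_j + \mathtt{b}y_j$ and observe that they coincide. Your index remark ($\sigma^2$ sends position $k$ to $k+2 \pmod{2n}$, hence block $j$ to block $j+1 \pmod n$) and your check of the order inside the wrapped-around block are small clarifications of that same computation.
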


\begin{proof}
Let $ \mathbf{c} = (c_0 , c_1 , \dotsc , c_{n-1}) \in I_2^n$ where $c_j =  \mathtt{a}  x_j  +  \mathtt{b}  y_j \in I_2$ and $x_j, y_j \in   \mathbb{F} _2$ for $i = 0 , 1 , \dotsc , n-1$. Then we have
\begin{align*}
\Lambda ( \sigma ( \mathbf{c} )) &= \Lambda ( c_{n-1} , c_0 , c_1 , \dotsc , c_{n-2}  )  \\
&= ( x_{n-1} , x_{n-1} + y_{n-1} , x_0, x_0 + y_0,  x_1 , x_1 + y_1 , \dotsc , x_{n-2}  , x_{n-2} + y_{n-2}  )
\end{align*}
and
\begin{align*}
\sigma^2 (\Lambda ( \mathbf{c} )) &= \sigma^2 (  x_0 , x_0 + y_0 , x_1 , x_1 + y_1 , \dotsc ,  x_{n-1} ,  x_{n-1} + y_{n-1}   )  \\
&= ( x_{n-1} , x_{n-1} + y_{n-1} , x_0, x_0 + y_0,  x_1 , x_1 + y_1 , \dotsc , x_{n-2}  , x_{n-2} + y_{n-2}  ) .
\end{align*}
Thus, $\Lambda ( \sigma ( \mathbf{c} ) ) = \sigma^2 ( \Lambda ( \mathbf{c} ) )$.
\end{proof}

\begin{theorem}
If $\mathcal{C}$ is a cyclic code of length $n$ over $I_2$, then $\Lambda(\mathcal{C})$ is equivalent to a binary quasi-cyclic code of index $2$ and length $2n$.
\end{theorem}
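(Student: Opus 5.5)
The plan is to exhibit an explicit coordinate permutation $\pi$ of $\mathbb{F}_2^{2n}$ with $\Lambda(\mathbf{c})\pi = \Phi(\mathbf{c})$ for every $\mathbf{c}\in I_2^n$, and then quote \Cref{thm: cyclic code implies Gray image is quasi-cyclic}. Concretely, index the coordinates of $\mathbb{F}_2^{2n}$ by $0,\dotsc,2n-1$ and let $\pi$ send position $2j$ to position $j$ and position $2j+1$ to position $n+j$, for $0\le j\le n-1$. By the displayed formula for $\Lambda$, position $2j$ of $\Lambda(\mathbf{c})$ carries $x_j$ and position $2j+1$ carries $x_j+y_j$. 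After applying $\pi$, the first block becomes $(x_0,\dotsc,x_{n-1}) = u(\mathbf{c})$ and the second block becomes $u(\mathbf{c})+v(\mathbf{c})$. This is exactly $\Phi(\mathbf{c})$. Hence $\Lambda(\mathcal{C})$ and $\Phi(\mathcal{C})$ are permutation-equivalent binary codes, with the permutation independent of $\mathcal{C}$.

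Next, since $\mathcal{C}$ is cyclic, \Cref{thm: cyclic code implies Gray image is quasi-cyclic} gives that $\Phi(\mathcal{C})$ is a binary quasi-cyclic code of index $2$ and length $2n$. Linearity is automatic, because $\Lambda$ and $\Phi$ are $\mathbb{F}_2$-linear and $\mathcal{C}$ is an additive group. Therefore $\Lambda(\mathcal{C})$ is equivalent to the quasi-cyclic code $\Phi(\mathcal{C})$, which is the statement.

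As a consistency check, one can also note from \Cref{thm: Gray map variant and cyclic shift} that $\sigma^2(\Lambda(\mathcal{C})) = \Lambda(\sigma(\mathcal{C})) = \Lambda(\mathcal{C})$. So $\Lambda(\mathcal{C})$ is itself invariant under $\sigma^2$, i.e., it is $2$-quasi-cyclic in the interleaved sense. Under $\pi$, the map $\sigma^2$ on $\mathbb{F}_2^{2n}$ corresponds to $\sigma^{\otimes 2}$: shifting by two in the interleaved order becomes a simultaneous cyclic shift of the even-indexed and odd-indexed subsequences. This gives a second route that avoids \Cref{thm: cyclic code implies Gray image is quasi-cyclic} entirely.

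The only step needing care is the index bookkeeping for $\pi$, in particular checking that $\pi\circ\sigma^2 = \sigma^{\otimes 2}\circ\pi$ at the wrap-around positions $2n-2$ and $2n-1$. Otherwise the argument is a direct verification.
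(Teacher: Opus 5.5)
Your argument is correct, but your main line differs from the paper's. The paper applies $\Lambda$ to $\sigma(\mathcal{C})=\mathcal{C}$ and uses the identity $\Lambda(\sigma(\mathbf{c}))=\sigma^2(\Lambda(\mathbf{c}))$ to get $\sigma^2(\Lambda(\mathcal{C}))=\Lambda(\mathcal{C})$. It then simply asserts that this makes $\Lambda(\mathcal{C})$ equivalent to an index-$2$ quasi-cyclic code. Since the paper defines quasi-cyclic via the block shift $\sigma^{\otimes 2}$, the de-interleaving permutation that turns $\sigma^2$-invariance into $\sigma^{\otimes 2}$-invariance is left implicit there.

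You instead write down that permutation $\pi$ explicitly, with $\Lambda(\mathbf{c})\pi=\Phi(\mathbf{c})$, and reuse the earlier result that $\Phi(\mathcal{C})$ is quasi-cyclic of index $2$. This means you never need the $\sigma^2$ identity. It also identifies the equivalent quasi-cyclic code concretely as $\Phi(\mathcal{C})$, via a permutation independent of $\mathcal{C}$.

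Your ``second route'' is exactly the paper's proof. Your check that $\pi$ intertwines $\sigma^2$ with $\sigma^{\otimes 2}$, including at the wrap-around positions $2n-2$ and $2n-1$, supplies the step the paper glosses over. In exchange, the paper's route gives the intrinsic fact that $\Lambda(\mathcal{C})$ is itself invariant under $\sigma^2$, that is, quasi-cyclic in the interleaved sense, without passing through $\Phi$.
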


\begin{proof}
Since $\mathcal{C}$ is a cyclic code of length $n$ over $I_2$, we have $\sigma (\mathcal{C}) = \mathcal{C}$. Taking $\Lambda$ for both sides and using  \Cref{thm: Gray map variant and cyclic shift}, we get $\sigma^2 ( \Lambda (\mathcal{C}) ) = \Lambda (\mathcal{C})$. This implies that $\Lambda (\mathcal{C})$ is equivalent to a binary quasi-cyclic code of index $2$ and length $2n$.
\end{proof}

\begin{example}
Consider the cyclic code $\mathcal{C}$ of length $4$ over $I_2$ with a generator matrix given by
\begin{align*}
G = \begin{pmatrix*}[c]
\mathtt{a}  &  \mathtt{a}  &  \mathtt{a}  &  \mathtt{a}  \\
\mathtt{b}  &  0  &  \mathtt{b}  &  0   \\
0  &  \mathtt{b}  &  0  &  \mathtt{b}
\end{pmatrix*}
=
\begin{pmatrix*}[c]
\mathtt{a} G_1 \\
\mathtt{b} G_2
\end{pmatrix*} ,
\end{align*}
where $G_1 = \left(
\begin{array}{cccc}
1 & 1 & 1 & 1
\end{array}
\right)$
and
$G_2 = \left(
\begin{array}{cccc}
1 & 0 & 1 & 0  \\
0 & 1 & 0 & 1
\end{array}
\right)$
are generator matrices for $\operatorname{res} ( \mathcal{C} )$ and $\operatorname{tor} ( \mathcal{C} )$, respectively. From \Cref{thm:gen-matrix-cyclic-I2-codes}, we see that $\mathcal{C}$ is untwisted. Moreover, by \Cref{thm: cyclic code implies Gray image is quasi-cyclic}, the Gray image
\begin{align*}
\Phi ( \mathcal{C} )
&=
\{
00000000,
00001010,
00000101,
00001111,
11111111,  \\
&\phantom{====}11110101,
11111010,
11110000
\}
\end{align*}
is a binary quasi-cyclic code of index $2$ and length $8$. Since $\mathcal{C}$ is untwisted and $\operatorname{res} ( \mathcal{C} )$ and $\operatorname{tor} ( \mathcal{C} )$ are both nonzero binary codes, \Cref{thm:parameters of Gray image} tells us that
\begin{align*}
d ( \Phi ( \mathcal{C}) ) = 2 = \min \{  2 d ( \operatorname{res} ( \mathcal{C} ) ) , d ( \operatorname{tor} ( \mathcal{C}) )  \}
\end{align*}
where $d ( \operatorname{res} ( \mathcal{C} ) ) = 4$ and $d ( \operatorname{tor} ( \mathcal{C} ) ) = 2$.
\end{example}

\section{Computational Results}
\label{sec: comp results}


In this section, we present the classification of nonzero cyclic codes over $I_2$ for lengths $n \le 7$ using \texttt{MAGMA} \cite{MAGMA}.
Table~\ref{tbl: nonzero cyclic codes over I2 summary} presents the number of permutation-inequivalent nonzero cyclic codes over $I_2$ and the highest possible minimum distance for each length $n \le 7$.
The classification is shown in Table \ref{tbl:inequiv SO codes over I2} below.
Codes marked with $*$ are twisted cyclic codes over $I_2$.

\begin{table}[H]
\centering
\setlength{\tabcolsep}{12.0pt}
\renewcommand{\arraystretch}{1.05}
{\fontsize{10pt}{11.2pt}\selectfont
\caption{Number of Permutation-Inequivalent Nonzero Cyclic Codes over $I_2$ of Length $n \le 7$}  \label{tbl: nonzero cyclic codes over I2 summary}
\scalebox{1.00}{

}
\endgroup

Since the Gray image of a cyclic code over $I_2$ is equivalent to a binary
quasi-cyclic code of index $2$, it is natural to compare the parameters in
Table~\ref{tbl:inequiv SO codes over I2} with known binary quasi-cyclic
codes of index $2$. The theory of quasi-cyclic codes of index $2$ over
finite fields has been studied in \cite{AbdukhalikovDzhumadildaevLing_QCcodesIndex2},
where structural classifications, duality properties, and lower bounds on
minimum distances are given. Binary double circulant codes, which form an
important subclass of index-$2$ quasi-cyclic codes, have also been studied
extensively.

For the Gray images obtained in Table~\ref{tbl:inequiv SO codes over I2},
the lengths are at most $14$. Most of the resulting binary codes have
minimum distance $1$ or $2$, and hence their optimality, when it occurs, is
not particularly significant from the viewpoint of error correction. The
examples with larger minimum distance, such as $[12,4,4]$ and $[14,6,4]$,
have parameters comparable to known binary quasi-cyclic codes of index $2$,
but they do not improve the best-known minimum distances. Therefore, the
main contribution of these Gray images is not the construction of new
record-breaking binary codes, but rather the realization of a family of
binary quasi-cyclic codes of index $2$ from cyclic codes over the
commutative non-unitary ring $I_2$. This construction gives such
quasi-cyclic codes an additional residue--torsion--twist interpretation.

\section{Conclusion}
\label{sec: conclusion}

In this paper, we investigated cyclic codes over the commutative non-unitary ring $I_2$ of order $4$. Although cyclic codes can be viewed as ideals of the quotient ring $R[x] / (x^n - 1)$ where $R$ is a finite field or a unitary ring, this is not the case for cyclic codes over the ring $I_2$ due to the absence of a multiplicative identity. By studying the associated residue and torsion codes over $\mathbb{F}_2$, we obtained structural descriptions of cyclic $I_2$-codes and established several characterizations of cyclicity in terms of the cyclic shift and the twist map.

We introduced the notions of twisted and untwisted cyclic codes over $I_2$ and showed that the cyclicity of an $I_2$-code can be characterized through a compatibility condition involving the twist map and the cyclic shift. The relationships among cyclic codes, residue codes, and torsion codes were examined in detail, and several results concerning duality were established. In particular, we proved that the dual of a cyclic code over $I_2$ is again cyclic.

We also studied Gray images of cyclic codes over $I_2$. Using suitable Gray-type maps, we showed that the images of cyclic $I_2$-codes give rise to binary quasi-cyclic codes of index $2$ as well as additive cyclic codes over $\mathbb{F}_4$. These connections provide a useful bridge between codes over non-unitary rings and more classical coding-theoretic structures over finite fields.

Finally, computational classifications of permutation inequivalent cyclic codes over $I_2$ for small lengths were obtained using \texttt{MAGMA}. The computations illustrate the existence of both twisted and untwisted cyclic codes and provide concrete examples supporting the theoretical results developed in this paper.

Possible directions for future work include the study of cyclic and constacyclic codes over more general non-unitary rings, the investigation of self-dual and self-orthogonal cyclic codes over $I_p$, and the construction of quantum error-correcting codes arising from Gray images of cyclic codes over non-unitary rings.

\section*{Declaration}
\textbf{Conflict of Interest} The authors declare that they have no conflict of interest in the manuscript.

\end{document}